\pgfplotsset{compat=1.9}
\title{\texorpdfstring{Instantaneous Sobolev Regularization \\ for Dissipative Bosonic Dynamics}{Instantaneous Sobolev Regularization for Dissipative Bosonic Dynamics}}
\date{}
\author[1,2]{Pablo Costa Rico \thanks{pablo.costa@tum.de}}
\author[3]{Paul Gondolf\thanks{paul.gondolf@uni-tuebingen.de}}
\author[4,3]{Tim M\"{o}bus\thanks{moebustim@gmail.com}}
\affil[1]{Department of Mathematics, Technical University of Munich, 80333 M\"unchen, Germany}
\affil[2]{Munich Center for Quantum Science and Technology, 80799 M\"unchen, Germany}
\affil[3]{Department of Mathematics, University of T\"ubingen, 72076 T\"ubingen, Germany}
\affil[4]{Department of Applied Mathematics and Theoretical Physics, University of Cambridge, Cambridge CB3 0WA, United Kingdom}
\begin{document}
\maketitle
\vspace*{-6ex}
\begin{abstract}
    We investigate quantum Markov semigroups on bosonic Fock space and identify a broad class of infinite-dimensional dissipative evolutions that exhibit instantaneous Sobolev-regularization. Motivated by stability problems in quantum computation, we show that for certain Lindblad operators that are polynomials of creation and annihilation operators, the resulting dynamics immediately transform any initial state into one with finite expectation in all powers of the number operator. A key application is in the bosonic cat code, where we obtain explicit estimates in the trace norm for the speed of convergence. These estimates sharpen existing perturbative bounds at both short and long times, offering new analytic tools for assessing stability and error suppression in bosonic quantum information processing. For example, we improve the strong exponential convergence of the (shifted) $2$-photon dissipation to its fixed point to the uniform topology.
\end{abstract}

\vspace*{-1ex}
\setlength{\cftbeforesecskip}{2pt}
\setlength{\cftbeforesubsecskip}{1pt}
\tableofcontents
\vspace*{\fill}~

\thispagestyle{empty}
\newpage
\addtocounter{page}{-1}

\section{Introduction}\label{sec:intro}
    Quantum dynamical semigroups provide the mathematical framework for describing the time evolution of open quantum systems under the physically natural Markov (memoryless) assumption. From operator semigroup theory, it is well known that such a quantum Markov semigroup (QMS) is uniquely characterized by its infinitesimal generator~$\mathcal{L}$, i.e., by the master equation
    \begin{equation}\label{eq:master-equation}
        \frac{\partial}{\partial t}\rho(t) = \mathcal{L}(\rho(t)), \qquad \rho(0) = \rho_0,
    \end{equation}
    for appropriate initial states~$\rho_0$. 

    Motivated by well-known examples such as the quantum harmonic oscillator ($\mathcal{L} = -i[a^\dagger a, \cdot]$), the Bose–Hubbard model ($\mathcal{L} = -i[\sum_{i,j}\lambda_{ij} a_i^\dagger a_j + u_i (a_i^\dagger)^2 a_i^2 + \mu_i a_i^\dagger a_i, \cdot]$), and two-photon dissipation ($\mathcal{L} = a^2 \cdot (a^\dagger)^2 - \tfrac{1}{2}\{(a^\dagger)^2 a^2, \cdot\}$), all expressed in terms of the bosonic annihilation $a$ and creation $a^\dagger$ operators defined on the Fock space $\cH$, a separable Hilbert space, it becomes evident that the class of unbounded generators $\mathcal{L}$ is central to physically relevant applications. As Lindblad emphasized,
    \begin{center}
        \textit{[boundedness is] a condition which is not fulfilled in many applications (We may hope that this restriction can be ultimately removed using more powerful mathematics)} \cite[p.~120]{Lindblad.1976}.
    \end{center}
    This remark originates from Lindblad’s seminal paper \cite{Lindblad.1976} on the classification of bounded generators --- equivalently, of QMSs that are uniformly continuous in time --- published concurrently with the work of Gorini, Kossakowski, and Sudarshan \cite{Gorini.1976}. Together, these results yield the celebrated Gorini–Kossakowski–Sudarshan–Lindblad (GKSL) theorem, which establishes that any bounded generator acting on the space~$\mathcal{T}_1(\cH)$ of trace-class operators on a seperable Hilbert space $\cH$ takes the form
    \begin{equation}\label{eq:GKSL}
        \mathcal{L}(\rho) = -i[H, \rho] + \sum_{j=1}^{K}\!\left(L_j \rho L_j^\dagger - \tfrac{1}{2}\{L_j^\dagger L_j, \rho\}\right)\,.
    \end{equation}

    For unbounded Hamiltonians and Lindblad operators, such as those described by polynomial functions of bosonic creation and annihilation operators on infinite-dimensional continuous-variable (CV) systems considered in this work, a direct extension of the GKSL form is highly nontrivial. This difficulty is already apparent in the pure birth process, where
    \begin{equation*}
        \mathcal{L} = (a^\dagger)^2 \cdot a^2 - \tfrac{1}{2}\{a^2 (a^\dagger)^2, \cdot\}\,,
    \end{equation*}
    fails to generate a QMS \cite[Example~3.3]{Davies.1977} due to failure in preservation of probability mass (the semigroup is not trace preserving). Consequently, one must impose additional, sufficient conditions to ensure the existence of a well-defined QMS \cite{Chebotarev.1997,Chebotarev.1998,Chebotarev.2003,Fagnola.1999,Siemon.2017,Cipriani.2000,Carbone.2003,Carlen.2017,Gondolf.2024}.

    Particularly, the works \cite{Gondolf.2024, Agredo.2021} laid the groundwork for CV systems that have become indispensable in quantum information processing.\cite{Braunstein.2005,Holevo.2001,Wolf.2007,Takeoka.2014,Pirandola.2017,Wilde.2017,Rosati.2018,Lami.2023} and quantum sensing, particularly in the development of bosonic quantum error-correcting codes such as cat codes \cite{Gottesman.2001,Mirrahimi.2014,Ofek.2016,Leghtas.2015,Azouit.2016,Rosenblum.2018,Joshi.2021,Chamberland.2022,Jain.2024}.

    In the recent work \cite{Gondolf.2024} (as well as the doctoral theses \cite{Moebus.2025thesis,Gondolf.2025thesis}) two of the present authors have introduced a sufficient condition on the operator in \eqref{eq:GKSL} to guarantee a unique solution of the initial-value problem \eqref{eq:master-equation}. This condition, we term the \emph{Sobolev stability condition}
    \begin{equation}\label{eq:assum-sobolev-stability}
        \tr\bigl[\mathcal{L}(\rho)(N + \1)^{k}\bigr] \leq \omega_{k} \tr\bigl[\rho(N + \1)^{k}\bigr]
    \end{equation}
    for certain $k \in \mathbb{N}$, $\omega_k$, and suitable states $\rho$, ensures both the well-posedness of the generated semigroup and that it is \emph{Sobolev preserving}. Specifically, the semigroup restricted to the quantum Sobolev spaces (defined in \cref{eq:sobolev-norm}) retains its semigroup properties while satisfying
    \begin{equation*}
        \tr\bigl[\rho(t)(N + \1)^{k}\bigr] \leq e^{\omega_k t}\tr\bigl[\rho_0(N + \1)^{k}\bigr]\,,
    \end{equation*}
    for all $t\geq0$. This property provides crucial a priori estimates for perturbation theory on QMSs, as employed in our prior works \cite{Moebus.2024,Moebus.2024Learning,Moebus.2025}. 
    
    However, in many physically relevant examples, a stronger inequality holds:
    \begin{equation}\label{eq:assum-improved-stability}
        \tr\bigl[\mathcal{L}(\rho)(N+\1)^{k}\bigr] \leq -c_{k}\tr\bigl[\rho(N+\1)^{k+\delta}\bigr] + \mu_{k}\,,
    \end{equation}
    for constants $c_k > 0$ and $\mu_k \geq 0$. As conjectured in \cite{Gondolf.2024}, this condition implies not only the boundedness of moments but also their improvement over time. The derivation of this enhanced stability was developed in parallel with the doctoral thesis \cite{Gondolf.2025thesis}. The present work integrates these foundational single-mode results into a broader framework, extending the analysis to multi-mode systems and rigorously establishing the consequences for perturbative and dynamical behavior.

    In more detail, we identify and analyze a broad class of unbounded GKSL generators satisfying \eqref{eq:assum-improved-stability} on the bosonic Fock space, whose associated QMSs exhibit a form of regularization, which we call \emph{instantaneous Sobolev-regularization}. That is, for any strictly positive time $t > 0$, the evolution maps arbitrary trace-class inputs to states with finite moments of all orders, with explicit quantitative control of how these moments behave as~$t \downarrow 0$. This phenomenon explains stability mechanisms in dissipative bosonic dynamics and introduces new analytic tools for quantifying error suppression in bosonic quantum information protocols.

    This instantaneous regularization is of a similar nature as the regularization property of the heat flow (see for example \cite{Kato.1995}): dissipative polynomial Lindbladians immediately generate finite moments of arbitrary order, paralleling the Sobolev smoothing of the classical heat semigroup.
    Technically, our approach builds on the generation and a priori estimate framework of \cite{Gondolf.2024}, together with its extensions in \cite{Moebus.2025thesis}. Although the single‑mode proofs are presented in the different format of the dissertation \cite{Gondolf.2025thesis}, we generalize the approach here to the multimode setting. 
    Unlike the well-understood Gaussian (quadratic) case \cite{Hudson.1984,Cipriani.2000,Carbone.2003}, our analysis also accommodates higher-order monomials in~$a$ and~$a^\dagger$.

    The paper is organized as follows. In \Cref{sec:prelim}, we introduce the mathematical preliminaries. The main technical results are then developed in \Cref{sec:sobo-regularizing-qms}, first for single-mode systems in \Cref{subsec:single-mode-regularizing} and subsequently extended to multi-mode systems in \Cref{subsec:multi-mode-regularizing}, with particular attention to locality. Perturbative applications to the bosonic cat code are discussed in \Cref{sec:single-mode-applications}, and we conclude in \Cref{sec:conclusion}. Technical proofs and auxiliary results are collected in Appendices \ref{appx:ode-comparison-lemma}--\ref{appx:technical-bounds-application}.
    
\section{Preliminaries}\label{sec:prelim}
    We begin by recalling the basic setting and assumptions used throughout this work. Let $\mathcal{H}$ denote the bosonic Fock space obtained as the closure of the linear span of the orthonormal basis $\{\ket{n}\}_{n \in \mathbb{N}_0}$ with respect to the standard inner product $\langle \cdot, \cdot \rangle$. The annihilation and creation operators $a$ and $a^\dagger$ act on the Fock basis as $a\ket{n} = \sqrt{n}\ket{n-1}$, $a^\dagger\ket{n} = \sqrt{n+1}\ket{n+1}$, and satisfy the canonical commutation relation $[a, a^\dagger] = \1$. The number operator is given by $N = a^\dagger a$. For systems with $m$ bosonic modes, we analogously define $a_i$, $a^\dagger_i$ and the local number operator $N_i = a_i^\dagger a_i$ for any mode $i \in \{1, \ldots, m\}$. We denote by $\mathcal{T}_1$ the Banach space of trace-class operators on $\mathcal{H}$ and by $\mathcal{T}_f = \mathrm{span}\{\ketbra{n}{m} : n, m \in \mathbb{N}_0\}$ the subspace of finite-rank operators in the Fock basis (see \cite{Holevo.2011,Bratteli.1981,Bratteli.1987} for details).
    
    A \emph{quantum Markov semigroup} (QMS) $(T_t)_{t \ge 0}$ is a family of completely positive, trace-preserving maps, which is additionally $C_0$-semigroup on $\mathcal{T}_1$, i.e.,
    \begin{equation}\label{eq:semigroup}
        T_t \circ T_s = T_{t+s}, \quad T_0 = \mathrm{id}, \quad \text{and} \quad \lim_{t \downarrow 0} T_t\rho = \rho
    \end{equation}
    for all $\rho \in \mathcal{T}_1$ and $t,s \ge 0$. Such a semigroup is uniquely determined by its (possibly unbounded) generator $\mathcal{L}$, defined by
    \begin{equation*}
        \mathcal{L}(X) = \lim_{t \to 0^+} \frac{1}{t}\big(T_t(X) - X\big),
    \end{equation*}
    for $X$ in its domain $\mathcal{D}(\mathcal{L}) = \{X \in \mathcal{T}_1 : t \mapsto T_t(X) \text{ is differentiable on } \mathbb{R}_{\ge 0}\}$. The generator is in both the bounded and unbounded case densely defined and \emph{closed}. Being a \emph{closed} operator means that the graph $\{(X, \mathcal{L}(X)) : X \in \mathcal{D}(\mathcal{L})\}$ is closed in $\mathcal{T}_1 \times \mathcal{T}_1$ (see \cite[Ch.~III]{Kato.1995} and \cite[Sec.~2--3]{Simon.2015-4}).

    On finite Hilbert spaces, the GKSL theorem \cite{Gorini.1976, Lindblad.1976} guarantees that $\mathcal{L}$ admits the structure presented in \Cref{eq:GKSL}. This representation motivates the definition of the unbounded operators considered in the applications of this work, given by
    \begin{equation}\label{eq:bosonic-GKLS}
        \mathcal{L}(\rho) \coloneqq -i[H, \rho] + \sum_{i = 1}^{K} \left(L_i \rho L_i^\dagger - \tfrac{1}{2}\{L_i^\dagger L_i, \rho\}\right),
    \end{equation}
    where $H$ and the jump operators $L_i$ are polynomials in the creation and annihilation operators $\{a_j, a_j^\dagger\}_{j=1}^m$ of finite total degree $d$ with the domain \(\cD(\cL) \coloneqq \cT_f\). In the multi-mode setting, we restrict to local polynomial interactions of the form $(a_i^\dagger)^{k_i} a_i^{\ell_i} (a_j^\dagger)^{k_j} a_j^{\ell_j}$, corresponding to nearest-neighbour couplings on an interaction graph $(V,E)$ with $|V| = m$ and $(i,j)\in E$. Note, however, that for such unbounded coefficients, $\mathcal{L}$ need not generate a well-defined QMS on $\mathcal{T}_1$.

    A general framework for verifying when operators of the form \eqref{eq:bosonic-GKLS} generate a QMS was developed in \cite{Gondolf.2024}. Specifically, if for all $\rho \in \mathcal{T}_f$ and integers $k \ge 0$ there exists $\omega_k > 0$ such that
    \begin{equation}\label{eq:omega-bound}
        \tr[\mathcal{L}(\rho)(N + \1)^k] \le \omega_k \tr[\rho (N + \1)^k],
    \end{equation}
    then $(\mathcal{L}, \mathcal{T}_f)$ admits a unique closure generating a \emph{$k$-Sobolev-preserving} QMS. More generally, we define $W^{k,1} \subseteq \mathcal{T}_1$ to be the weighted trace-class space
    \begin{equation*}
        W^{k,1} \coloneqq \left\{ X \in \mathcal{T}_1 : \|(N + \1)^{k/2} X (N + \1)^{k/2}\|_1 < \infty \right\},
    \end{equation*}
    equipped with the norm
    \begin{equation}\label{eq:sobolev-norm}
        \|X\|_{W^{k,1}} \coloneqq \|(N + \1)^{k/2} X (N + \1)^{k/2}\|_1.
    \end{equation}
    The notation $\|\cdot\|_{W^{k,1} \to W^{k',1}}$ denotes the corresponding operator norm.
    \begin{defi}\label{def:sobolev-preserving-QMS}
        We call a QMS $k$-Sobolev preserving if $e^{t\mathcal{L}}$ defines a semigroup on $W^{k,1}$ and
        \begin{equation*}
            \|e^{t\mathcal{L}}\|_{W^{k,1} \to W^{k,1}} \le e^{\omega_k t}, \qquad t \ge 0.
        \end{equation*}
    \end{defi}     
    Furthermore, many physically relevant unbounded models satisfy a stronger moment inequality. Namely, there exist constants $\mu_k > 0$, $c_k \ge 0$, and $\delta > 0$ such that
    \begin{equation}\label{eq:strong-bound}
        \tr[\mathcal{L}(\rho)(N + \1)^k] \le -\mu_k \tr[\rho (N + \1)^{k+\delta}] + c_k, \qquad \rho \in \mathcal{T}_f.
    \end{equation}
     As shown in \cite{Gondolf.2024}, inequality \eqref{eq:strong-bound} implies the uniform bound   
    \begin{equation}\label{eq:uniform-bound}
        \|e^{t\mathcal{L}}\|_{W^{k,1} \to W^{k,1}} \le \max\{1, c_k / \mu_k\},
    \end{equation}
    which holds independently of time. In the present work, we extend this result by showing that for generators satisfying \eqref{eq:strong-bound}, the associated semigroup exhibits \emph{instantaneous improvement of Sobolev regularity}, that is, the production of higher moments at arbitrarily small positive times. Moreover, we extend the single-mode results to multi-modes and present several applications.

    Unless otherwise stated, all operator equalities and estimates are understood on the common dense core $\mathcal{T}_f$, and extend by closure to the generated semigroup.

\addtocontents{toc}{\protect\setcounter{tocdepth}{2}}
\section{Sobolev-regularizing QMS}\label{sec:sobo-regularizing-qms}
    In this section, we present our main technical results establishing instantaneous Sobolev-regularization. We note that the fundamental derivation for the single-mode case (\cref{subsec:single-mode-regularizing}) was developed in parallel with the results that appeared in the dissertation of one of the authors, \cite{Gondolf.2025thesis}. Here, we present these arguments in a self-contained manner before generalizing them to the multi-mode setting in \cref{subsec:multi-mode-regularizing} and applying them to stability problems.

    We begin with an extension of Gr\"{o}nwall's inequality \cite{Gronwall.1919}, which serves as a key analytical tool for handling the assumption
    \begin{equation*}
        \tr[\mathcal{L}(\rho)(N + \1)^k] \le -\mu_k \tr[\rho (N + \1)^{k + \delta}] + c_k \, ,
    \end{equation*}
    through an application of Jensen’s inequality. This approach leads to our principal finding: quantum Markov semigroups satisfying the above inequality are \emph{Sobolev-regularizing}, that is,
    \begin{equation}\label{eq:sobolev-regularizing}
        \|e^{t\mathcal{L}}(\rho)\|_{W^{k,1}} \le c(t) \|\rho\|_1
    \end{equation}
    for a time-dependent constant $c(t) > 0$ and any state $\rho$. Moreover, in \Cref{subsec:multi-mode-regularizing}, we extend this result to multimode systems in a locality-preserving manner. For readability, the detailed proofs of the technical results are moved to \Cref{appx:ode-comparison-lemma}. We begin with the following auxiliary lemma.

    \begin{lem}\label{lem:gronwall-extension}
        Let $y : [0, \infty) \to [0, \infty)$ be a continuously differentiable function satisfying the differential inequality
        \begin{equation*}
            \frac{dy}{dt}(t) \le -a\, y(t)^p + b
        \end{equation*}
        for constants $a, b > 0$ and exponent $p > 1$. Then,
        \begin{equation*}
            y(t) \le \bigg(\max\{y(0) - \Big(\frac{b}{a}\Big)^{1/p}, 0\}^{1 - p} + a(p - 1)t \bigg)^{\!-\frac{1}{p - 1}} + \Big(\frac{b}{a}\Big)^{\!\frac{1}{p}} =: z(t) \, .
        \end{equation*}
    \end{lem}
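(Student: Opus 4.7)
The plan is to reduce the inequality to the simpler autonomous form $u' \le -a u^{p}$ by shifting around the equilibrium $y^{\ast} \coloneqq (b/a)^{1/p}$, which is the unique positive root of $-a y^{p} + b = 0$, and then to integrate the comparison ODE explicitly. The shape of the claimed bound already tells us the right split: the second summand $(b/a)^{1/p}$ is exactly $y^{\ast}$, while the first is the explicit flow of $v' = -a v^{p}$ started at $v(0) = (y(0)-y^{\ast})_{+}$.

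First I would record the elementary superadditivity estimate $(x+y)^{p} \ge x^{p} + y^{p}$ for $x,y \ge 0$ and $p \ge 1$, which follows from $x^{p}+y^{p} \le (x+y)^{p-1}(x+y) = (x+y)^{p}$. Applying this with $x = y^{\ast}$ and $y = y(t) - y^{\ast}$ on the set $\{y(t) > y^{\ast}\}$ yields $y(t)^{p} \ge (y(t) - y^{\ast})^{p} + (y^{\ast})^{p}$, so the hypothesis collapses to
\begin{equation*}
    \tfrac{d}{dt}\bigl(y(t) - y^{\ast}\bigr) \;\le\; -a\bigl(y(t)-y^{\ast}\bigr)^{p} \quad \text{whenever } y(t) > y^{\ast}.
\end{equation*}
The associated autonomous equation $v' = -a v^{p}$ with $v(0) > 0$ is of Bernoulli type and is solved by separation of variables through $\tfrac{d}{dt}(v^{1-p}) = (p-1)a$, giving the explicit expression $v(t) = \bigl(v(0)^{1-p} + a(p-1)t\bigr)^{-1/(p-1)}$.

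Next I would treat the two cases. If $y(0) \le y^{\ast}$, I want to show $y(t) \le y^{\ast}$ for all $t \ge 0$, which matches the claim under the convention $0^{1-p} = +\infty$ so that the first term of $z(t)$ vanishes. The argument is a barrier one: at any first crossing time $s$ where $y(s) = y^{\ast}$, for $t$ slightly larger with $y(t) > y^{\ast}$ the differential inequality forces $y'(t) < 0$, contradicting $y(t) > y(s)$. If instead $y(0) > y^{\ast}$, I would apply a standard ODE comparison lemma (for instance the one the authors announce in \Cref{appx:ode-comparison-lemma}) to conclude $y(t) - y^{\ast} \le v(t)$ for $v$ the solution above with $v(0) = y(0) - y^{\ast}$. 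This is globally valid because as soon as $y(t)$ drops to $y^{\ast}$ the inequality $y(t) - y^{\ast} \le 0 \le v(t)$ holds trivially, so I do not need to worry about the reduced inequality failing outside the barrier region. Combining the two pieces yields exactly the stated $z(t)$.

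The main obstacle I anticipate is the case analysis at the boundary $y = y^{\ast}$: the reduced inequality $u' \le -au^{p}$ only holds on $\{y > y^{\ast}\}$, and $u(t) = (y(t)-y^{\ast})_{+}$ need not be $C^{1}$ at the crossing times, so a naive application of Grönwall to $u$ on all of $[0,\infty)$ is not quite legitimate. I would handle this either by working on the open set $\{t : y(t) > y^{\ast}\}$ (a disjoint union of intervals, on each of which the comparison is classical and the endpoint values are controlled by continuity) or, slightly more cleanly, by adding a small $\varepsilon > 0$ to the equilibrium, running the comparison against $v_{\varepsilon}(t)$ solving $v' = -a v^{p}$ with $v_{\varepsilon}(0) = (y(0) - y^{\ast} - \varepsilon)_{+} + \varepsilon$, and sending $\varepsilon \downarrow 0$ at the end. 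Once this technical point is resolved, the rest is a direct plug-in.
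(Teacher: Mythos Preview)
Your proposal is correct and follows essentially the same route as the paper: shift by the equilibrium $y^{\ast}=(b/a)^{1/p}$, use the superadditivity $(x+y)^{p}\ge x^{p}+y^{p}$ to reduce to $g'\le -ag^{p}$ on the region $\{y>y^{\ast}\}$, integrate by separation of variables, and handle the sub-equilibrium region by a barrier/mean-value contradiction. The paper organizes the case split slightly differently, defining $t_{c}=\inf\{t:y(t)\le y^{\ast}\}$ and treating $[0,t_{c})$ and $(t_{c},\infty)$ separately; this sidesteps the $C^{1}$-at-the-boundary issue you flag, since on $[0,t_{c})$ one has $g>0$ throughout and the separation-of-variables integration is classical, while for $t>t_{c}$ the barrier argument gives $y(t)\le y^{\ast}$ directly. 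One small remark: your parenthetical reference to ``the comparison lemma in \Cref{appx:ode-comparison-lemma}'' is circular --- that appendix \emph{is} the proof of this very lemma, not an auxiliary tool --- but your actual argument does not rely on it.
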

    \begin{proof}
        The proof can be found in \Cref{appx-lem:gronwall-extension}.
    \end{proof}
    
    \subsection{Single-mode regularizing QMS}\label{subsec:single-mode-regularizing}
        In this section, we apply the tools developed above to establish instantaneous Sobolev-regularization. Define  
        \begin{equation*}
            y(t) = \| e^{t \cL}(\rho) \|_{W^{k,1}} \, .
        \end{equation*}
        Our goal is to apply the assumption \eqref{eq:strong-bound}. Since this assumption does not exactly fit the form required by \Cref{lem:gronwall-extension}, we first state and prove the following variant of Jensen’s inequality adapted to number-operator moments.

        \begin{lem}[Jensen’s inequality for moments]\label{lem:jensens-inequality-moments-number-operator}
            Let $\rho \in \cT_f$ be a state. Then, for all $p \ge q > 0$,
            \begin{equation}\label{eq:jensen-moment}
                \tr[\rho (N + \1)^p] \ge \big( \tr[\rho (N + \1)^q] \big)^{\frac{p}{q}} \, .
            \end{equation}
        \end{lem}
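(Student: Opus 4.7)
The plan is to reduce the operator inequality to the classical scalar Jensen inequality applied to the probability distribution obtained from the Fock-basis diagonal of $\rho$. By spectral calculus, $(N+\1)^r = \sum_{n\geq 0}(n+1)^r \ketbra{n}{n}$ for any $r \geq 0$, so that
\[
\tr\bigl[\rho(N+\1)^r\bigr] = \sum_{n\geq 0} \mu_n (n+1)^r, \qquad \mu_n := \langle n|\rho|n\rangle.
\]
Because $\rho \in \cT_f$ is a state, the coefficients $\mu_n$ are nonnegative (positivity of $\rho$), sum to $1$ (trace normalization), and vanish outside a finite set (finite rank of $\rho$). Hence $\{\mu_n\}_{n\in\mathbb{N}_0}$ defines a genuine probability measure on $\mathbb{N}_0$, with respect to which the moments $\tr[\rho(N+\1)^r]$ are just expectations of the random variable $n \mapsto (n+1)^r$.

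The second step is to invoke the scalar Jensen inequality. Since $p \geq q > 0$, the exponent $p/q \geq 1$, and the function $x \mapsto x^{p/q}$ is convex on $[0,\infty)$. Applying Jensen to this convex function with the nonnegative random variable $Y:n\mapsto (n+1)^q$ under the distribution $\{\mu_n\}$ yields
\[
\sum_n \mu_n \bigl((n+1)^q\bigr)^{p/q} \geq \Bigl(\sum_n \mu_n (n+1)^q\Bigr)^{p/q},
\]
which, after rewriting both sides via the identity above, is precisely \eqref{eq:jensen-moment}.

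I do not expect any significant obstacle: the entire content is the observation that a moment of a function of the number operator against a state reduces to the expectation of that function against the Fock-diagonal probability distribution, after which the inequality is the textbook discrete Jensen applied to $x^{p/q}$. The assumption $\rho \in \cT_f$ makes all sums finite and therefore sidesteps any convergence or integrability subtleties; if one wanted to extend the inequality to arbitrary states with finite $q$-th moment, one could afterward pass to the limit along a finite-rank approximation, but this is not needed for the statement at hand.
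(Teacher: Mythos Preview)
Your proposal is correct and follows essentially the same route as the paper: both reduce the trace to the Fock-diagonal probability distribution $p_n=\langle n|\rho|n\rangle$ and then apply the scalar Jensen inequality to the convex map $x\mapsto x^{p/q}$. The paper's version is simply terser.
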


        \begin{proof}
            Since $\rho$ has finite rank $K$ in the Fock basis, Jensen's inequality shows 
            \begin{equation*}
                \tr[\rho (N + \1)^p] \coloneqq \sum_{n=0}^{K-1} p_n (n + 1)^{q \frac{p}{q}}\geq\Bigl( \sum_{n} p_n (n + 1)^{q} \Bigr)^{\frac{p}{q}}=\big( \tr[\rho (N + \1)^q] \big)^{\frac{p}{q}} \, .
            \end{equation*}
            Here, we used that $\{p_n = \braket{n, \rho n}\}_{n=0}^{K-1}$ is a probability distribution and $x \mapsto x^{\frac{p}{q}}$ is convex for $p \ge q > 0$.
        \end{proof}

        Combining \Cref{lem:gronwall-extension} with the above Jensen-type inequality yields the following result.

        \begin{thm}\label{thm:sobolev-regularization}
            Let $(\cL, \cT_f)$ generate a Sobolev-preserving quantum Markov semigroup satisfying
            \begin{equation}\label{eq:differential-moment-inequality}
                \tr[\cL(\rho)(N + \1)^k] \le - \mu_k \, \tr[\rho (N + \1)^{k + \delta}] + c_k
            \end{equation}
            for constants $k,\mu_k, \delta, c_k > 0$. Then, for all $t > 0$,
            \begin{equation}\label{eq:sobolev-regularization-bound}
                \| e^{t \cL} \|_{1 \to W^{k,1}} \le \left( \frac{k}{\delta \mu_k t} \right)^{\frac{k}{\delta}} + \left( \frac{c_k}{\mu_k} \right)^{\frac{k}{k + \delta}} .
            \end{equation}
            We call such a QMS $k$-Sobolev-regularizing.
        \end{thm}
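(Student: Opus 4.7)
My approach is to track the scalar $y(t) := \|e^{t\mathcal{L}}(\rho)\|_{W^{k,1}}$ along the trajectory and reduce the theorem to \Cref{lem:gronwall-extension} applied to $y$. Fix a state $\rho \in \mathcal{T}_f$ and write $\rho(t) := e^{t\mathcal{L}}(\rho)$. Since $\rho(t) \ge 0$, cyclicity gives $y(t) = \tr[\rho(t)(N+\1)^k]$, which is finite for every $t \ge 0$ by the Sobolev-preserving hypothesis. Differentiating on the core and extending the hypothesis \eqref{eq:differential-moment-inequality} to the trajectory through closure of $\mathcal{L}$ yields
\[
\dot{y}(t) = \tr[\mathcal{L}(\rho(t))(N+\1)^k] \le -\mu_k\, \tr[\rho(t)(N+\1)^{k+\delta}] + c_k.
\]

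Next I invoke \Cref{lem:jensens-inequality-moments-number-operator}, which directly applies because $\rho(t)$ remains a state by trace-preservation. Choosing $p = k + \delta$ and $q = k$ gives $\tr[\rho(t)(N+\1)^{k+\delta}] \ge y(t)^{(k+\delta)/k}$, so substitution turns the moment hierarchy into the autonomous scalar inequality $\dot{y}(t) \le -\mu_k\, y(t)^{(k+\delta)/k} + c_k$. This matches \Cref{lem:gronwall-extension} with $a = \mu_k$, $b = c_k$, $p = (k+\delta)/k > 1$, and I can read off the estimate. Since $-1/(p-1) < 0$, the initial-data contribution $\max\{y(0) - (c_k/\mu_k)^{k/(k+\delta)}, 0\}^{1-p}$ sits inside a negative-power wrapper and is non-negative; discarding it can only enlarge the outer expression, yielding the clean $\rho$-independent bound
\[
y(t) \le \bigl(\mu_k (\delta/k) t\bigr)^{-k/\delta} + (c_k/\mu_k)^{k/(k+\delta)} = \left(\tfrac{k}{\delta\mu_k t}\right)^{k/\delta} + \left(\tfrac{c_k}{\mu_k}\right)^{k/(k+\delta)}.
\]
This is exactly \eqref{eq:sobolev-regularization-bound} for states in $\mathcal{T}_f$; the crucial point is the absence of any dependence on $\|\rho\|_{W^{k,1}}$, which is precisely the instantaneous regularization phenomenon.

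Finally, I extend the bound from states in $\mathcal{T}_f$ to the operator norm $\|\cdot\|_{1 \to W^{k,1}}$ by density of $\mathcal{T}_f \subset \mathcal{T}_1$, lower semicontinuity of $\|\cdot\|_{W^{k,1}}$ under $\|\cdot\|_1$-limits of positive operators (Fatou, via the spectral decomposition of $(N+\1)^k$), and a Hermitian/positive-part decomposition for non-state inputs. The main technical obstacle is justifying the moment differential inequality along $\rho(t)$ once the trajectory has left $\mathcal{T}_f$: one must approximate $\rho(t)$ by finite-rank states while simultaneously controlling both the $k$- and $(k+\delta)$-moments to pass to the limit. This is precisely where the Sobolev-preserving property of \Cref{def:sobolev-preserving-QMS} together with the a priori moment estimates of \cite{Gondolf.2024} enter essentially, allowing the closure extension of \eqref{eq:differential-moment-inequality} from the core to every point along the semigroup trajectory.
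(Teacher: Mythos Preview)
Your proof is correct and follows essentially the same route as the paper: reduce to a state $\rho\in\cT_f$, differentiate $y(t)=\tr[\rho(t)(N+\1)^k]$, apply \Cref{lem:jensens-inequality-moments-number-operator} with $p=k+\delta$, $q=k$, feed the resulting scalar inequality into \Cref{lem:gronwall-extension}, and discard the initial-data term. Your discussion of the closure/approximation issues in the last paragraph is in fact more careful than the paper's own treatment, which simply asserts differentiability and the validity of \eqref{eq:differential-moment-inequality} along the trajectory without further comment.
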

        \begin{proof}
            Since $\cT_f$ is dense in $\cT_1$ and $e^{t\cL}$ is completely positive and trace-preserving, it suffices to prove \eqref{eq:sobolev-regularization-bound} for $\rho \in \cT_f$ a state. Let $\rho(t) = e^{t\cL}(\rho)$. Differentiating with respect to $t$ and applying \eqref{eq:differential-moment-inequality} yields
            \begin{equation*}
                \frac{d}{dt} \tr[\rho(t) (N + \1)^k] = \tr[\cL(\rho(t)) (N + \1)^k]\le - \mu_k \, \tr[\rho(t) (N + \1)^{k + \delta}] + c_k \, .
            \end{equation*}
            By \Cref{lem:jensens-inequality-moments-number-operator} with $p = k + \delta$ and $q = k$, we obtain
            \begin{equation*}
                \frac{d}{dt} \tr[\rho(t) (N + \1)^k] \le - \mu_k \, \big( \tr[\rho(t) (N + \1)^k] \big)^{\frac{k + \delta}{k}} + c_k \, .
            \end{equation*}
            Setting $y(t) \coloneqq \tr[\rho(t) (N + \1)^k]$, we have
            \begin{equation*}
                y'(t) \le - \mu_k \, y(t)^{1 + \frac{\delta}{k}} + c_k \, .
            \end{equation*}
            This is of the form required by \Cref{lem:gronwall-extension} with parameters $a = \mu_k$, $b = c_k$, and $p = 1 + \frac{\delta}{k} > 1$. Applying that lemma gives
            \begin{equation*}
                y(t) \le \big(\max\{ y(0) - (\tfrac{c_k}{\mu_k})^{\frac{k}{k + \delta}}, 0 \}^{-\frac{\delta}{k}} + \tfrac{\delta}{k} \mu_k t \big)^{-\frac{k}{\delta}} + \left( \frac{c_k}{\mu_k} \right)^{\frac{k}{k + \delta}}\,.
            \end{equation*}
            Using the estimate
            \begin{equation*}
               \big( \max\{ y(0) - (\tfrac{c_k}{\mu_k})^{\frac{k}{k + \delta}}, 0 \}^{-\frac{\delta}{k}} + \tfrac{\delta}{k} \mu_k t \big)^{-\frac{k}{\delta}}\le \left( \frac{k}{\delta \mu_k t} \right)^{\frac{k}{\delta}} ,
            \end{equation*}
            we obtain the claimed bound \eqref{eq:sobolev-regularization-bound}. 
        \end{proof}

        \begin{rmk}
            The argument above relies solely on Jensen’s inequality and the structure of the Sobolev-type norm $\| \cdot \|_{W^{k,1}}$. Therefore, analogous moment-production and regularization results hold for any functional that satisfies an inequality of the form \Cref{lem:gronwall-extension}.
        \end{rmk}

        \begin{rmk}
              \cref{thm:sobolev-regularization} establishes that if \eqref{eq:differential-moment-inequality} holds for some $\delta>0$, we can study the dynamics of the whole set of quantum states, by extending the previous known case on $W^{k,1}$ to $\mathcal{T}_1$. By comparing the bounds \eqref{eq:uniform-bound} and  \eqref{eq:sobolev-regularization-bound}, we notice that there exists $t_0>0$ such that for every $t \geq t_0$
              \begin{equation}
                 \left( \frac{k}{\delta \mu_k t}\right)^{\frac{k}{\delta}}+\left( \frac{c_k}{\mu_k} \right)^{\frac{k}{k+\delta}}\leq \max\left\{ 1,\frac{c_k}{\mu_k} \right\}\, .
             \end{equation}
             In other words, this new estimate presents a better behavior for large enough $t$. 
        \end{rmk}
    \subsection{Mulit-mode: concentrated moment propagation bounds}\label{subsec:multi-mode-regularizing}
        Next, we extend the above result to multi-mode systems defined on a $D$-dimensional lattice $(V, E)$ with $|V| = m$. As a first step, we introduce a locally concentrated Sobolev space. To that end, we define the following reference operator centered at a mode $v \in V$ with a decay constant $\kappa > 0$:
        \begin{equation}\label{eq:multi-mode-sobolev-operator}
            \cW_v^{k}(x) = \sum_{i \in V} \frac{e^{-\kappa \, \mathrm{dist}(v,i)}}{Z_v} (N_i + \1)^{k/2} \, x \, (N_i + \1)^{k/2}
        \end{equation}
        for all $x \in \cT_f$ and normalization constant $Z_v=\sum_{j \in V} e^{-\kappa \, \mathrm{dist}(v,j)}$, which admits lower and upper bounds independent of $m$ (see \Caref{appx-lem:normalization-bounds}). The spectral decomposition of $N_i+\1$ on a single mode directly translates to multi-modes and the eigenvectors are given by the Fock basis. In this basis and for $a_i=Z_v^{-1}e^{-\kappa \dist(v,i)}>0$,  it is easy to see that \begin{equation*}\Vert (\cW_v^{k})^{-1}(x) \Vert_1\leq (a_1+\hdots +a_n)^{-1}\Vert x\Vert_1=\Vert x\Vert_1 \, ,\end{equation*}  for $x\in \cT_f$, due to the fact that $N_i+\1\geq \1$. By the uniform boundedness theorem  and the fact that $\cT_f$ is dense in $\cT_1$,  we conclude that $ (\cW_v^{k})^{-1}$ is a bounded linear operator. If we then define the local Sobolev space $W^{k,1}_v \coloneqq \cD(\cW_v^{k})$, equipped with the norm
        \begin{equation*}
            \|x\|_{W^{k,1}_v} = \| \cW_v^{k}(x) \|_1
        \end{equation*}
        for all $x \in \cD(\cW_v^{k})$, we obtain by \cite[Lem.~2.2]{Gondolf.2024} that $(D(\cW_v^k), \Vert  \cdot \Vert_{\cW_v^{k,1}})$ is a Banach space.

        One key motivation for the structure of the norm was that control on local moments directly imply control of global moments without any dependence on system size. For example, for any $\alpha\in\mathbb{C}^m$ (see \Caref{appx-lem:bound-coherent-state})
        \begin{equation*}
            \|\ketbra{\alpha}{\alpha}\|_{W^{k,1}_v}\leq\max_{i\in\{1,...,m\}}\tr[\ketbra{\alpha_i}{\alpha_i}(N_i+\1)^k] \leq\left(\frac{2k}{\ln(k/\max_{i\in\{1,...,m\}}|\alpha_i|^2+1)}\right)^k\,.
        \end{equation*}
        With this notion of a concentrated Sobolev space, we can define $k$-Sobolev preserving semigroups as follows:
        \begin{defi}\label{def:concentrated-sobolev-preserving-QMS}
            We call a QMS $(k,v)$-Sobolev preserving if $e^{t\mathcal{L}}$ defines a semigroup on $W^{k,1}_v$ and there are constants $\omega,\,C>0$ such that
            \begin{equation*}
                \|e^{t\mathcal{L}}\|_{W^{k,1}_v \to W^{k,1}_v} \le Ce^{\omega t}, \qquad t \ge 0\,.
            \end{equation*}
        \end{defi}   
        Note that in the above definition, we do not require any independence of the constants of the system size. This definition is only above well-definedness, the constants are improved later on.
        
        Then, analogous to the single-mode setting discussed in \Cref{subsec:single-mode-regularizing}, we assume the existence of $k>0$ and constants $c_k, \mu_k \ge 0$ such that $(\cL, \cD(\cL))$ satisfies the inequality
        \begin{equation}\label{eq:assum-improved-stability-multi-mode}
            \tr[ \cW_v^{k} ( \cL(\rho) ) ] \le - \mu_k \, \tr[ \cW_v^{k+\delta}(\rho) ] + c_k
        \end{equation}
        for all states $\rho \in \cT_f$, we set
        \begin{equation*}
            y(t) = \| e^{t \cL}(\rho) \|_{W^{k,1}_v}\,,
        \end{equation*}
        and aim to apply \Cref{lem:gronwall-extension}. However, due to the summation structure in the definition of $\|\cdot\|_{W^{k,1}_v}$, the Jensen-type inequality for number-operator moments must be modified accordingly.

        \begin{lem}[Jensen’s inequality for multi-mode moments]\label{lem:jensens-inequality-moments-number-operator-multi}
            Let $\rho \in \cT_f$ be a state. Then, for all $p \ge q > 0$,
            \begin{equation}\label{eq:multi-jensen-moment}
                \tr[\cW_v^{p} (\rho) ]\geq  \big( \tr[\cW_v^{q} (\rho) ] \big)^{\frac{p}{q}}\,.
            \end{equation}
        \end{lem}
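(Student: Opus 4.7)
The plan is to reduce the statement to two applications of Jensen's inequality: first the single-mode Jensen estimate from \Cref{lem:jensens-inequality-moments-number-operator} applied mode-by-mode, and then the classical Jensen inequality applied to the convex combination defining $\cW_v^k$.

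First I would use cyclicity of the trace to rewrite
\begin{equation*}
    \tr[\cW_v^{k}(\rho)] \;=\; \sum_{i \in V} w_i \, \tr[\rho (N_i + \1)^k], \qquad w_i \coloneqq \frac{e^{-\kappa\,\dist(v,i)}}{Z_v},
\end{equation*}
so that $\{w_i\}_{i\in V}$ is a probability distribution on $V$. Next, for each fixed mode $i$, let $\rho_i = \tr_{\neq i}[\rho] \in \cT_f$ be the reduced state on that mode, which is again a state (finite rank, trace one). Then $\tr[\rho(N_i+\1)^p] = \tr[\rho_i (N_i+\1)^p]$, and \Cref{lem:jensens-inequality-moments-number-operator} applied to each $\rho_i$ with exponents $p \ge q > 0$ yields
\begin{equation*}
    \tr[\rho(N_i+\1)^p] \;\ge\; \bigl(\tr[\rho(N_i+\1)^q]\bigr)^{p/q}.
\end{equation*}

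Setting $y_i \coloneqq \tr[\rho(N_i+\1)^q] \ge 1$ (since $N_i + \1 \ge \1$), summing the above estimate weighted by $w_i$ gives
\begin{equation*}
    \tr[\cW_v^{p}(\rho)] \;\ge\; \sum_{i\in V} w_i\, y_i^{p/q}.
\end{equation*}
Since $p/q \ge 1$, the map $x \mapsto x^{p/q}$ is convex on $[0,\infty)$, and the classical Jensen inequality for the probability measure $\{w_i\}$ then yields
\begin{equation*}
    \sum_{i\in V} w_i\, y_i^{p/q} \;\ge\; \Bigl(\sum_{i\in V} w_i\, y_i\Bigr)^{p/q} \;=\; \bigl(\tr[\cW_v^{q}(\rho)]\bigr)^{p/q},
\end{equation*}
which chains with the previous inequality to give the claim.

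There is no substantial obstacle here; the only subtlety worth flagging is that the single-mode Jensen inequality must be applied to the reduced state on each mode (so that one truly has a probability distribution in the Fock basis of that mode), rather than directly to $\rho$ itself. Once this observation is in place, the two-step argument is essentially automatic and requires no additional assumption on the interaction graph or on $\kappa$.
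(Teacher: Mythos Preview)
Your proof is correct. The paper's argument is a one-step version of yours: instead of first invoking \Cref{lem:jensens-inequality-moments-number-operator} mode-by-mode and then applying Jensen over the weights $w_i$, it observes directly that
\[
    p_{n,i} \;\coloneqq\; w_i\,\bra{n}\rho_i\ket{n}, \qquad i\in V,\ n\in\{0,\dots,K-1\},
\]
is a probability distribution over the joint index $(i,n)$ and applies the classical Jensen inequality once to $x\mapsto x^{p/q}$. Your two-step factorization is logically equivalent and has the mild advantage of reusing the single-mode lemma as a black box, while the paper's route avoids the intermediate appeal to reduced states. Neither approach needs the observation $y_i\ge 1$; Jensen already applies on all of $[0,\infty)$.
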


        \begin{proof}
            The proof follows the lines of the proof of \Cref{lem:jensens-inequality-moments-number-operator}, with slightly more general coefficients $p_n$. Details can be found in \Cref{appx-lem:jensens-inequality-moments-number-operator-multi} in \Cref{appx:multi-mode-extension}.
        \end{proof}

        As in \Cref{subsec:single-mode-regularizing}, we are now able to combine \Cref{lem:gronwall-extension} with the above Jensen-type inequality:

        \begin{thm}\label{thm:sobolev-regularization-multi-mode}
            Let $(\cL, \cT_f)$ generate a $(k,v)$-Sobolev-preserving QMS satisfying
            \begin{equation*}
                \tr[ \cW_v^{k} ( \cL(\rho) ) ] \le - \mu_k \, \tr[ \cW_v^{k+\delta}(\rho) ] + c_k
            \end{equation*}
            for constants $\mu_k, \delta, c_k, k > 0$. Then, for all $t > 0$,
            \begin{equation*}
                \| e^{t \cL} \|_{1 \to W^{k,1}_v} \le \left( \frac{k}{\delta \mu_k t} \right)^{\frac{k}{\delta}} + \left( \frac{c_k}{\mu_k} \right)^{\frac{k}{k + \delta}}\, .
            \end{equation*}
            We call such a QMS, $(k,v)$-Sobolev-regularizing.
        \end{thm}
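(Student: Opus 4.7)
The plan is to mirror the single-mode argument of \Cref{thm:sobolev-regularization}, replacing the scalar weight $(N+\1)^k$ by the concentrated weight map $\cW_v^k$ and invoking the multi-mode Jensen inequality of \Cref{lem:jensens-inequality-moments-number-operator-multi} to close the resulting differential inequality. By density of $\cT_f$ in $\cT_1$ and the trace-preserving, completely positive nature of $e^{t\cL}$, it suffices to establish the bound on states $\rho \in \cT_f$ and then extend by continuity. For such $\rho$, the evolved state $\rho(t) \coloneqq e^{t\cL}(\rho)$ is positive and each summand $(N_i+\1)^{k/2}\rho(t)(N_i+\1)^{k/2}$ is positive, so $\cW_v^k(\rho(t))\ge 0$ and hence
\begin{equation*}
    y(t) \coloneqq \|\rho(t)\|_{W^{k,1}_v} = \|\cW_v^k(\rho(t))\|_1 = \tr\!\bigl[\cW_v^k(\rho(t))\bigr].
\end{equation*}

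Differentiating $y$ and exchanging the derivative with the trace (justified on the core $\cT_f$, as in the proof of \Cref{thm:sobolev-regularization}) together with the assumption \eqref{eq:assum-improved-stability-multi-mode} yields
\begin{equation*}
    y'(t) = \tr\!\bigl[\cW_v^k(\cL(\rho(t)))\bigr] \le -\mu_k\,\tr\!\bigl[\cW_v^{k+\delta}(\rho(t))\bigr] + c_k.
\end{equation*}
Applying \Cref{lem:jensens-inequality-moments-number-operator-multi} with $p = k+\delta$ and $q = k$ closes the inequality in terms of $y$ alone:
\begin{equation*}
    y'(t) \le -\mu_k\,y(t)^{1+\delta/k} + c_k.
\end{equation*}
This is precisely the hypothesis of \Cref{lem:gronwall-extension} with $a=\mu_k$, $b=c_k$, and $p=1+\delta/k>1$. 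Invoking that lemma, followed by the elementary bound $\bigl(\max\{y(0)-(c_k/\mu_k)^{k/(k+\delta)},0\}^{-\delta/k} + (\delta/k)\mu_k t\bigr)^{-k/\delta} \le (k/(\delta\mu_k t))^{k/\delta}$ used in \Cref{thm:sobolev-regularization}, gives the claimed estimate. Since the right-hand side is independent of the initial norm $y(0)$, we obtain an operator bound from $\cT_1$ into $W^{k,1}_v$, which extends from $\cT_f$ to all of $\cT_1$ by density.

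The main technical obstacle is the rigorous justification of differentiating $y$ and exchanging derivative with trace: for $t>0$ the state $\rho(t)$ lies only in $W^{k,1}_v$ and not a priori in $\cT_f$, so $\cW_v^k(\cL(\rho(t)))$ must be interpreted carefully. The resolution is to first perform the computation on initial conditions in $\cT_f$, where $\cL$ is defined and all trace manipulations are unambiguous, and then extend by continuity using the $(k,v)$-Sobolev preserving property from \Cref{def:concentrated-sobolev-preserving-QMS} together with density of $\cT_f$ in $\cT_1$ and in $W^{k,1}_v$, exactly as in \cite{Gondolf.2024}. Note that the constants $C$ and $\omega$ from \Cref{def:concentrated-sobolev-preserving-QMS} never enter the final bound, since they are used only qualitatively to guarantee that the semigroup leaves $W^{k,1}_v$ invariant, while the quantitative control is driven entirely by the dissipative moment inequality.
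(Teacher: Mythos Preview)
Your proposal is correct and follows essentially the same route as the paper's own proof: reduce to states $\rho\in\cT_f$ by density and complete positivity, set $y(t)=\tr[\cW_v^k(\rho(t))]$, differentiate and apply the assumed moment inequality together with the multi-mode Jensen inequality (\Cref{lem:jensens-inequality-moments-number-operator-multi}) to obtain $y'(t)\le -\mu_k\,y(t)^{1+\delta/k}+c_k$, and then invoke \Cref{lem:gronwall-extension} followed by the same elementary estimate to eliminate $y(0)$. Your additional remarks on justifying the exchange of derivative and trace are more explicit than what the paper records, but the underlying argument is identical.
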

        \begin{proof}
            The proof follows the proof of \Cref{thm:sobolev-regularization} and is presented in \Cref{appx-thm:sobolev-regularization-multi-mode}.
        \end{proof}
        This result completes our abstract analysis of instantaneous Sobolev-regularization in both single and multi-mode settings. Next, we present several examples that satisfy the assumptions of the theorems and demonstrate their direct implications for perturbation theory and simulation.
        
    \subsection{Intermediate and large time perturbation bounds}\label{subsec:perturbation-theory}
        Under a slightly stronger assumption on the degree of dissipation, we can even extend known perturbation-theory bounds for both intermediate and large times to the $1 \rightarrow 1$ norm. To highlight the underlying difficulty, we begin by improving the standard integral equation using our Sobolev-regularization property. It is well known that
        \begin{equation*}
            e^{t\cL}(\rho) - e^{t(\cL+\cE)}(\rho) = t\int_0^1 e^{(1-s)t(\cL+\cE)}\cE\, e^{s t\cL}(\rho)\, ds
        \end{equation*}
        holds for generators $(\cL, \cD(\cL))$ and $(\cL+\cE, \cD(\cL+\cE))$, and for $\rho$ an element of the underlying Banach space --- such as $\cT_1$ --- provided the integral is well-defined. Even under the relative boundedness assumptions
        \begin{equation*}
            \|\cE(\rho)\|\leq c_1\|\cL(\rho)\|+c_2\|\rho\|
        \end{equation*}
        for $\rho\in\cD(\cL)\subset\cD(\cE)$, one does not expect a substantially better bound than
        \begin{equation*}
            \|e^{t\cL}(\rho) - e^{t(\cL+\cE)}(\rho)\| \leq t \bigl( c_1\|\cL(\rho)\| + c_2\|\rho\| \bigr)\,,
        \end{equation*}
        which depends on the output of an unbounded operator (see, e.g., \cite[Chap.~9]{Kato.1995}). In our setting of Sobolev-regularizing quantum Markov semigroups (QMS), we improve the above bound to the uniform topology. We state the result directly in the multi-mode setup, but emphasize that it reduces immediately to the single-mode case used in \Cref{subsec:applicatio-perturbation-theory}.

        To achieve the improved bounds, we assume that the perturbation is relatively bounded with respect to a power $k$ of $\cW^{k}_v$ with $k<\delta$.

        \begin{prop}[Intermediate-time perturbation bound]\label{prop:intermediate-time-perturbation-bound}
            Let $(\cL,\cD(\cL))$ be the generator of a $(k,v)$–Sobolev-regularizing QMS with $\delta>k$, and let $(\cE,\cD(\cE))$ be an unbounded operator on $\cT_1(\cH^{\otimes m})$ such that $(\cL+\cE,\cD(\cL+\cE))$ generates a contractive semigroup and
            \begin{equation*}
                \|\cE(\rho)\|_1\leq c_1\|\cW^{k}_v(\rho)\|+c_2\|\rho\|_1 = c_1\|\rho\|_{W^{k,1}_v}+c_2\|\rho\|_1\,.
            \end{equation*}
            Then,   
            \begin{equation*}
                \|e^{t\cL}-e^{t(\cL+\varepsilon\cE)}\|_{1\rightarrow1}\leq \varepsilon \Bigl(c_1 t^{1-\frac{k}{\delta}}\left(\frac{k}{\delta \mu_k} \right)^{\frac{k}{\delta}} + c_1 t\left( \frac{c_k}{\mu_k} \right)^{\frac{k}{k + \delta}} + c_2 t \Bigr)\,.
            \end{equation*}
        \end{prop}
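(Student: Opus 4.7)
The plan is to use the standard Duhamel/variation-of-parameters integral equation linking the two semigroups and then bound the integrand by combining trace-norm contractivity of the perturbed semigroup with the Sobolev-regularization estimate from \Cref{thm:sobolev-regularization-multi-mode}. The central point is that, although $\cE$ is unbounded, the factor $e^{s\cL}(\rho)$ appearing inside the integral lives in $W^{k,1}_v$ for every $s>0$ with a quantitative estimate, so the relative boundedness hypothesis becomes usable under the integral, and the hypothesis $\delta>k$ guarantees the resulting integrable singularity at $s=0$.

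Concretely, I would first work on a common invariant core (e.g.\ $\cT_f$) and write
\begin{equation*}
    e^{t\cL}(\rho) - e^{t(\cL+\varepsilon\cE)}(\rho) = -\varepsilon \int_0^t e^{(t-s)(\cL+\varepsilon\cE)}\,\cE\, e^{s\cL}(\rho)\,ds,
\end{equation*}
then take the trace norm inside and use $\|e^{(t-s)(\cL+\varepsilon\cE)}\|_{1\to1}\le 1$ to reduce the problem to bounding $\int_0^t \|\cE e^{s\cL}(\rho)\|_1\,ds$. The relative boundedness hypothesis applied pointwise in $s$ gives
\begin{equation*}
    \|\cE e^{s\cL}(\rho)\|_1 \le c_1\|e^{s\cL}(\rho)\|_{W^{k,1}_v} + c_2\|e^{s\cL}(\rho)\|_1,
\end{equation*}
after which trace preservation yields $\|e^{s\cL}(\rho)\|_1\le\|\rho\|_1$ and \Cref{thm:sobolev-regularization-multi-mode} yields
\begin{equation*}
    \|e^{s\cL}(\rho)\|_{W^{k,1}_v} \le \left[\left(\tfrac{k}{\delta\mu_k s}\right)^{k/\delta} + \left(\tfrac{c_k}{\mu_k}\right)^{k/(k+\delta)}\right]\|\rho\|_1.
\end{equation*}

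Substituting and integrating in $s\in[0,t]$, the two time-independent contributions to the bracket yield a factor $t$, while the singular term $s^{-k/\delta}$ integrates to a multiple of $t^{1-k/\delta}$; the integral is finite exactly because $\delta>k$. Collecting these three contributions gives the claimed estimate for $\rho\in\cT_f$ and by density (using that both sides are $1\to1$ bounded in $\rho$) it extends to all of $\cT_1$.

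The main technical obstacle is twofold. First, the Duhamel identity above must be justified rigorously even though $\cE$ is unbounded: this requires choosing $\rho$ from a dense subspace that is invariant under $e^{s\cL}$ and on which $\cE$ is defined, which under the Sobolev-regularization and relative boundedness hypotheses is naturally $\cT_f$ (and then extended by closure). Second, and more importantly, the usable strength of the bound rests entirely on the gap $\delta-k>0$: without it, the Sobolev-production rate $s^{-k/\delta}$ would not be integrable near $s=0$ and the intermediate-time estimate would fail. Once both issues are addressed, the remainder of the argument reduces to the direct integration outlined above, with any residual prefactor of the form $\delta/(\delta-k)$ absorbed into $c_1$ (or left explicit, depending on the desired form of the constant).
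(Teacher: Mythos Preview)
Your proposal is correct and follows essentially the same route as the paper: Duhamel identity with the unperturbed semigroup on the inside, contractivity of the perturbed semigroup on the outside, then the relative boundedness hypothesis combined with the regularization bound from \Cref{thm:sobolev-regularization-multi-mode}, and finally integration of $s^{-k/\delta}$ over $[0,t]$ followed by closure from a dense subspace. The only cosmetic differences are that the paper parametrizes the integral over $[0,1]$ (with a factor $t$ in front) and works on $W^{k,1}_v$ rather than $\cT_f$ before closing; your remark about the residual $\delta/(\delta-k)$ prefactor from the integration is in fact a point the paper's stated constant glosses over.
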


        \begin{proof}
            First, by the fundamental theorem of calculus, contractivity, and relative boundedness,
            \begin{equation*}
                \begin{aligned}
                    \|e^{t\cL}(\rho)-e^{t(\cL+\varepsilon\cE)}(\rho)\|_1
                    &=\Bigl\|\varepsilon t \int_{0}^1 e^{(1-s)t(\cL+\varepsilon\cE)}\cE\, e^{st\cL}(\rho)\, ds \Bigr\|_1 \\
                    &\leq \varepsilon t \int_{0}^1 \|\cE\, e^{st\cL}(\rho)\|_1\, ds\\
                    &\leq \varepsilon t \int_{0}^1 c_1\| e^{st\cL}(\rho)\|_{W^{k,1}_v}\, ds + \varepsilon t\, c_2\|\rho\|_1
                \end{aligned}
            \end{equation*}
            for $\rho\in W^{k,1}_v$. Note that
            \begin{equation*}
                s\mapsto e^{(1-s)t(\cL+\varepsilon\cE)}\cE\, e^{st\cL}(\rho)
            \end{equation*}
            is continuous, so the Bochner integral \cite[Sec.~3.5–8]{Hille.2012} is well-defined. This follows from Lemma 2.1 in \cite{Moebus.2024}, the relative boundedness assumption, and the Sobolev-regularization property --- specifically, the Sobolev-preserving property. Applying the $(k,v)$–Sobolev-regularization property yields
            \begin{equation*}
                \begin{aligned}
                    \|e^{t\cL}(\rho)-e^{t(\cL+\varepsilon\cE)}(\rho)\|_1
                    &\leq \varepsilon t \Bigl(\int_{0}^1 c_1 \left(\frac{k}{\delta \mu_k s t}\right)^{\frac{k}{\delta}} ds + c_1\left( \frac{c_k}{\mu_k} \right)^{\frac{k}{k + \delta}} + c_2\Bigr)\|\rho\|_1 \\
                    &=\varepsilon \Bigl( c_1 t^{1-\frac{k}{\delta}} \left(\frac{k}{\delta \mu_k}\right)^{\frac{k}{\delta}} + c_1 t\left(\frac{c_k}{\mu_k}\right)^{\frac{k}{k+\delta}} + c_2 t\Bigr)\|\rho\|_1\,.
                \end{aligned}
            \end{equation*}
            Since $W^{k,1}_v$ is dense in $\cT_1$ and the norms are continuous, closure completes the proof.
        \end{proof}

        One of our main applications, presented in \Cref{sec:single-mode-applications}–\ref{sec:multi-mode-applications}, is the bosonic quantum error-correction code. A key ingredient is an underlying dynamics that converges exponentially fast to an invariant subspace --- the codespace. In addition to the perturbation analysis above, a bound on the propagation of small errors for large times is crucial when the dynamics is not implemented perfectly. To this end, we extend the result of Theorem 6 in \cite{Szehr.2013} to the setting of unbounded generators and then verify the corresponding assumptions used in \Cref{sec:single-mode-applications} in the case of the bosonic cat code.

        A key ingredient is a strengthened exponential convergence assumption enabled by the Sobolev-regularization property. Assuming a $(k,v)$-Sobolev-regularizing semigroup converges exponentially fast, that is, there exist a contractive projection $\cP$ and constants $C,\gamma>0$ such that
        \begin{equation*}
            \|e^{t\cL}(\rho)-\cP(\rho)\|_{1}\leq Ce^{-\gamma t}\|\rho\|_{W^{k,1}_v}
        \end{equation*}
        for states $\rho\in W^{k,1}_v$, we can lift this directly to a $1\rightarrow 1$ trace norm bound:
        \begin{equation*}
            \|e^{t\cL}(\rho)-\cP(\rho)\|_{1}\leq \widetilde{C}\, t^{1-\frac{\widetilde{k}}{\delta}} e^{-\gamma t}\|\rho\|_1\,,
        \end{equation*}
        for some constant $\widetilde{C}$, where $\delta$ is a Sobolev-regularization parameter from \Cref{thm:sobolev-regularization-multi-mode}. Importantly, $\widetilde{k}$ need not be smaller than $\delta$. A more detailed discussion in the same spirit appears in \Cref{subsec:exp-convergence}.

        We begin with the case of a unique fixed point/ steady state.

        \begin{prop}[Perturbation bound for steady states]\label{prop:steady-state-perturbation-bound}
            Let $(\cL,\cD(\cL))$ and $(\cL+\cE,\cD(\cL+\cE))$ be generators of a $(k,v)$–Sobolev-regularizing QMS, where $(\cE,\cD(\cE))$ is an unbounded operator on $\cT_1(\cH^{\otimes m})$ satisfying
            \begin{equation*}
                \|\cE(\rho)\|_1\leq c_1\|\cW^{k}_v(\rho)\|+c_2\|\rho\|_1 = c_1\|\rho\|_{W^{k,1}_v}+c_2\|\rho\|_1\,.
            \end{equation*}
            If $e^{t\cL}$ converges to a steady state $\overline{\rho}\in\cT_1$, i.e.,
            \begin{equation*}
                \|e^{t\cL}(\rho)-\overline{\rho}\|_{1\rightarrow1} \leq \widetilde{C}\, t^{1-\frac{\widetilde{k}}{\delta}} e^{-\gamma t}\|\rho\|_1
            \end{equation*}
            for any state $\rho\in\cT_1$, $\widetilde{k}\in\N$, and $\delta>k$, then for any states $\rho,\sigma\in\cT_1$,
            \begin{equation*}
                \begin{aligned}
                    \|e^{t\cL}(\sigma)-e^{t(\cL+\varepsilon \cE)}(\rho)\|_1\leq
                    \begin{cases}
                        \|\sigma-\rho\|_1 + \varepsilon t^{1-\frac{k}{\delta}} \hat{C}_1 \|\rho\|_1, & t\leq 1,\\[4pt]
                        \widetilde{C} e^{-\gamma t}\|\sigma-\rho\|_1 + \varepsilon\, \hat{C}_2 \hat{C}_1 \|\rho\|_1, & t\geq 1,
                    \end{cases}
                \end{aligned}
            \end{equation*}
            where $\hat{C}_1 = c_1 \left(\frac{k}{\delta \mu_k} \right)^{\frac{k}{\delta}} + c_1\left( \frac{c_k}{\mu_k} \right)^{\frac{k}{k + \delta}} + c_2$ and $\hat{C}_2 = \widetilde{C}\frac{e^{-\gamma}-e^{-t\gamma}}{\gamma} + 1$.
        \end{prop}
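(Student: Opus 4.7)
The plan is to start from the triangle inequality
\begin{equation*}
    \|e^{t\cL}(\sigma) - e^{t(\cL+\varepsilon\cE)}(\rho)\|_1 \leq \|e^{t\cL}(\sigma - \rho)\|_1 + \|(e^{t\cL} - e^{t(\cL+\varepsilon\cE)})(\rho)\|_1
\end{equation*}
and to treat the two regimes separately. For $t\leq 1$, contractivity of $e^{t\cL}$ bounds the first summand by $\|\sigma-\rho\|_1$, and \Cref{prop:intermediate-time-perturbation-bound} handles the second. Since $\delta>k$ gives $1-k/\delta\in(0,1]$, the elementary inequality $t\leq t^{1-k/\delta}$ on $[0,1]$ lets one absorb the two $O(t)$ summands of \Cref{prop:intermediate-time-perturbation-bound} into the single $O(t^{1-k/\delta})$ one, producing $\varepsilon\, t^{1-k/\delta}\hat{C}_1\|\rho\|_1$ as stated.

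For $t\geq 1$, the first summand is controlled via the exponential-convergence hypothesis. The key observation is that $\sigma-\rho$ has trace zero, so if $\cP(\cdot)=\tr[\,\cdot\,]\,\overline{\rho}$ denotes the projection onto the unique fixed point, then $\cP(\sigma-\rho)=0$ and $e^{t\cL}(\sigma-\rho)=(e^{t\cL}-\cP)(\sigma-\rho)$. A Jordan-decomposition argument extends the exponential convergence from states to signed trace-class elements, yielding $\widetilde{C}\, e^{-\gamma t}\|\sigma-\rho\|_1$ once the prefactor $t^{1-\widetilde{k}/\delta}$ is absorbed into $\widetilde{C}$ on $[1,\infty)$.

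For the perturbation term on $t\geq 1$, I would apply Duhamel's formula and split the integral at $s=1$. On $[0,1]$, the telescoping identity
\begin{equation*}
    \varepsilon\int_0^1 e^{(t-s)(\cL+\varepsilon\cE)}\cE\, e^{s\cL}(\rho)\,ds = e^{(t-1)(\cL+\varepsilon\cE)}\bigl(e^{\cL+\varepsilon\cE}(\rho)-e^{\cL}(\rho)\bigr),
\end{equation*}
combined with contractivity of $e^{(t-1)(\cL+\varepsilon\cE)}$ and \Cref{prop:intermediate-time-perturbation-bound} at $t=1$, yields the $\varepsilon\hat{C}_1\|\rho\|_1$ piece accounting for the ``$+1$'' inside $\hat{C}_2$. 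On the tail $[1,t]$, I would switch to the dual Duhamel form $-\varepsilon\int_0^t e^{(t-s)\cL}\cE\, e^{s(\cL+\varepsilon\cE)}(\rho)\,ds$ and exploit that $\cE$ is trace-annihilating (being the difference of the two trace-preserving generators $\cL+\cE$ and $\cL$), which gives $\cP\circ\cE=0$ and hence $e^{(t-s)\cL}\cE=(e^{(t-s)\cL}-\cP)\cE$. The exponential-convergence assumption then produces the decay factor $e^{-\gamma(t-s)}$ directly, while relative boundedness of $\cE$ together with the Sobolev-regularization bound for $e^{s(\cL+\varepsilon\cE)}$ controls $\|\cE\, e^{s(\cL+\varepsilon\cE)}(\rho)\|_1$ by the already bounded constant $\hat{C}_1$ on $[1,t]$. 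Integrating $\int_1^t e^{-\gamma(t-s)}\,ds$ and rearranging then produces the remaining $\widetilde{C}(e^{-\gamma}-e^{-\gamma t})/\gamma$ contribution of $\hat{C}_2$.

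I expect the main obstacle to lie in this tail estimate: the naive bound on $\|\cE\, e^{s\cL}(\rho)\|_1$ does not decay exponentially, because $\cE(\overline{\rho})$ need not vanish; the exponential cancellation must instead be extracted by pairing $\cE$ with the projection $\cP$ via $\cP\circ\cE=0$, placing the decay factor onto the left semigroup rather than expecting it from the right. A secondary bookkeeping point is keeping the polynomial prefactor $(t-s)^{1-\widetilde{k}/\delta}$ from the convergence bound uniformly bounded on $[1,\infty)$, which is either built into the hypothesis (via $\widetilde{k}\geq\delta$) or absorbed into $\widetilde{C}$ for $t\geq 1$.
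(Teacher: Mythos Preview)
Your overall strategy---triangle inequality, Jordan decomposition on the traceless difference, exploiting that $\cE$ is trace-annihilating so $\cP\circ\cE=0$, and splitting the Duhamel integral at $s=1$---matches the paper's. The gap is in how you handle the Duhamel integral for $t\geq 1$: you use one representation on $[0,1]$ (with $e^{(t-s)(\cL+\varepsilon\cE)}$ on the left, via your telescoping identity) and then ``switch'' to the dual representation (with $e^{(t-s)\cL}$ on the left) on $[1,t]$. These two integral representations are each valid for the full difference $e^{t(\cL+\varepsilon\cE)}-e^{t\cL}$, but their $[0,1]$ and $[1,t]$ pieces are not interchangeable; the sum of the $[0,1]$ piece of one with the $[1,t]$ piece of the other is not the perturbation term. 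As you yourself note, the first form fails on the tail (no exponential decay for the perturbed semigroup), and you try to patch this by switching---but that switch is precisely what is not permitted.

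The paper resolves this by using a \emph{single} Duhamel form throughout, namely
\[
e^{t\cL}-e^{t(\cL+\varepsilon\cE)} = -\varepsilon\int_0^t e^{s\cL}\,\cE\, e^{(t-s)(\cL+\varepsilon\cE)}(\rho)\,ds,
\]
with the unperturbed semigroup $e^{s\cL}$ always on the left. On $[0,1]$ one uses contractivity of $e^{s\cL}$ and Sobolev regularization of the perturbed semigroup (no telescoping needed). On $[1,t]$ one applies the exponential-convergence hypothesis to $e^{s\cL}$ (via the Jordan decomposition of the traceless element $\cE\,e^{(t-s)(\cL+\varepsilon\cE)}(\rho)$) and again Sobolev regularization on the right. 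This placement also fixes your two remaining worries: the polynomial prefactor becomes $s^{1-\widetilde{k}/\delta}$ with $s\geq 1$, hence bounded in the natural regime $\widetilde{k}\geq\delta$ (whereas in your form $(t-s)$ runs down to $0$ and the prefactor need not even be integrable); and the decay integral becomes $\int_1^t e^{-\gamma s}\,ds=(e^{-\gamma}-e^{-\gamma t})/\gamma$, producing exactly the stated constant $\hat{C}_2$ rather than your $\int_1^t e^{-\gamma(t-s)}\,ds$.
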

        \begin{proof}
            We follow the proof of Theorem 6 in \cite{Szehr.2013}, combined with the newly developed techniques for $(k,v)$–Sobolev-regularizing QMS. Details are presented in \Cref{appx-prop:steady-state-perturbation-bound}.
        \end{proof}

        \begin{rmk}
            The time $t^*=1$, which separates the two regimes above, is chosen for convenience. If sharper estimates are needed, $t^*$ can be optimized. The same applies to \Cref{prop:invariant-subset-perturbation-bound}.
        \end{rmk}

        Continuous quantum error correction relies on the idea that a quantum dynamical process continuously drives the state always back into an invariant subspace --- the codespace. Since the codespace has usually dimension greater than one, we next consider QMS converging to a invariant subspace. Note that perturbation insight the codespace may lead to non-convergent rotations, which is why our error metric only considers the error outside the codespace.

        \begin{prop}[Perturbation bound for invariant subsets]\label{prop:invariant-subset-perturbation-bound}
            Let $(\cL,\cD(\cL))$ and $(\cL+\cE,\cD(\cL+\cE))$ be generators of a $(k,v)$–Sobolev-regularizing QMS, where $(\cE,\cD(\cE))$ is an unbounded operator on $\cT_1(\cH^{\otimes m})$ satisfying
            \begin{equation*}
                \|\cE(\rho)\|_1\leq c_1\|\cW^{k}_v(\rho)\|+c_2\|\rho\|_1 = c_1\|\rho\|_{W^{k,1}_v}+c_2\|\rho\|_1\,.
            \end{equation*}
            If $e^{t\cL}$ converges to an invariant subset with contractive projection $\cP$ (with $\cP^\perp=\1-\cP$), i.e.,
            \begin{equation*}
                \|e^{t\cL}(\rho)-\cP(\rho)\|_{1}\leq \widetilde{C}\, t^{1-\frac{\widetilde{k}}{\delta}} e^{-\gamma t}\|\rho\|_1
            \end{equation*}
            for some $\widetilde{k}\in\N$ and $\delta>k$, then for any states $\rho,\sigma\in\cT_1$,
            \begin{equation*}
                \begin{aligned}
                    \|\cP^\perp e^{t\cL}(\sigma)-\cP^\perp e^{t(\cL+\varepsilon \cE)}(\rho)\|_1\leq
                    \begin{cases}
                        \|\cP^\perp(\sigma-\rho)\|_1 + \varepsilon t^{1-\frac{k}{\delta}} \hat{C}_1\|\rho\|_1, & t\leq 1,\\[4pt]
                        \widetilde{C} e^{-\gamma t}\|\cP^\perp(\sigma-\rho)\|_1+ \varepsilon\, \hat{C}_2 \hat{C}_1\|\rho\|_1, & t\geq 1\,,
                    \end{cases}
                \end{aligned}
            \end{equation*}
            where  
            $\hat{C}_1 = c_1 \left(\frac{k}{\delta \mu_k} \right)^{\frac{k}{\delta}} + c_1\left( \frac{c_k}{\mu_k} \right)^{\frac{k}{k + \delta}} + c_2$ and $\hat{C}_2 = 2\widetilde{C}\frac{e^{-\gamma}-e^{-t\gamma}}{\gamma} + 1$.
        \end{prop}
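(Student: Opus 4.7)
The plan is to follow the two-regime proof strategy of \Cref{prop:steady-state-perturbation-bound} almost verbatim, with two structural modifications that accommodate the multidimensional invariant subspace. First, since $e^{t\cL}$ pointwise fixes the range of $\cP$, one has $e^{t\cL}\cP=\cP=\cP e^{t\cL}$, hence $[\cP^\perp,e^{t\cL}]=0$; this commutation replaces the identity $\cP(\sigma-\rho)=0$ that was automatic in the unique steady-state setting where $\cP(\eta)=\tr(\eta)\overline{\rho}$ annihilates differences of states. Second, moving $\cP^\perp$ across an evolution operator costs $\|\cP^\perp\|_{1\to 1}\leq 2$, which is precisely the origin of the extra factor of $2$ appearing in $\hat C_2$ compared with its analogue in \Cref{prop:steady-state-perturbation-bound}.

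Concretely, I would split by the triangle inequality into
\begin{equation*}
    \|\cP^\perp e^{t\cL}(\sigma)-\cP^\perp e^{t(\cL+\varepsilon\cE)}(\rho)\|_1\leq \|\cP^\perp e^{t\cL}(\sigma-\rho)\|_1+\|\cP^\perp(e^{t\cL}-e^{t(\cL+\varepsilon\cE)})(\rho)\|_1
\end{equation*}
and handle each summand separately. For the first, commutation together with $\cP\cP^\perp=0$ yields $\cP^\perp e^{t\cL}(\sigma-\rho)=(e^{t\cL}-\cP)\cP^\perp(\sigma-\rho)$, so the assumed exponential-convergence bound (extended to arbitrary trace-class arguments by Jordan decomposition) gives $\widetilde C e^{-\gamma t}\|\cP^\perp(\sigma-\rho)\|_1$ for $t\geq 1$, while contractivity of $e^{t\cL}$ and $\cP^\perp$ gives $\|\cP^\perp(\sigma-\rho)\|_1$ for $t\leq 1$; the polynomial prefactor $t^{1-\widetilde k/\delta}$ is absorbed into $\widetilde C$ at the cost of a marginally smaller effective decay rate.

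For the second summand I commute $\cP^\perp$ through the Duhamel identity to obtain
\begin{equation*}
    \cP^\perp(e^{t\cL}-e^{t(\cL+\varepsilon\cE)})(\rho)=-\varepsilon\int_0^t e^{(t-s)\cL}\cP^\perp\cE\, e^{s(\cL+\varepsilon\cE)}(\rho)\,ds\,,
\end{equation*}
and control the integrand by combining the relative-boundedness hypothesis with the $(k,v)$-Sobolev-regularization of $e^{s(\cL+\varepsilon\cE)}$ from \Cref{thm:sobolev-regularization-multi-mode}, which produces a short-time singularity $s^{-k/\delta}$ that is integrable precisely because $\delta>k$. In the short-time regime $t\leq 1$ the argument collapses to that of \Cref{prop:intermediate-time-perturbation-bound} and gives $\varepsilon t^{1-k/\delta}\hat C_1\|\rho\|_1$. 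In the long-time regime $t\geq 1$ I split the integral at $s=t-1$: on $[0,t-1]$ the factor $e^{(t-s)\cL}\cP^\perp$ is bounded by $2\widetilde C e^{-\gamma(t-s)}$ via the convergence hypothesis, integrating to $2\widetilde C(e^{-\gamma}-e^{-\gamma t})/\gamma$; on $[t-1,t]$ the factor $e^{(t-s)\cL}\cP^\perp$ is treated by contractivity, and the remaining short-time Sobolev singularity integrates to a contribution absorbed into $\hat C_1$. Summing these pieces reproduces the stated $\hat C_2=2\widetilde C(e^{-\gamma}-e^{-\gamma t})/\gamma+1$.

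The main analytic obstacle is the interplay between the short-time singularity $s^{-k/\delta}$ from Sobolev-regularization near $s=0$ and the long-time exponential damping $e^{-\gamma(t-s)}$ from the convergence hypothesis near $s=t$: the integration domain must be split so that neither is wasted, and the assumption $\delta>k$ is used precisely to guarantee integrability at the origin. Otherwise, the proof is structurally identical to \Cref{prop:steady-state-perturbation-bound}; the only genuinely new inputs are the commutation $[\cP^\perp,e^{t\cL}]=0$ and the factor $2$ from $\|\cP^\perp\|\leq 2$, both of which are visible as bookkeeping adjustments in the final constants.
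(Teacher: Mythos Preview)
Your proposal is correct and follows essentially the same route as the paper: the paper's proof also reduces to that of \Cref{prop:steady-state-perturbation-bound} with the single modification that the traceless-input estimate $\|e^{s\cL}(\sigma)\|_1\le C e^{-\gamma s}$ is replaced by $\|\cP^\perp e^{s\cL}(\sigma)\|_1=\tfrac12\|\cP^\perp(e^{s\cL}-\cP)(\sigma^+-\sigma^-)\|_1\le 2\widetilde C\, s^{1-\widetilde k/\delta}e^{-\gamma s}$ via Jordan decomposition and $\|\cP^\perp\|\le 2$. Your commutation argument $\cP^\perp e^{s\cL}=e^{s\cL}-\cP$ is in fact a slightly sharper way to reach the same bound (it makes the $\cP^\perp$ in front redundant), but the resulting constants and two-regime structure coincide with the paper's.
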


        \begin{proof}
            As in Theorem 5.2 of \cite{Gondolf.2024}, combined with the $(k,v)$-Sobolev-regularizing techniques. Full details appear in \Cref{appx-prop:invariant-subset-perturbation-bound}.
        \end{proof}

\addtocontents{toc}{\protect\setcounter{tocdepth}{1}}
\section{Single-mode applications}\label{sec:single-mode-applications}
    Supplementary to the above theoretical section, we now present several examples that satisfy the stated assumption. In this section, we focus on the single-mode results combined with exponentially fast convergence to the invariant subset, as well as large-time perturbation bounds in \Cref{subsec:single-mode-application}. We would like to emphasize that the obtained results are in the uniform topology and are not fixed to a specific energy level of the system, which is one of the crucial findings. In \Cref{sec:multi-mode-applications}, we then extend the applications to multi-mode systems.

    \noindent Note that many results are based on the work \cite{Gondolf.2024} and also part of \cite{Gondolf.2025thesis}.

    \subsection{Families of photon dissipation}\label{subsec:single-mode-application}
        A natural and simple example in quantum optics and superconductors is the $2$-photon dissipation $\cL[a^2]$, which satisfies for any $\rho\in\cT_f$
        \begin{equation*}
            \begin{aligned}
                \tr[\cL[a^2](\rho)(N+\1)^k]&=-\tr[\rho N(N-\1)\bigl((N+\1)^k-(N-\1)^k\bigr)]\\
                &\leq-2\tr[\rho(N+1)^{k+1}]+6\tr[\rho(N+1)^{k}]\\
                &\leq-\tr[\rho(N+1)^{k+1}]+6^{k+1}
            \end{aligned}
        \end{equation*}
        by \Cref{appx-lem:commutation-relation}-\ref{appx-lem:bounds-ccr-l-product}. Therefore, the $2$-photon dissipation is Sobolev-regularizing by \Cref{thm:sobolev-regularization}. Motivated by this simple but important example, we shortly state the bounds for the shifted $k$-photon dissipation $\cL_\ell\coloneqq\cL[a^\ell-\alpha^\ell]$ key for the bosonic error correction cat code. Due to the shift, it admits the following invariant subspace --- the code space:
        \begin{equation}\label{eq:codespace-ell-photon-diss}
            \mathcal{C}_\ell(\alpha) \coloneqq \mathrm{span}\left\{\ket{\alpha_1}\bra{\alpha_2} : \alpha_1, \alpha_2 \in \left\{\alpha e^{\frac{i2\pi j}{\ell}} : j \in \{0, \ldots, \ell-1\}\right\}\right\}.
        \end{equation}
        which is protected due to the exponential convergence analyzed in the following section \cref{subsec:exp-convergence}. A detailed analysis of the QEC and construction of the above dynamics can be found in \cite[Sec.~3]{Guillaud.2023} and \cite{Azouit.2016, Gondolf.2024}.
        
        \begin{lem}[Pure photon dissipation]\label{lem:l-diss}
			For any $k\ge 1$, $\ell\ge 2$, $\alpha \in\mathbb{C}$ and any state $\rho\in\cT_f$, 
			\begin{align*}
				\tr\big[\cL_\ell(\rho)(N+\1)^{k}\big]&\le -\frac{\ell}{2} \tr\big[\rho\,(N+\1)^{k+\ell-1}\big]+\frac{\ell}{2}\mu_k^{(\ell)}\,,
			\end{align*}
			where $\mu_k^{(\ell)}=\Delta_\ell^\nu\left(\frac{(\nu-1)^{\nu-1}}{\nu^\nu}\right)$ with $\nu=\ell+k-1$ and $\Delta_\ell=(\ell+1)\ell+4|\alpha|^\ell k\ell^{k - 1}\sqrt{\ell!}$. Therefore, $\cL_\ell$ generates a Sobolev-regularizing QMS satisfying
			\begin{equation*}
                \| e^{t \cL_\ell} \|_{1 \to W^{k,1}} \le \Bigl( \frac{2k}{(\ell-1)\ell \mu_k^{(\ell)} t} \Bigr)^{\frac{k}{\ell-1}} + \Bigl( \frac{1}{\mu_k^{(\ell)}} \Bigr)^{\frac{k}{k + \ell-1}} .
            \end{equation*}
		\end{lem}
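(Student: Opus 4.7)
The second estimate follows immediately from the moment inequality by applying \Cref{thm:sobolev-regularization} with parameters $\delta=\ell-1$, $\mu_k=\ell/2$, and $c_k=(\ell/2)\mu_k^{(\ell)}$. The real work is establishing the moment inequality, which mirrors the $\ell=2$ example shown just above the lemma. My plan is to first derive an \emph{intermediate} bound of the form
\begin{equation*}
\tr[\cL_\ell(\rho)(N+\1)^k]\leq -\ell\,\tr[\rho(N+\1)^\nu] + \tfrac{\ell}{2}\Delta_\ell\,\tr[\rho(N+\1)^{\nu-1}],\qquad \nu := k+\ell-1,
\end{equation*}
and then apply the scalar Young inequality $\Delta_\ell x^{\nu-1}\leq x^\nu + \Delta_\ell^\nu(\nu-1)^{\nu-1}/\nu^\nu$ pointwise to the Fock eigenvalues of $N+\1$ to collapse the sub-leading term into $\mu_k^{(\ell)}$; the Young constant itself arises from minimising $x^\nu-\Delta_\ell x^{\nu-1}$ at $x_*=(\nu-1)\Delta_\ell/\nu$.

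Writing $M=N+\1$ and using the ladder identities $a^\ell f(M)=f(M+\ell)a^\ell$ and $a^{\dagger\ell}f(M)=f(M-\ell)a^{\dagger\ell}$ in the GKSL form of $\cL_\ell$ with $L=a^\ell-\alpha^\ell$, the $|\alpha|^{2\ell}M^k$ contributions from $L\rho L^\dagger$ and $\tfrac{1}{2}\{L^\dagger L,\rho\}$ cancel and one obtains
\begin{equation*}
\tr[\cL_\ell(\rho)M^k] = -\tr[\rho\,a^{\dagger\ell}a^\ell D_k(M)] + \Re\bigl(\alpha^\ell\tr[\rho\,D_k(M)a^{\dagger\ell}]\bigr),\qquad D_k(M):=M^k-(M-\ell)^k.
\end{equation*}
For the diagonal piece I use $a^{\dagger\ell}a^\ell=\prod_{j=1}^\ell(M-j)$ and the polynomial inequality $\prod_{j=1}^\ell(M-j)\,D_k(M)\geq \ell M^\nu - \tfrac{\ell^2(\ell+1)}{2}M^{\nu-1}$ on $M\in\{1,2,\ldots\}$ (the LHS vanishes for $M\leq\ell$, where the RHS is non-positive since $M-\tfrac{\ell(\ell+1)}{2}\leq 0$; for $M\geq\ell+1$ it follows by expanding both sides in the elementary symmetric polynomials of $\{1,\ldots,\ell\}$ and verifying the resulting fixed-sign polynomial of degree $\nu-2$). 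This produces exactly the $\ell(\ell+1)$ summand of $\Delta_\ell$. For the cross term I expand in the Fock basis, $\tr[\rho\,D_k(M)a^{\dagger\ell}]=\sum_m \rho_{m,m+\ell}\,D_k(m+\ell+1)\sqrt{(m+\ell)!/m!}$, and combine (i) the AM-GM matrix-element split $|\rho_{m,m+\ell}|\leq\tfrac{1}{2}(\rho_{mm}+\rho_{m+\ell,m+\ell})$; (ii) the binomial bound $D_k(m+\ell+1)=\sum_{i\geq 1}\binom{k}{i}\ell^i(m+1)^{k-i}\leq k\ell^k(m+1)^{k-1}$; and (iii) the sharp rising-factorial estimate $\sqrt{(m+1)(m+2)\cdots(m+\ell)}\leq\sqrt{\ell!}(m+1)^{\ell/2}$, which follows from $\prod_{j=2}^\ell(m+j)/(m+1)\leq\ell!$ for $m\geq 0$ and is the sole source of the $\sqrt{\ell!}$ in $\Delta_\ell$. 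Because $\ell\geq 2$ forces $k-1+\ell/2\leq\nu-1$, the resulting $(m+1)^{k-1+\ell/2}$ is absorbed into $M^{\nu-1}$, and after the $\tfrac{1}{2}$ prefactor on the cross term and the $\tfrac{\ell}{2}$ outer scaling one recovers the contribution $4|\alpha|^\ell k\ell^{k-1}\sqrt{\ell!}$ to $\Delta_\ell$.

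Combining the diagonal and cross estimates gives the intermediate inequality with exactly $\Delta_\ell=\ell(\ell+1)+4|\alpha|^\ell k\ell^{k-1}\sqrt{\ell!}$, and the Young step then delivers the lemma with $\mu_k^{(\ell)}=\Delta_\ell^\nu(\nu-1)^{\nu-1}/\nu^\nu$. The main technical obstacle is the cross-term bookkeeping in step (ii): the naïve mean-value bound $D_k(m+\ell+1)\leq k\ell(m+\ell+1)^{k-1}$ inflates the $\ell$-dependence to $k\ell(1+\ell)^{k-1}$, and one must instead keep the binomial expansion and carefully absorb the combinatorial slack (together with the two-summand decomposition from step (i)) into the fixed constant $4$ in $\Delta_\ell$. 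A secondary but non-trivial step is the diagonal polynomial inequality, which is effectively a positivity statement about a polynomial of degree $\nu-2$ on the positive integers and can be dispatched by direct coefficient comparison or short induction on $\ell$.
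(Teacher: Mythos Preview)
The paper's own proof is a one-liner: it cites Lemma~4.3 of \cite{Gondolf.2024} for the moment inequality and then applies \Cref{thm:sobolev-regularization}. Your proposal correctly identifies this structure and, going further, reconstructs the argument behind the cited lemma using exactly the toolkit the paper collects in \Cref{appx:technical-bounds-application} (the CCR shift identities of \Cref{appx-lem:commutation-relation}, the $g_\ell$-bounds of \Cref{appx-lem:upper-lower-bound-g}, the product bounds of \Cref{appx-lem:bounds-ccr-l-product}, and the polynomial-max \Cref{lem:polymax} for the final Young step). The decomposition into the diagonal piece $-\tr[\rho\,a^{\dagger\ell}a^\ell D_k(M)]$ and the cross piece $\Re(\alpha^\ell\tr[\rho D_k(M)a^{\dagger\ell}])$ is correct, as is the treatment of the diagonal term and the Young step producing $\mu_k^{(\ell)}=\Delta_\ell^\nu(\nu-1)^{\nu-1}/\nu^\nu$.

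There is, however, a concrete error in your cross-term estimate. The claimed bound (ii),
\[
D_k(m+\ell+1)=\sum_{i\ge 1}\binom{k}{i}\ell^i(m+1)^{k-i}\le k\ell^k(m+1)^{k-1},
\]
is false: take $m=0$, $\ell=2$, $k=3$, where the left side is $3^3-1=26$ and the right side is $3\cdot 8=24$. More generally, the crude majorization $\sum_{i\ge1}\binom{k}{i}\ell^i(m+1)^{k-i}\le((1+\ell)^k-1)(m+1)^{k-1}$ gives a constant $(1+\ell)^k-1$ that for fixed $\ell$ eventually exceeds any multiple of $k\ell^{k}$, so this route cannot yield the specific coefficient $4k\ell^{k-1}\sqrt{\ell!}$ uniformly in $k$. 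The fix is to handle the two summands from the AM--GM split \emph{asymmetrically}: for the $\rho_{m+\ell,m+\ell}$ part one should bound $D_k(m+\ell+1)\le \tfrac{k\ell}{2}(m+\ell+1)^{k-1}$ (the upper bound in \Cref{appx-lem:upper-lower-bound-g}) and $\sqrt{(m+\ell)!/m!}\le (m+\ell+1)^{\ell/2}$ directly in the shifted variable, while for the $\rho_{mm}$ part one can afford the extra $\sqrt{\ell!}$ from \Cref{appx-lem:bounds-ccr-l-product} together with the monotonicity in \Cref{appx-lem:monotonicity-g}. This is the bookkeeping behind the cited Lemma~4.3; your sketch has the right architecture but the single displayed inequality (ii) does not hold and must be replaced by this two-sided treatment.
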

        \begin{proof}
            The proof is a direct application of \Cref{thm:sobolev-regularization} to Lemma 4.3 in \cite{Gondolf.2024}.
        \end{proof}
        In the context of the bosonic error correction cat code, these models normally come with a Hamiltonian under certain degree assumption. The Hamiltonian is of the following structure 
        \begin{equation}\label{eq:single-mode-H}
			H=\sum_{\substack{i \le j\\i+j \le d_H}}\lambda_{i,j}a^i(a^\dagger)^j + \overline{\lambda_{i,j}} a^j (a^\dagger)^i \, . 
		\end{equation}
        for coefficients $\lambda_i$, $i\in\{1,...,m\}$. With the assumption that $d_H \leq 2(\ell-1)$, we achieve the following Sobolev-regularization result.
        \begin{lem}\label{lem:l-diss-hamiltonian}
			Let $k\ge 1$, $\ell\ge 2$, $\alpha \in\mathbb{C}$, and $H$ the Hamiltonian (\ref{eq:single-mode-H}) with $d_H\leq 2(\ell-1)$. Then, for all states $\rho\in\cT_f$
			\begin{equation}
				\begin{aligned}
					\tr[(-i[H,\cdot] + \cL_\ell)(\rho)(N+\1)^{k}] &\leq-\frac{\ell}{2}\,\tr[\rho(N+\1)^{\ell+k-1}]+\frac{\ell}{2}\mu_k^{(\ell)}\,.\label{eqdiffLH}
				\end{aligned}
			\end{equation}
			for $\mu_k,\nu\geq1$ defined by 
			\begin{equation*}
				\mu_k^{(\ell)}=c^\nu\left(\frac{(\nu-1)^{\nu-1}}{\nu^\nu}\right)\quad\text{with}\quad c={(\ell+1)\ell}+4|\alpha|^{\ell}k\ell^{k-1}\sqrt{\ell!}+\Lambda(2\ell)^{k}\sqrt{(2\ell)!}\,,\quad\nu=\ell+k-1\,.
			\end{equation*}
			Therefore, $-i[H,\cdot] + \cL_\ell$ generates a Sobolev-regularizing QMS, which satisfies
			\begin{equation*}
                \| e^{t \cL_\ell} \|_{1 \to W^{k,1}} \le \Bigl( \frac{2k}{(\ell-1)\ell \mu_k^{(\ell)} t} \Bigr)^{\frac{k}{\ell-1}} + \Bigl( \frac{1}{\mu_k^{(\ell)}} \Bigr)^{\frac{k}{k + \ell-1}} .
            \end{equation*}
		\end{lem}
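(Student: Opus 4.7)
My plan is to reduce the claim to \Cref{lem:l-diss} by showing that the Hamiltonian commutator contributes only terms of strictly lower effective order than $(N+\1)^{k+\ell-1}$, so that the dissipative coercivity from the pure $\ell$-photon part absorbs them. Concretely, I would start from
\begin{equation*}
    \tr\bigl[(-i[H,\rho]+\cL_\ell(\rho))(N+\1)^k\bigr] = -i\,\tr\bigl[\rho\,[(N+\1)^k,H]\bigr] + \tr\bigl[\cL_\ell(\rho)(N+\1)^k\bigr],
\end{equation*}
applying \Cref{lem:l-diss} directly to the second summand, so that only the commutator $-i\tr[\rho[(N+\1)^k,H]]$ has to be handled.

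The next step is to analyse the commutator monomial by monomial. Using $[N,a]=-a$ and $[N,a^\dagger]=a^\dagger$, one obtains $[N,a^i(a^\dagger)^j] = (j-i)\,a^i(a^\dagger)^j$, and hence
\begin{equation*}
    [(N+\1)^k, a^i(a^\dagger)^j] = a^i(a^\dagger)^j \bigl((N+\1+j-i)^k - (N+\1)^k\bigr),
\end{equation*}
where the polynomial in brackets has degree $k-1$ in $N$ with explicit integer shifts. Combined with the conjugate term $\overline{\lambda_{i,j}}a^j(a^\dagger)^i$ in $H$, taking the trace against $\rho$ produces linear combinations of quantities of the form $\tr[\rho\,a^i(a^\dagger)^j\,P_{k-1}(N)]$ with $i+j\le d_H\le 2(\ell-1)$. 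I would then bound each such trace via a Cauchy--Schwarz step in $\rho^{1/2}$,
\begin{equation*}
    \bigl|\tr[\rho\,a^i(a^\dagger)^j\,P_{k-1}(N)]\bigr| \leq \sqrt{\tr[\rho\,(a^\dagger)^j a^i a^i (a^\dagger)^j]}\;\sqrt{\tr[\rho\,P_{k-1}(N)^2]},
\end{equation*}
and then control the first factor by the standard number-operator estimate $\|(a^\dagger)^{j}a^{i}\,\psi\|^{2}\le \langle\psi,(N+\1)^{i+j}\psi\rangle\cdot i!\,$(and analogous) from \Cref{appx-lem:bounds-ccr-l-product}, yielding a bound of the shape $(2\ell)^k\sqrt{(2\ell)!}\,\tr[\rho(N+\1)^{k+\ell-2}]$ once we use $i+j\le 2(\ell-1)$.

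This is the critical step: because the commutator costs one power of $N$ and Cauchy--Schwarz effectively averages $a^i(a^\dagger)^j$ into $(N+\1)^{(i+j)/2}$, the worst effective order is $k-1+\tfrac{d_H}{2}\le k+\ell-2$, strictly below $k+\ell-1$. I would then apply Young's inequality $x^{k+\ell-2}\le \varepsilon x^{k+\ell-1}+C_\varepsilon$ (with $\varepsilon$ chosen so the resulting coefficient stays below $\tfrac{\ell}{2}$) to absorb the Hamiltonian contribution into the dissipative term, at the cost of increasing the additive constant from $\tfrac{\ell}{2}\mu_k^{(\ell)}\big|_{c=(\ell+1)\ell+4|\alpha|^\ell k\ell^{k-1}\sqrt{\ell!}}$ to the stated version with the enlarged $c$ that includes the $\Lambda(2\ell)^k\sqrt{(2\ell)!}$ summand. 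Combining both bounds gives \eqref{eqdiffLH}, and then \Cref{thm:sobolev-regularization} with $\delta=\ell-1$ and the updated $\mu_k^{(\ell)}$ produces the $1\to W^{k,1}$ estimate.

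The main obstacle is the bookkeeping in the third step: one must chase the precise shifts $j-i$ in the polynomial $P_{k-1}(N)$ and verify that the symmetrisation $\lambda a^i(a^\dagger)^j+\overline{\lambda}a^j(a^\dagger)^i$ (which makes $H$ self-adjoint and hence $-i[H,\cdot]$ trace-preserving) does not produce an unexpected $k+\ell-1$ term with the wrong sign. Because the degree budget $d_H\le 2(\ell-1)$ is tight, any loss of a factor of $2$ in that estimate would make the absorption via Young's inequality fail; it is exactly the anti-self-adjoint structure of $-i[H,\cdot]$, together with $|j-i|\le d_H$, that keeps the leading commutator polynomial at degree $k-1$ and leaves enough room for \Cref{thm:sobolev-regularization} to close the argument.
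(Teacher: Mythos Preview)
Your proposal is correct and follows essentially the same route as the paper. The paper's proof here is simply a one-line citation of Lemma~4.6 in \cite{Gondolf.2024} together with \Cref{thm:sobolev-regularization}; your sketch fills in exactly that cited moment inequality, and your absorption-via-Young step with $\delta=\ell-1$ is precisely what closes the argument. The only minor variation is that the underlying computation (visible in the multi-mode proof in \Cref{appx-thm:multimode_photon_diss}) bounds the commutator directly through the difference function $g_{j-i}(N)$ and \Cref{appx-lem:upper-lower-bound-g}, rather than via the Cauchy--Schwarz split $\tr[\rho\,a^i(a^\dagger)^j P_{k-1}(N)]$ you propose; both produce an order-$(k+\ell-2)$ remainder that \Cref{lem:polymax} absorbs into the leading $-\tfrac{\ell}{2}(N+\1)^{k+\ell-1}$ term, yielding the enlarged constant $c$ with the $\Lambda(2\ell)^k\sqrt{(2\ell)!}$ summand.
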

        \begin{proof}
            The proof is a direct application of \Cref{thm:sobolev-regularization} to Lemma 4.6 in \cite{Gondolf.2024}.
        \end{proof}
        \begin{rmk}
            Following the same strategies as in \Cref{lem:l-diss} and \ref{lem:l-diss-hamiltonian}, the modified photon dissipation $\cL[a^{\ell-1}(a-\alpha)]$, which is key for the learning scheme in \cite{Moebus.2025}, admits the same bounds as achieved for the photon dissipation $\cL_\ell$ up to constants.
        \end{rmk}

    \subsection{Exponential convergence in $1\rightarrow1$ norm }\label{subsec:exp-convergence}
        As mentioned above, one key property of the shifted $\ell$-photon dissipation is its rapid convergence to the code space. However, this was first proven by \cite{Azouit.2016} in the following topology:
        \begin{equation*}
            \mathrm{tr}\big[L_\ell e^{t\mathcal{L}_\ell}(\rho)L_\ell^\dagger\big] \leq e^{-t\ell!} \mathrm{tr}[L_\ell\rho L_\ell^\dagger]\,,
        \end{equation*}
        where $L_\ell=a^\ell-\alpha^\ell$. Then, some of the authors improved the convergence in \cite[Prop.~A.2]{Moebus.2025} to the strong topology:
        \begin{equation*}
            \begin{aligned}
                \|(e^{t\mathcal{L}_\ell} - \mathcal{P}_\ell)(\rho)\|_1 &\leq 6e^{- \frac{t}{2} \ell!}\biggl( \Bigl(1 + \frac{|\alpha|^\ell}{\sqrt{\ell!}}\Bigr)^2\|\rho \|_{W^{\ell,1}} + \|\rho\|_1 \biggr)
            \end{aligned}
        \end{equation*}
        for all $\rho \in W^{\ell,1}$ and $\mathcal{P}_\ell$ is the projection onto the codespace $\mathcal{C}_\ell$. By a simple trick and using all the machinery above, we can prove the following result:
        \begin{lem}\label{lem:cat-convergence}
            Let $\ell\geq2$ and $\cP_\ell$ be the ON projection onto $\mathcal{C}_\ell(\alpha)$ \eqref{eq:codespace-ell-photon-diss}. Then, for all $t\geq0$
            \begin{equation*}
                \|(e^{t\mathcal{L}_\ell} - \mathcal{P}_\ell)(\rho)\|_{1\rightarrow1}\leq 6e^{- \frac{t}{4} \ell!}\biggl( \Bigl(1 + \frac{|\alpha|^\ell}{\sqrt{\ell!}}\Bigr)^2\Bigl( \frac{16}{((\ell-1)\mu_\ell^{(\ell)} t)^2} + \frac{1}{\mu_\ell^{(\ell)}}\Bigr) + 1\biggr)\, ,
            \end{equation*}
            where $\mu_\ell^{(\ell)}$ is given by \Cref{lem:l-diss}.
        \end{lem}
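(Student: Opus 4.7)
The argument is a regularize-then-decay splitting that combines the two already-established ingredients: the Sobolev-regularization estimate of \Cref{lem:l-diss} (mapping $\cT_1$ into $W^{\ell,1}$ with an explicit time-dependent bound), and the exponential decay estimate
\begin{equation*}
    \|(e^{s\cL_\ell}-\cP_\ell)(\sigma)\|_1 \leq 6\,e^{-\frac{s}{2}\ell!}\Bigl(\bigl(1+\tfrac{|\alpha|^\ell}{\sqrt{\ell!}}\bigr)^2\|\sigma\|_{W^{\ell,1}}+\|\sigma\|_1\Bigr)
\end{equation*}
from \cite[Prop.~A.2]{Moebus.2025}, which contracts in the $W^{\ell,1}\to\cT_1$ topology but demands an input with Sobolev regularity. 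Since the codespace $\cC_\ell(\alpha)$ is the fixed-point space of $\cL_\ell$, its ergodic projection $\cP_\ell$ commutes with the semigroup, i.e.\ $\cP_\ell\circ e^{s\cL_\ell}=\cP_\ell$ for every $s\ge 0$. The semigroup law then gives
\begin{equation*}
    (e^{t\cL_\ell}-\cP_\ell)(\rho) = \bigl(e^{\frac{t}{2}\cL_\ell}-\cP_\ell\bigr)\bigl(e^{\frac{t}{2}\cL_\ell}(\rho)\bigr)\,.
\end{equation*}
The first half of the evolution manufactures Sobolev moments out of an arbitrary trace-class state, while the second half supplies the exponential decay.

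\textbf{Execution.} I would set $\sigma:=e^{\frac{t}{2}\cL_\ell}(\rho)$ and apply the $W^{\ell,1}\to\cT_1$ estimate at time $s=t/2$, producing the prefactor $6e^{-\frac{t}{4}\ell!}$ multiplied by $(1+|\alpha|^\ell/\sqrt{\ell!})^2\|\sigma\|_{W^{\ell,1}}+\|\sigma\|_1$. Trace preservation and contractivity give $\|\sigma\|_1\le\|\rho\|_1$, while \Cref{lem:l-diss} with $k=\ell$ evaluated at time $t/2$ bounds $\|\sigma\|_{W^{\ell,1}}$ by
\begin{equation*}
    \Bigl(\Bigl(\tfrac{4}{(\ell-1)\mu_\ell^{(\ell)}t}\Bigr)^{\frac{\ell}{\ell-1}} + \Bigl(\tfrac{1}{\mu_\ell^{(\ell)}}\Bigr)^{\frac{\ell}{2\ell-1}}\Bigr)\|\rho\|_1\,.
\end{equation*}
Factoring out $\|\rho\|_1$ and substituting then yields a bound of exactly the shape of the stated one, up to the precise form of the two inner terms.

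\textbf{Main obstacle.} The only non-routine step is to replace the expression above by the cleaner $\tfrac{16}{((\ell-1)\mu_\ell^{(\ell)}t)^2}+\tfrac{1}{\mu_\ell^{(\ell)}}$ that appears in the statement. Since $\tfrac{\ell}{\ell-1}\le 2$ and $\tfrac{\ell}{2\ell-1}\le 1$ for $\ell\ge 2$, this reduces to a uniform monotonicity-in-exponent argument, which is valid whenever the two bases exceed $1$; the sub-threshold range can be absorbed into the outer constant or disposed of by a short case split. With this simplification in place, the claimed inequality follows directly.
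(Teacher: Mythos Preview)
Your proposal is correct and follows essentially the same approach as the paper: split $e^{t\cL_\ell}-\cP_\ell$ via the semigroup property at the midpoint $t/2$, apply \cite[Prop.~A.2]{Moebus.2025} to the outer half and the Sobolev-regularization bound of \Cref{lem:l-diss} (with $k=\ell$) to the inner half, then simplify the exponents $\tfrac{\ell}{\ell-1}\le 2$ and $\tfrac{\ell}{2\ell-1}\le 1$. Your treatment of the exponent simplification is in fact more careful than the paper's, which simply asserts the inequality without addressing the sub-threshold range.
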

        \begin{proof}
            First, we assume $\rho\in W^{\ell,1}$. Then Proposition A.2 in \cite{Moebus.2025} applied to $e^{\frac{t}{2}\cL}(\rho_{\frac{t}{2}})$ shows
            \begin{equation*}
                \|(e^{\frac{t}{2}\mathcal{L}_\ell} - \mathcal{P}_\ell)(\rho_{\frac{t}{2}})\|_1\leq 6e^{- \frac{t}{4} \ell!}\biggl( \Bigl(1 + \frac{|\alpha|^\ell}{\sqrt{\ell!}}\Bigr)^2\|\rho_{\frac{t}{2}}\|_{W^{\ell,1}} + \|\rho_{\frac{t}{2}}\|_1 \biggr)\,.
            \end{equation*}
            Here, we also used that $\rho\in W^{\ell,1}$ implies $e^{\frac{t}{2}\cL}(\rho)\in W^{\ell,1}$ due to \Cref{lem:l-diss}. Then, a final application of \Cref{lem:l-diss}, i.e., 
            \begin{equation*}
                \| e^{\frac{t}{2} \cL_\ell} \|_{1 \to W^{\ell,1}} \le \biggl( \frac{4}{(\ell-1)\mu_\ell^{(\ell)} t} \biggr)^{\frac{\ell}{\ell-1}} + \biggl( \frac{1}{\mu_\ell^{(\ell)}} \biggr)^{\frac{\ell}{2\ell-1}}\le \biggl( \frac{4}{(\ell-1)\mu_\ell^{(\ell)} t} \biggr)^{2} + \frac{1}{\mu_\ell^{(\ell)}}
            \end{equation*}
            shows
            \begin{equation*}
                \|(e^{t\mathcal{L}_\ell} - \mathcal{P}_\ell)(\rho)\|_1\leq 6e^{- \frac{t}{4} \ell!}\biggl( \Bigl(1 + \frac{|\alpha|^\ell}{\sqrt{\ell!}}\Bigr)^2\Bigl( \frac{16}{((\ell-1)\mu_\ell^{(\ell)} t)^2} + \frac{1}{\mu_\ell^{(\ell)}}\Bigr) + 1\biggr)\|\rho\|_1 \,,
            \end{equation*}
            which finishes the proof.
        \end{proof}

    \subsection{Perturbation theory in operator norm}\label{subsec:applicatio-perturbation-theory}
        With the results above, we are now able to directly apply the results of \Cref{subsec:perturbation-theory}, which address several aspects of perturbing the bosonic cat code. We begin with a standard result concerning finite-time perturbations.
        
        \begin{lem}
            Let $\ell \geq 2$, $\alpha \in \mathbb{C}$, and let $H$ be the Hamiltonian \eqref{eq:single-mode-H} with $d_H \leq \ell - 2$. Then,
            \begin{equation*}
                \|e^{t\cL_\ell} - e^{t(\cL_\ell - \varepsilon i[H,\cdot])}\|_{1 \to 1}\leq \varepsilon \Biggl(c_1\, t^{\frac{1}{\ell - 1}}\left(\frac{\ell - 2}{(\ell - 1)\mu_k^{(\ell)}}\right)^{\frac{\ell - 2}{\ell - 1}} + c_1 t\left(\frac{c_k}{\mu_k^{(\ell)}}\right)^{\frac{\ell - 1}{2\ell - 3}}\Biggr)\,.
            \end{equation*}
        \end{lem}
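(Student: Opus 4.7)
The plan is to read this lemma as a direct application of \Cref{prop:intermediate-time-perturbation-bound} in its single-mode incarnation, where $\cW_v^{k}(\rho) = (N+\1)^{k/2}\rho(N+\1)^{k/2}$ and $W^{k,1}_v = W^{k,1}$, with perturbation $\cE = -i[H,\cdot]$ of the unperturbed generator $\cL_\ell$. Three inputs are required: (i) Sobolev-regularization of $\cL_\ell$ with an explicit gain $\delta$; (ii) generation of a trace-norm contractive $C_0$-semigroup by $\cL_\ell - i\varepsilon[H,\cdot]$; and (iii) relative boundedness of $-i[H,\cdot]$ in terms of a Sobolev norm $W^{k,1}$ whose index satisfies $k<\delta$.

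Inputs (i) and (ii) are essentially already in place. \Cref{lem:l-diss} provides $\delta = \ell - 1$, $\mu_k = \ell/2$ and $c_k = \frac{\ell}{2}\mu_k^{(\ell)}$ for the unperturbed dynamics, while \Cref{lem:l-diss-hamiltonian}, whose hypothesis $d_H \leq 2(\ell-1)$ is strictly weaker than the present assumption $d_H \leq \ell - 2$, ensures that the perturbed generator $\cL_\ell - i\varepsilon[H,\cdot]$ itself generates a Sobolev-regularizing (hence trace-norm contractive) QMS, so the semigroup framework of \Cref{prop:intermediate-time-perturbation-bound} applies.

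The substantive step is (iii). Setting $k = \ell - 2$, the largest integer strictly below $\delta = \ell - 1$, matched exactly to the degree hypothesis $d_H \leq \ell - 2$, I would bound each monomial $a^i(a^\dagger)^j$ of $H$ in trace norm via the standard creation/annihilation estimate $\|a^i(a^\dagger)^j(N+\1)^{-(i+j)/2}\|_\infty \leq \sqrt{(i+j)!}$ (already employed in \Cref{lem:l-diss-hamiltonian}) together with H\"older's trace-norm inequality, obtaining
\begin{equation*}
    \|-i[H,\rho]\|_1 \leq c_1\,\bigl\|(N+\1)^{(\ell-2)/2}\rho(N+\1)^{(\ell-2)/2}\bigr\|_1 = c_1\,\|\rho\|_{W^{\ell-2,1}}\,,
\end{equation*}
with $c_1$ explicit in the coefficients $\lambda_{i,j}$ of $H$; no additive $c_2\|\rho\|_1$ term is needed since every monomial of $[H,\cdot]$ has strictly positive bosonic degree.

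Substituting $k = \ell - 2$, $\delta = \ell - 1$, $\mu_k = \ell/2$, and $c_k = \frac{\ell}{2}\mu_k^{(\ell)}$ into \Cref{prop:intermediate-time-perturbation-bound} then yields the stated inequality: $1 - k/\delta = 1/(\ell-1)$ is the exponent on $t$ in the first summand, $k/\delta = (\ell-2)/(\ell-1)$ appears inside the first parenthesis, and $k/(k+\delta) = (\ell-2)/(2\ell-3)$ governs the exponent on the stationary term, while the numerical factor $\frac{\ell-2}{(\ell-1)\mu_k^{(\ell)}}$ emerges after dividing through by $\mu_k = \ell/2$. The only genuine obstacle is bookkeeping in step (iii): producing a clean constant $c_1$ from the coefficients of $H$ and verifying the Bochner-integrability prerequisites of \Cref{prop:intermediate-time-perturbation-bound} for this concrete pair $(\cL_\ell, -i[H,\cdot])$. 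Once these are in place, the conclusion is arithmetic substitution.
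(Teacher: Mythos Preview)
Your proposal is correct and follows essentially the same route as the paper's proof: apply \Cref{prop:intermediate-time-perturbation-bound} with $\cL=\cL_\ell$, $\cE=-i[H,\cdot]$, $\delta=\ell-1$, $k=d_H\le\ell-2$, using \Cref{lem:l-diss}/\Cref{lem:l-diss-hamiltonian} for the regularization and well-posedness inputs, and a standard $\|a^i(a^\dagger)^j(N+\1)^{-(i+j)/2}\|_\infty$-type estimate (cited in the paper as Lemma~18 of \cite{Moebus.2025thesis}) for the relative bound on $[H,\cdot]$. Your separation of roles---\Cref{lem:l-diss} for Sobolev-regularization of the \emph{unperturbed} $\cL_\ell$ and \Cref{lem:l-diss-hamiltonian} only for well-posedness of the perturbed semigroup---is in fact slightly cleaner than the paper's phrasing, and your observation that $c_2=0$ is consistent with the absence of an additive term in the stated bound.
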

        \begin{proof}
            This is a direct application of \Cref{prop:intermediate-time-perturbation-bound}. The assumptions are satisfied due to \Cref{lem:l-diss-hamiltonian}, which shows that $-i[H,\cdot] + \cL_\ell$ generates a Sobolev-regularizing QMS with $\delta = \ell - 1$. Furthermore, Lemma 18 \cite[p.~292]{Moebus.2025thesis} implies the existence of a constant $c_1$ such that
            \begin{equation*}
                \|i[H,\rho]\| \leq c_1 \|\rho\|_{W_v^{k,1}}\,,
            \end{equation*}
            with $k = d_H \leq \ell - 2$. Substituting these bounds into \Cref{prop:intermediate-time-perturbation-bound} completes the proof.
        \end{proof}
        Next, we extend the large-time perturbation bound proven in Theorem 5.2 of \cite{Gondolf.2024} using the tools established above.
        \begin{lem}\label{lem:application-large-time-perturbation}
            Let $\ell \geq 2$, $\alpha \in \mathbb{C}$, and $H$ the Hamiltonian 
        \eqref{eq:single-mode-H} with $d_H \leq \ell - 2$. Then,
        \begin{equation*}
            \begin{aligned}
                \|\cP_\ell^{\perp} e^{t\cL_\ell}(\sigma)- \cP_\ell^{\perp} e^{t(\cL_\ell + \varepsilon \cE)}(\rho)\|_1\leq
                \begin{cases}
                    \|\cP_\ell^{\perp}(\sigma - \rho)\|_1 + \varepsilon\, t^{\frac{1}{\ell - 1}}\, \hat{C}_1 \|\rho\|_1,& t \leq 1,\\[4pt]
                    \widetilde{C}\, e^{- t \frac{\ell!}{4}}\|\cP_\ell^{\perp}(\sigma - \rho)\|_1 + \varepsilon\, \hat{C}_2 \hat{C}_1\, \|\rho\|_1,& t \geq 1,
                \end{cases}
            \end{aligned}
        \end{equation*}
        for states $\rho, \sigma \in \cT_1$. The constants $\widetilde{C}, \hat{C}_1, \hat{C}_2 \ge 0$ are defined in \Cref{prop:invariant-subset-perturbation-bound}, and $\cP_\ell^{\perp} = \1 - \cP_\ell$.
    \end{lem}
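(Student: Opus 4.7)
The plan is to apply \Cref{prop:invariant-subset-perturbation-bound} directly with $\cL=\cL_\ell$, perturbation $\cE=-i[H,\cdot]$, and invariant projection $\cP=\cP_\ell$ onto the codespace $\mathcal{C}_\ell(\alpha)$. I then only need to check the three hypotheses of that proposition: first, that both $\cL_\ell$ and $\cL_\ell+\varepsilon\cE$ generate $(k,v)$-Sobolev-regularizing QMSs with regularization gain $\delta>k$; second, that $\cE$ is relatively bounded with respect to the Sobolev norm of index $k$; and third, that the unperturbed semigroup $e^{t\cL_\ell}$ converges to $\cP_\ell$ exponentially fast in the form required by the proposition.

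For the first hypothesis, \Cref{lem:l-diss} shows that $\cL_\ell$ is Sobolev-regularizing with gain $\delta=\ell-1$, and \Cref{lem:l-diss-hamiltonian} upgrades this to $\cL_\ell-i\varepsilon[H,\cdot]$ since the assumption $d_H\le\ell-2$ trivially implies $d_H\le 2(\ell-1)$. For the second, the relative boundedness $\|i[H,\rho]\|_1\le c_1\|\rho\|_{W^{d_H,1}}+c_2\|\rho\|_1$ is supplied by Lemma~18 of \cite{Moebus.2025thesis}, exactly as used in the preceding lemma of this subsection; choosing $k=\ell-2$ (admissible since $d_H\le\ell-2$) yields $\delta=\ell-1>k$ and the exponent $1-k/\delta=1/(\ell-1)$ that appears in the short-time regime of the claim. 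For the third, \Cref{lem:cat-convergence} supplies exponential convergence of the form $\|(e^{t\cL_\ell}-\cP_\ell)(\rho)\|_1\le (A/t^2+B)e^{-t\ell!/4}\|\rho\|_1$ for constants $A,B>0$ depending on $\alpha$ and $\ell$; this matches the form $\widetilde{C}\,t^{1-\widetilde{k}/\delta}e^{-\gamma t}$ with $\gamma=\ell!/4$ after choosing $\widetilde{k}$ large enough that $1-\widetilde{k}/\delta\le -2$ (for instance $\widetilde{k}=3(\ell-1)$).

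With the hypotheses verified, the conclusion is a direct transcription of the conclusion of \Cref{prop:invariant-subset-perturbation-bound}. For $t\le 1$ the error is bounded by the initial discrepancy outside the codespace together with a Sobolev-regularization cost of order $\varepsilon t^{1/(\ell-1)}\hat{C}_1\|\rho\|_1$, while for $t\ge 1$ the exponential contraction at rate $\ell!/4$ dominates, producing the stated bound with the constants $\widetilde{C},\hat{C}_1,\hat{C}_2$ inherited directly from the proposition.

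I expect the main technical point to be matching the explicit prefactor $A/t^2+B$ of \Cref{lem:cat-convergence} to the single-power form $t^{1-\widetilde{k}/\delta}$ of the exponential-convergence hypothesis. That form tolerates negative exponents, which absorbs the $1/t^2$ singularity as $t\downarrow 0$, while the additive constant is dominated on compact intervals by any sufficiently negative power of $t$; the resulting choice of $\widetilde{k}$ is inconsequential for the $t\ge 1$ regime since the exponent becomes negative and hence the prefactor is bounded by a constant. Beyond this bookkeeping, no further substantive obstacle arises, and the statement follows by assembling the three ingredients above.
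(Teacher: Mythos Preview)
Your proposal is correct and follows essentially the same approach as the paper: apply \Cref{prop:invariant-subset-perturbation-bound} directly, using \Cref{lem:l-diss} and \Cref{lem:l-diss-hamiltonian} for the Sobolev-regularization of both generators with $\delta=\ell-1$, the relative boundedness of $-i[H,\cdot]$ from \cite{Moebus.2025thesis} with $k=d_H\le\ell-2$, and \Cref{lem:cat-convergence} for the exponential convergence with rate $\gamma=\ell!/4$. Your additional discussion of how the prefactor $A/t^2+B$ from \Cref{lem:cat-convergence} fits the single-power hypothesis of the proposition is more explicit than the paper's treatment, but note that since the convergence bound is only invoked for times $s\ge t^*=1$ in the proof of \Cref{prop:invariant-subset-perturbation-bound}, the simpler observation $(A/t^2+B)\le A+B$ for $t\ge 1$ already suffices.
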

    \begin{proof}
        The result follows by direct application of \Cref{prop:invariant-subset-perturbation-bound}. The required assumptions hold due to \Cref{lem:l-diss-hamiltonian}, which ensures that $-i[H,\cdot] + \cL_\ell$ generates a Sobolev-regularizing QMS with $\delta = \ell - 1$, and by \Cref{lem:cat-convergence}, which gives
        \begin{equation*}
            \|(e^{t\cL_\ell} - \cP_\ell)(\rho)\|_{1 \to 1}\leq 6 e^{- \frac{t}{4} \ell!}\biggl[
            \Bigl(1 + \frac{|\alpha|^\ell}{\sqrt{\ell!}}\Bigr)^2\Bigl(\frac{16}{((\ell - 1)\mu_\ell^{(\ell)} t)^2}+ \frac{1}{\mu_\ell^{(\ell)}}
            \Bigr)+ 1\biggr]\,.
        \end{equation*}
        Moreover, Lemma C.1 implies the existence of $c_1$ such that
        \begin{equation*}
            \|i[H,\rho]\| \leq c_1 \|\rho\|_{W_v^{k,1}},
        \end{equation*}
        with $k = d_H \leq \ell - 2$. Substituting these expressions into \Cref{prop:invariant-subset-perturbation-bound} finishes the proof.
    \end{proof}

\begin{rmk}
    In both results above, Lindbladian or even non-Markovian noise can also be
    accommodated, provided it is relatively upper bounded by $\cW_v^k$.  
    One may then follow the argument of the corresponding perturbation theorem, and
    the same proofs as presented above carry through.
\end{rmk}

\begin{rmk}
    Following the construction in \cite{Moebus.2025}, namely
    \begin{equation*}
        \cL[a - \alpha] + \cL[a^\ell - \alpha^\ell],
    \end{equation*}
    the resulting QMS converges exponentially fast to the state 
    $\ketbra{\alpha}{\alpha}$, as shown in \cite{Azouit.2016} using the 
    generation theory of \cite{Gondolf.2024}.  
    For such QMS, \Cref{prop:steady-state-perturbation-bound} applies under the 
    same assumptions used in \Cref{lem:application-large-time-perturbation}.
\end{rmk}

\section{Multi-mode applications}\label{sec:multi-mode-applications}
    In this section, we extend the single-mode examples presented in \Cref{sec:single-mode-applications} to multi-mode systems and show that local perturbations perturb expectations only locally. One motivation for this analysis is the study of multi-mode bosonic error-correcting cat-code extensions, which are discussed, for example, in \cite{Jain.2024}. Although we focus on photon-loss dissipation, many other forms of dissipation should satisfy similar properties when analyzed using the arguments developed here.

    \noindent As in the previous sections, detailed calculations are provided in \Cref{appx:technical-bounds-application} for the sake of clarity.

    \subsection{Multi-mode photon dissipation}\label{subsec:multi-mode-photon-dissipation}
        In this section, we show $(k,v)$-Sobolev regularization for multi-mode $\ell$-photon dissipation in the presence of nearest-neighbor Hamiltonians consisting of polynomials in creation and annihilation operators. The interaction structure of the model is given by a $D$-dimensional lattice defined by a graph $(V,E)$ with $V \subset \mathbb{Z}^D$ and $|V| = m$. In the sake of notation, we allow cycles denoting the on-mode terms $V{\subset}E$. The dissipation is taken to be the sum of (shifted) photon-dissipation terms acting on single-modes, i.e.
        \begin{equation}\label{eq:multi-mode-ell-dissipation}
            \mathcal{L}_\ell(\rho) = \sum_{j \in V} \mathcal{L}[a_j^\ell - \alpha_j^\ell] \, ,
        \end{equation}
        for a vector $\alpha \in \mathbb{C}^m$. Recall that $\mathcal{L}[L] = L \cdot L^\dagger - \tfrac{1}{2}\{ L^\dagger L , \cdot \}$ for any operator $L$. Alongside this Lindbladian, we also consider a $2$-local Hamiltonian of the form
        \begin{equation}\label{eq:2-body-Hamiltonian}
            H = \sum_{i \sim j} H_{ij}= \sum_{i \sim j} \sum_{\substack{u \in \mathbb{N}_0^4 \\ \|u\|_1 \le d_H}}\lambda_{ij}^{(u)}\, a_i^{u_1} a_j^{u_3} (a_i^\ast)^{u_2} (a_j^\ast)^{u_4}+ \overline{\lambda}_{ij}^{(u)}\, a_j^{u_4} a_i^{u_2} (a_j^\ast)^{u_3} (a_i^\ast)^{u_1},
        \end{equation}
        where $i \sim j$ means $(i,j) \in E$, corresponding to nearest neighbors on the $D$-dimensional lattice. Together, this defines the operator
        \begin{equation}\label{eq:hamiltonian-dissipation}
            \cL_\ell^{(H)}\coloneqq-i[H,\cdot]+\cL_\ell=\sum_{i\sim j}-i[H_{ij},\cdot]+\frac{1}{\gamma_i}\cL[a_i^\ell-\alpha_i^\ell]+\frac{1}{\gamma_j}\cL[a_j^\ell-\alpha_j^\ell]\eqqcolon\sum_{i\sim j}\cL_{i,j}
        \end{equation}
        with $\gamma_i=|\{i\sim j\,|j\in V\}|$ --- the connectivity of the graph-node $i\in V$. A direct consequence of the $k$-Sobolev preservation of single-mode photon loss (see \Caref{lem:l-diss}) is that the multi-mode extension \eqref{eq:multi-mode-ell-dissipation} combined with single-mode Hamiltonians is $(k,v)$-Sobolev preserving for every $v \in V$.

        \begin{cor}\label{coro:diag_hamiltonian}
           Let $k \ge 1$, $\ell \ge 2$, $\alpha \in \mathbb{C}$, and  $\rho \in \mathcal{T}_f$. Then for every Hamiltonian (\ref{eq:2-body-Hamiltonian}) only acting on single-modes, i.e.  $H=\sum_j H_{jj}$, with $d_H\leq 2(\ell-1)$
            \begin{align*}
                \operatorname{tr}\!\left[ \mathcal{W}_v^k\bigl( -i[H,\cdot]+\mathcal{L}_\ell(\rho) \bigr) \right]
                &\le-\frac{\ell}{2}\, \operatorname{tr}\!\left[ \mathcal{W}_v^{\,k+\ell-1}(\rho) \right]
                + \frac{\ell}{2}\, \mu_k^{(\ell)},
            \end{align*}
            where $\mu_k^{(\ell)} = c^\nu\left(\frac{(\nu-1)^{\nu-1}}{\nu^\nu}\right)$ with $\nu = \ell + k - 1$, and  $c={(\ell+1)\ell}+4|\alpha|^\ell k\ell^{k-1}\sqrt{\ell!}+\Lambda(2\ell)^{k}\sqrt{(2\ell)!}$. Consequently, $-i[H,\cdot]+\mathcal{L}_\ell$ generates a Sobolev-regularizing QMS satisfying
            \begin{equation}
                \| e^{t(-i[H,\cdot] + \mathcal{L}_\ell)} \|_{1 \to W^{k,1}_v} \le \left( \frac{2k}{(\ell-1)\ell\, \mu_k^{(\ell)}\, t} \right)^{\!\frac{k}{\ell-1}} + \left( \frac{1}{\mu_k^{(\ell)}} \right)^{\!\frac{k}{k+\ell-1}} .
            \end{equation}
        \end{cor}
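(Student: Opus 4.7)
The plan is to reduce the multi-mode moment inequality to a weighted sum of the single-mode estimate already established in \Cref{lem:l-diss-hamiltonian}. Writing $\cW_v^k(x) = \sum_{i\in V} w_i\,(N_i+\1)^{k/2}\,x\,(N_i+\1)^{k/2}$ with weights $w_i = e^{-\kappa\,\mathrm{dist}(v,i)}/Z_v$ satisfying $\sum_{i\in V} w_i = 1$, cyclicity of the trace gives
\begin{equation*}
\tr[\cW_v^k(\cL(\rho))] = \sum_{i\in V} w_i\,\tr[\cL(\rho)(N_i+\1)^k],
\end{equation*}
so it suffices to control each single-mode quantity $\tr[\cL(\rho)(N_i+\1)^k]$.

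The first step is a locality argument: because both $H = \sum_j H_{jj}$ and $\cL_\ell = \sum_j \cL[a_j^\ell - \alpha_j^\ell]$ decompose into strictly single-mode summands, every cross term with $j \neq i$ drops out when tested against $(N_i+\1)^k$. For the commutator part, the identity $\tr([H_{jj},\rho](N_i+\1)^k) = -\tr(\rho\,[H_{jj},(N_i+\1)^k])$ vanishes since $H_{jj}$ and $(N_i+\1)^k$ act on distinct tensor factors. An analogous direct computation shows that $\tr(\cL[L](\rho)\,K) = 0$ whenever the jump operator $L$ commutes with $K$, which is exactly the situation for $L = a_j^\ell - \alpha_j^\ell$ and $K = (N_i+\1)^k$ with $j\neq i$. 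Only the diagonal contributions $j = i$ survive.

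Once the problem is localized, I apply the single-mode \Cref{lem:l-diss-hamiltonian} at each $i$ to obtain
\begin{equation*}
\tr[\cL(\rho)(N_i+\1)^k] \le -\tfrac{\ell}{2}\tr[\rho(N_i+\1)^{k+\ell-1}] + \tfrac{\ell}{2}\mu_k^{(\ell)}.
\end{equation*}
Multiplying by $w_i$, summing over $i\in V$, identifying the weighted sum of single-mode moments with $\tr[\cW_v^{k+\ell-1}(\rho)]$, and using $\sum_i w_i = 1$ yields the advertised differential inequality
\begin{equation*}
\tr[\cW_v^k(\cL(\rho))] \le -\tfrac{\ell}{2}\tr[\cW_v^{k+\ell-1}(\rho)] + \tfrac{\ell}{2}\mu_k^{(\ell)}.
\end{equation*}

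With the moment bound in hand, the second assertion follows from \Cref{thm:sobolev-regularization-multi-mode} applied with parameters $\mu_k \mapsto \ell/2$, $c_k \mapsto (\ell/2)\mu_k^{(\ell)}$, and $\delta = \ell - 1$. The only prerequisite is the $(k,v)$-Sobolev preserving property, which in turn follows by discarding the negative term in the differential inequality (yielding a linear-in-$t$ moment bound, and hence local boundedness on $W^{k,1}_v$) combined with the single-mode generation result of \cite{Gondolf.2024} applied mode-by-mode. The main obstacle is thus purely bookkeeping: carefully separating the contributions of different modes and verifying that the cross terms vanish. Nothing substantive beyond the single-mode machinery of \Cref{lem:l-diss-hamiltonian} is required.
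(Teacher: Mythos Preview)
Your proposal is correct and follows essentially the same strategy as the paper: reduce to the single-mode estimate of \Cref{lem:l-diss-hamiltonian} via locality, take the weighted sum, and then invoke \Cref{thm:sobolev-regularization-multi-mode}. The only cosmetic difference is that the paper phrases the decoupling in terms of the commutation of the single-mode generators and the resulting factorization $\prod_j e^{t(-i[H_{jj},\cdot]+\cL[a_j^\ell-\alpha_j^\ell])}=e^{t(-i[H,\cdot]+\cL_\ell)}$, whereas you work directly at the generator level by showing the cross terms $j\neq i$ vanish in the trace; both routes encode the same observation and neither requires additional ideas.
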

        \begin{proof}
            Since not only the generators  $\mathcal{L}[a_j^\ell - \alpha_j^\ell]$ commute for different $j \in V$, but also the generators $-i[H_{jj},\cdot]$ commute for different $j \in V$,   we have $\prod_{j \in V} e^{t (\mathcal{L}[a_j^\ell - \alpha_j^\ell]+-i[H_{jj},\cdot])} = e^{t (\mathcal{L}_\ell-i[H_{jj},\cdot])}$ for any ordering of the product. Thus, \Cref{lem:l-diss-hamiltonian} directly yields the required  estimate for each $j  \in V$, and \Cref{thm:sobolev-regularization-multi-mode}  then proves the $(k,v)$-Sobolev stated bound.
        \end{proof}

        Next, we extend the single-mode result to the concentrated Sobolev norms by assuming differentiability of the moments. Then, we achieve the two bounds in \Cref{eq:mutli-mode-sobo-stability1} and (\ref{eq:mutli-mode-sobo-stability2}).
        
        \begin{thm}\label{thm:multimode_photon_diss}
            Let $\rho \in \mathcal{T}_f$, $\mathcal{L}_l$ be a Limbladian of the form \eqref{eq:multi-mode-ell-dissipation} associated to $l$-photon-dissipation and $H$ a 2-local Hamiltonian of the form (\ref{eq:2-body-Hamiltonian}). 
            Assume that there exists a $k \geq 2$ such that $t\mapsto\tr[\cW_v^{k,1}(\rho(t))]$ is differentiable for every $t\geq 0$.
            \begin{enumerate}
                \item If we consider a coupling of the form $-i[H,\cdot]+\eta \cL_\ell$ with $d_H\leq 2(\ell-1)$, then there exist  constants $\eta_k, C_k,\mu_k >0$ such that for every $\eta\geq \eta_k$, 
                \begin{equation}\label{eq:mutli-mode-sobo-stability1}
                    \tr[(-i[H,\cdot]+\eta \cL_\ell)\mathcal{W}_v^k(\rho(t))]\leq -\frac{C_k}{2}\tr[\rho(t)\mathcal{W}_v^{\ell+k-1}]+\mu_k\, .
                \end{equation}
                \item If $d_H \leq 2(\ell-2)$, then there exists a constant $\mu_k$ such that 
                \begin{equation}\label{eq:mutli-mode-sobo-stability2}
                    \tr[(-i[H,\cdot]+ \cL_\ell)\mathcal{W}_v^k(\rho(t))]\leq -\frac{\ell}{4}\tr[\rho(t)\mathcal{W}_v^{\ell+k-1}]+\mu_k\, .
                \end{equation}
            \end{enumerate}
            Note that the constants $C_k$, $\mu_k$, and $\eta_k$ depend on the defining parameters of the generator and can be improved by model specific calculations.
        \end{thm}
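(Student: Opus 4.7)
My plan is to reduce the multi-mode estimate to the single-mode bound of \Cref{lem:l-diss} by combining the locality of the dissipation $\cL_\ell$ with that of the 2-local Hamiltonian $H$, and then absorb the residual mixed-mode Hamiltonian terms into the dissipation using Young's inequality together with the exponential decay of the weights $w_i := e^{-\kappa\,\mathrm{dist}(v,i)}/Z_v$.

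First I would expand
\[
\tr\bigl[\cW_v^k\bigl(-i[H,\rho]+\eta\,\cL_\ell(\rho)\bigr)\bigr] = \sum_{i \in V} w_i\,\tr\bigl[\bigl(-i[H,\rho] + \eta\,\cL_\ell(\rho)\bigr)(N_i+\1)^k\bigr]
\]
and use that only the local term $\cL[a_i^\ell - \alpha_i^\ell]$ of $\cL_\ell$ fails to commute with $(N_i+\1)^k$, so a mode-wise application of \Cref{lem:l-diss} followed by $w_i$-summation reproduces the main negative contribution $-\eta\frac{\ell}{2}\tr[\cW_v^{\ell+k-1}(\rho)]$ together with a uniformly bounded constant, where the upper bound on $Z_v$ from \Caref{appx-lem:normalization-bounds} makes the normalization harmless.

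For the Hamiltonian, since $H = \sum_{i'\sim j'} H_{i'j'}$ is 2-local, only those $H_{i'j'}$ touching mode $i$ contribute to $[H,(N_i+\1)^k]$. Moving $(N_i+\1)^k$ through each monomial $a_i^{u_1}(a_i^\dagger)^{u_2}a_j^{u_3}(a_j^\dagger)^{u_4}$ via $a_i f(N_i) = f(N_i+\1)a_i$ yields a polynomial of total degree $d_H$ in the creation/annihilation operators multiplied by a polynomial of degree $k-1$ in $N_i$. Combining the mode-wise operator bound $a^p(a^\dagger)^q \lesssim (N+\1)^{(p+q)/2}$ from Appendix C of \cite{Gondolf.2024} with the CCR estimates of \Cref{appx:technical-bounds-application} gives
\[
\bigl|\tr[\rho\,[H_{ij},(N_i+\1)^k]]\bigr| \leq C_H\,\tr\bigl[\rho(N_i+\1)^{\alpha_i}(N_j+\1)^{\alpha_j}\bigr],\qquad \alpha_i + \alpha_j \leq \tfrac{d_H}{2} + k - 1.
\]
Young's inequality splits this product into single-mode moments, and the locality estimate $w_i \leq e^{\kappa}w_j$ (valid whenever $j \sim i$), together with the bounded connectivity $\gamma_i$ of the lattice, reassembles the double sum $\sum_i w_i\sum_{j\sim i}$ into a single term $C_H'\tr[\cW_v^{d_H/2+k-1}(\rho)]$ with $C_H'$ uniform in $|V|=m$.

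Since $d_H \leq 2(\ell-1)$ in part (1) and $d_H \leq 2(\ell-2)$ in part (2), the Hamiltonian-induced degree is strictly less than $\ell + k - 1$, so the scalar inequality $(x+1)^{a}\leq \epsilon(x+1)^{b}+C(\epsilon)$ for $a<b$ applied in the Fock eigenbasis yields
\[
\tr\bigl[\cW_v^k\bigl(-i[H,\rho]+\eta\,\cL_\ell(\rho)\bigr)\bigr] \leq -\Bigl(\eta\,\tfrac{\ell}{2} - C_H'\epsilon\Bigr)\tr[\cW_v^{\ell+k-1}(\rho)] + \eta\,\tfrac{\ell}{2}\mu_k^{(\ell)} + C_H'C(\epsilon).
\]
In part (2) the two-unit gap ($\ell+k-3$ vs.\ $\ell+k-1$) leaves enough slack that choosing $\epsilon \leq \ell/(4 C_H')$ already yields the coefficient $-\ell/4$ with $\eta = 1$, proving \eqref{eq:mutli-mode-sobo-stability2}. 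In part (1) only a one-unit gap is available, so I would fix $\epsilon$ and then enlarge $\eta$ above a threshold $\eta_k$ chosen so that $\eta\ell/2 - C_H'\epsilon \geq C_k/2$ for the claimed constant $C_k > 0$, proving \eqref{eq:mutli-mode-sobo-stability1}. The main obstacle is the combinatorial step of controlling $[H_{ij},(N_i+\1)^k]$ after normal-ordering uniformly in the coefficients $\lambda_{ij}^{(u)}$, and ensuring, via exponential-decay weights and bounded lattice connectivity, that $C_H'$ stays independent of $|V|$, so that neither the Sobolev constants $C_k,\mu_k$ nor the threshold $\eta_k$ scale with the system size.
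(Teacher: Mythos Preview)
Your proposal is correct and follows essentially the same route as the paper's proof: reduce to mode-wise dissipation estimates, control the commutator $[H_{ij},(N_i+\1)^k]$ by functions of $N_i,N_j$ of total degree $d_H/2+k-1$, split the resulting mixed moments via Young's inequality, and use the neighbor comparison $w_i\le e^{\kappa}w_j$ together with bounded lattice connectivity to reassemble everything into $\cW_v$-norms. The paper additionally splits $H=H^{(1)}+H^{(2)}$ into on-site and genuinely two-body parts (handling $H^{(1)}+\eta\cL_\ell$ via \Cref{coro:diag_hamiltonian}), raises the Young target all the way to $\ell+k-1$ in part~(1), and closes with \Cref{lem:polymax} rather than your $\epsilon$-inequality, but these are interchangeable bookkeeping choices and your version is in fact slightly tighter in part~(1).
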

        \begin{proof}
            To prove this result, we split the Hamiltonian $H=H^{(1)}+H^{(2)}$, where $H^{(1)}$ corresponds to the diagonal terms in the decomposition \eqref{eq:2-body-Hamiltonian} and $H^{(2)}$ to the off-diagonal terms. In first place, we make use of  \Cref{coro:diag_hamiltonian} to establish a bound on the Limbladian $\mathcal{L}$ and $\mathcal{H}(H^{(1)})$. Afterwards, we obtain upper bounds for the off-diagonal Hamiltonian $\mathcal{H}(H^{(2)})$ under our specific assumptions on $\eta$ or on the degree of the polynomial. Finally, we make use of \Cref{lem:polymax} and \Cref{thm:sobolev-regularization-multi-mode} to obtain the desired result. The details can be found in \Cref{appx-thm:multimode_photon_diss}.
        \end{proof}

    
    \subsection{Local perturbation perturb locally}
        In the following short subsection, we discuss the effect of local perturbations on the expectation values, which can be measured experimentally. The concept presented plays a crucial role in the perturbation analysis of local expectations of evolved states and a first step in the direction of stability of these systems (see \cite{Trivedi.2024, DeRoeck.2015} for more details). 

        \begin{cor}
            Let $\cL_{\ell}^{(H)}$ be as defined in \Cref{eq:hamiltonian-dissipation}, and let $H_{B}$ be any $2$-body interacting Hamiltonian of the form (\ref{eq:2-body-Hamiltonian}) with degree $d_H \leq 2(\ell-2)$ acting non-trivially on $B \subset V$. Here $(V,E)$ denotes the graph on the lattice on which the operators are defined. Then, if $\delta>2d\vert B \vert$, the closure of $\cL_{\ell}^{(H)}$, as well as the closure of its sum with $-i\varepsilon[H_B,\cdot] + \cL^{(\ell)}$, is a QMS satisfying
            \begin{equation*}
                \left| \tr\!\left[ O_A e^{t\cL_\ell^{(H)}}(\rho) \right] - \tr\!\left[ O_A e^{t\cL}(\rho) \right] \right|\leq \varepsilon \|O_A\|_{\infty}\Bigl(c_1 t^{\,1 - \frac{d_H}{\ell-1}}\left( \frac{d_H}{(\ell-1)\mu} \right)^{\frac{d_H}{\ell-1}}+ c_1 t\left( \frac{c}{\mu} \right)^{\frac{d_H}{d_H + \ell - 1}}\Bigr)
            \end{equation*}
            for constants $c_1, \mu, c$ appearing in the proof, all independent of the system size $m$.
        \end{cor}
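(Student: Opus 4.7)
The plan is to deduce this corollary directly from the intermediate-time perturbation bound of \Cref{prop:intermediate-time-perturbation-bound}, specialized to the multi-mode setting of \Cref{subsec:multi-mode-photon-dissipation}. Writing $\cE_B \coloneqq -i[H_B,\cdot]+\cL^{(\ell)}$ for the localized perturbation, H\"older's inequality yields
\begin{equation*}
\Bigl| \tr\bigl[ O_A \bigl(e^{t\cL_\ell^{(H)}}-e^{t(\cL_\ell^{(H)}+\varepsilon\cE_B)}\bigr)(\rho) \bigr] \Bigr|\le \|O_A\|_\infty \, \bigl\|e^{t\cL_\ell^{(H)}}-e^{t(\cL_\ell^{(H)}+\varepsilon\cE_B)}\bigr\|_{1\to 1},
\end{equation*}
so the task reduces to verifying the hypotheses of \Cref{prop:intermediate-time-perturbation-bound} with regularization exponent $\delta=\ell-1$ and relative-boundedness exponent $k=d_H$ for some reference mode $v\in V$.

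First I would invoke \Cref{thm:multimode_photon_diss}\,(ii), whose hypothesis $d_H\le 2(\ell-2)$ is precisely the one assumed here, to conclude that $\cL_\ell^{(H)}$ generates a $(k,v)$-Sobolev-regularizing QMS with regularization exponent $\delta=\ell-1$ and constants $\mu,c$ independent of the number of modes $m$ (the $m$-independence being inherited from the uniform upper and lower bounds on the normalization $Z_v$). This simultaneously secures the well-posedness of the closures appearing in the statement, via the generation framework of \Cref{def:concentrated-sobolev-preserving-QMS}. Second, I would establish the relative bound
\begin{equation*}
\|\cE_B(\rho)\|_1\le c_1 \|\rho\|_{W^{d_H,1}_v}+c_2 \|\rho\|_1,
\end{equation*}
with $c_1,c_2$ independent of $m$. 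Since $H_B$ and $\cL^{(\ell)}$ are supported on the finite set $B$, one expands them as polynomials in $\{a_i,a_i^\dagger\}_{i\in B}$ of degree at most $d_H$ and applies the single-mode polynomial bound of \cite[Lem.~18]{Moebus.2025thesis} mode by mode, controlling each resulting term by $\tr[(N_i+\1)^{d_H/2}\rho(N_i+\1)^{d_H/2}]$ with $i\in B$. The strictly positive exponential weights $e^{-\kappa\,\mathrm{dist}(v,i)}/Z_v$ entering $\cW_v^{d_H}$ then upper-bound each local contribution, so summing over $i\in B$ produces constants depending only on $|B|$, on $\max_{i\in B} e^{\kappa\,\mathrm{dist}(v,i)}$, and on the coefficients of $H_B$ and $\cL^{(\ell)}$.

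The main obstacle is precisely this relative boundedness step together with the quantitative condition $\delta > 2d|B|$: because the concentrated norm distributes weight exponentially across all modes while $\cE_B$ is strictly localized, a naive estimate inflates the constants by powers of $|B|$ and $d_H$. The condition $\delta>2d|B|$ is exactly what ensures that the polynomial growth generated by $|B|$ sites acting through operators of degree $d_H$ is absorbable into the $\ell-1$ orders of Sobolev regularization provided by \Cref{thm:multimode_photon_diss}, so that $\delta>k$ in \Cref{prop:intermediate-time-perturbation-bound} is satisfied. Once this is in place, substituting $k=d_H$, $\delta=\ell-1$, $\mu_k=\mu$, $c_k=c$ into \Cref{prop:intermediate-time-perturbation-bound}, absorbing the $c_2 t$ term into the $c_1 t$ term, and combining with the duality inequality above gives the stated bound with constants that are, by construction, independent of the system size $m$.
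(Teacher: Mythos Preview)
Your overall architecture is right: reduce to the $1\to1$ bound via duality, then feed \Cref{prop:intermediate-time-perturbation-bound} with the Sobolev regularization of \Cref{thm:multimode_photon_diss}. The gap is in the relative-boundedness step. You propose to apply the single-mode polynomial bound ``mode by mode'' and control each term by $\tr[(N_i+\1)^{d_H/2}\rho(N_i+\1)^{d_H/2}]$ for a single $i\in B$. But $H_B$ is genuinely multi-site: a typical monomial $a_i^{u_1}(a_i^\dagger)^{u_2}a_j^{u_3}(a_j^\dagger)^{u_4}$ in $[H_B,\rho]$ produces, after the polynomial estimate, a \emph{product} $\bigotimes_{i\in B}(N_i+\1)^{d}$ sandwiching $\rho$, not a single-mode weight. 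No mode-by-mode argument converts this into $\|\rho\|_{W_v^{d_H,1}}$.

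The paper's proof resolves exactly this point with Young's inequality: from $\bigotimes_{i\in B}(N_i+\1)^{d}$ one passes, via the arithmetic--geometric mean with $|B|$ equal exponents, to $\frac{1}{|B|}\sum_{i\in B}(N_i+\1)^{d|B|}$, and only then compares with the concentrated Sobolev weight, at the cost of the factor $(\min_{i\in B}e^{-\kappa\,\mathrm{dist}(v,i)})^{-1}$. The upshot is a relative bound against $\cW_v^{2d|B|}$, not $\cW_v^{d_H}$, and this is the actual reason the hypothesis $\delta>2d|B|$ is needed in \Cref{prop:intermediate-time-perturbation-bound}. Your final substitution $k=d_H$ is therefore not justified by the argument you sketched; once Young's inequality is in place the correct exponent entering the proposition is $k=2d|B|$. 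Apart from this, your invocation of \Cref{thm:multimode_photon_diss} and the duality reduction match the paper.
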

        \begin{proof}
            Due to the relative boundedness established in \cite[p.~292]{Moebus.2025thesis} and applying the Young's inequality,
            there exists a constant $c$ such that 
            \begin{equation*}
                \begin{aligned}
                    \| \cH_B(\rho) \|_1&\leq c \left\|\bigotimes_{i \in B} (N_i + \1)^{d} \rho \bigotimes_{i \in B} (N_i + \1)^{d}\bigotimes_{i \in B} (N_i + \1)^{-d}\right\|_1\\
                    &\leq c \tr[\bigotimes_{i \in B} (N_i + \1)^{\frac{1}{|B|}2d|B|} \rho]\\
                    &\leq \frac{c}{|B|} \sum_{i\in B }\tr[(N_i + \1)^{d|B|} \rho (N_i + \1)^{d|B|}]\\
                    &\leq \frac{c}{\vert B \vert}\frac{1}{\min_i e^{-\kappa \dist(v,i)}}\| \cW_v^{2d\vert B \vert}(\rho) \|_1
                \end{aligned}
           \end{equation*}
           Then, since $\delta>2d \vert B \vert$ , by \Cref{prop:intermediate-time-perturbation-bound},
            \begin{align*}
                \left| \tr\!\left[ O_A e^{t(\cL + \varepsilon \cH_B)}(\rho) \right]- \tr\!\left[ O_A e^{t\cL}(\rho) \right]\right|&\leq \|O_A\|_{\infty}\left\|e^{t(\cL + \varepsilon \cH_B)}(\rho)-e^{t\cL}(\rho)\right\|_1 \\
                &\leq \varepsilon \|O_A\|_{\infty}\Bigl( c_1t^{\,1 - \frac{d}{\ell-1}}\left( \frac{d}{(\ell-1)\mu} \right)^{\frac{d}{\ell-1}}+c_1 t\left( \frac{c}{\mu} \right)^{\frac{d}{d + \ell - 1}}\Bigr),
            \end{align*}
            where $\displaystyle c_1=\frac{c}{\vert B \vert}\frac{1}{\min_i e^{-\kappa \dist(v,i)}}$ and for the constants defined in \Cref{thm:multimode_photon_diss}.
        \end{proof}

\section{Conclusion}\label{sec:conclusion}
    In this work, we identified a broad class of unbounded GKSL generators on bosonic Fock space whose associated quantum Markov semigroups exhibit instantaneous Sobolev regularization. We showed that dissipative evolutions generated by polynomial creation and annihilation operators not only preserve finite moments but also immediately improve them, mapping arbitrary trace-class states into highly regular ones for every $t>0$. This provides a structural explanation for stability phenomena in bosonic continuous-variable systems and yields explicit convergence bounds for arbitrary input states. In particular, our bounds sharpen existing perturbative results at both short and long times, offering new analytic tools for assessing robustness in bosonic quantum information processing.

    The multi-mode framework developed here points toward several promising directions. A central open question is whether instantaneous Sobolev regularization can be exploited to improve bosonic information-propagation bounds, such as Lieb–Robinson–type estimates. Another natural direction for future work is the explicit analysis of particular bosonic quantum error-correcting codes and the implications of our theory for their performance.

\vspace{2ex}
\emph{Acknowledgments}
We would like to thank Robert Salzmann, Cambyse Rouzé, and Marius Lemm for their valuable discussions on the topic. This work was funded by the Deutsche Forschungsgemeinschaft (DFG, German Research Foundation) - Project-ID 470903074 - TRR 352 (PG, TM) and  Germany’s Excellence Strategy-EXC2111-390814868 (PCR), QuantERA II Programme that has received funding from the EU’s H2020 research and innovation programme under the GA No 101017733 (TM).

\bibliographystyle{ieeetr}
\bibliography{bibliography}

\appendix
\addtocontents{toc}{\protect\setcounter{tocdepth}{1}}
\addcontentsline{toc}{section}{Appendices} 
\addtocontents{toc}{\protect\setcounter{tocdepth}{0}}

\section{ODE comparison lemma}\label{appx:ode-comparison-lemma}
    In the following section, we prove \Cref{lem:gronwall-extension}. For clarity of presentation, we begin by restating the lemma:
    \begin{lem}\label{appx-lem:gronwall-extension}
        Let $y : [0, \infty) \to [0, \infty)$ be a continuously differentiable function satisfying the differential inequality
        \begin{equation*}
            \frac{dy}{dt}(t) \le -a\, y(t)^p + b
        \end{equation*}
        for constants $a, b > 0$ and exponent $p > 1$. Then,
        \begin{equation*}
            y(t) \le \Big(\max\{y(0) - \Big(\frac{b}{a}\Big)^{1/p}, 0\}^{1 - p} + (p - 1)a t \Big)^{\!-\frac{1}{p - 1}} + \Big(\frac{b}{a}\Big)^{\!\frac{1}{p}} =: z(t) \, .
        \end{equation*}
    \end{lem}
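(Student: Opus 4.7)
The plan is to recognize the right-hand side of the claimed bound as (a shift of) the explicit solution of the companion autonomous ODE $\dot z = -a z^p + b$, and then invoke a standard ODE comparison principle. Accordingly, I would first define the unique positive equilibrium $y^{\ast} := (b/a)^{1/p}$ of $F(y) := -a y^p + b$, set $w_0 := \max\{y(0) - y^{\ast}, 0\}$, and let $w$ be the closed-form solution of the pure-decay problem $\dot w = -a w^p$, $w(0) = w_0$, obtained by separation of variables as
\begin{equation*}
   w(t) = \bigl( w_0^{1-p} + a(p-1)t \bigr)^{-\frac{1}{p-1}},
\end{equation*}
with the convention that $w \equiv 0$ when $w_0 = 0$. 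The candidate majorant is then $z(t) := w(t) + y^{\ast}$, which matches the bound in the statement.

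Next I would verify initial domination and the super-solution property. If $y(0) \ge y^{\ast}$ then $z(0) = y(0)$, and otherwise $z(0) = y^{\ast} > y(0)$, so $z(0) \ge y(0)$ in all cases. For the differential inequality, I compute $\dot z = \dot w = -a w^p$, so to obtain $\dot z \ge F(z) = -a(w + y^{\ast})^p + b$ it suffices, using $b = a(y^{\ast})^p$, to establish the superadditivity estimate
\begin{equation*}
   (w + y^{\ast})^p \ge w^p + (y^{\ast})^p \qquad (w, y^{\ast} \ge 0, \; p \ge 1).
\end{equation*}
This follows from the elementary observation that $g(s) := (s + y^{\ast})^p - s^p$ is non-decreasing on $[0, \infty)$, as $g'(s) = p\bigl[(s+y^{\ast})^{p-1} - s^{p-1}\bigr] \ge 0$, and $g(0) = (y^{\ast})^p$.

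The last step is the comparison argument. Since $F$ is $C^1$ and hence locally Lipschitz on any bounded interval, and since $y$ stays in $[0, \max\{y(0), y^{\ast}\}]$ (because $F < 0$ above $y^{\ast}$) while $z$ decreases to $y^{\ast}$, a standard first-crossing argument — or equivalently Gronwall applied to $(y - z)_+$ on a short interval past a hypothetical first crossing time — rules out $y(t) > z(t)$ and concludes the proof.

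I expect the only subtlety to be cosmetic: correctly handling the degenerate case $w_0 = 0$ where $0^{1-p}$ should be read as $+\infty$ so that $z \equiv y^{\ast}$, and ensuring the super-solution inequality holds uniformly across both regimes $y(0) \ge y^{\ast}$ and $y(0) < y^{\ast}$. Both issues are resolved at once by the $\max$ in the definition of $w_0$, which makes $z$ a genuine super-solution of $\dot u \le F(u)$ on all of $[0, \infty)$ regardless of where $y(0)$ lies relative to the equilibrium.
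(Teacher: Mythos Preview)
Your proposal is correct and rests on the same two ingredients as the paper's proof: the shift by the equilibrium $y^\ast=(b/a)^{1/p}$ and the superadditivity $(u+v)^p\ge u^p+v^p$ for $p\ge 1$, which reduces the problem to the explicitly solvable pure-decay ODE $\dot w=-aw^p$. The packaging differs: the paper performs a case split, first defining the hitting time $t_c=\inf\{t:y(t)\le y^\ast\}$, proving by a mean-value contradiction that $y\le y^\ast$ for $t\ge t_c$, and then on $[0,t_c)$ setting $g=y-y^\ast>0$, deriving $g'\le -ag^p$ (this is exactly your superadditivity step), and integrating by separation of variables. Your version avoids the case split by building $z=w+y^\ast$ as a global super-solution of $\dot u=F(u)$ and invoking an ODE comparison principle, which handles the regimes $y(0)\gtrless y^\ast$ uniformly. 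Both routes are short; yours is slightly more streamlined, while the paper's is more self-contained since it does not appeal to a comparison theorem as a black box. One minor point: your claim that $y$ stays in $[0,\max\{y(0),y^\ast\}]$ is not needed in full strength---continuity of $y$ on each compact interval $[0,T]$ already gives a uniform bound there, which is all the local Lipschitz comparison argument requires.
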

    \begin{proof}
        In a first step, we define
        \begin{equation*}
            t_c \coloneqq \inf\{t \ge 0 : y(t) \le (b/a)^{1/p}\}\,.
        \end{equation*}
        For $t > t_c$, it follows immediately that $y(t) \le (b/a)^{1/p}$, and hence $y(t) \le z(t)$. We verify this by contradiction. Suppose there exists $t_v > t_c$ such that $y(t_v) > (b/a)^{1/p}$. By continuity of $y$, the intermediate value theorem implies that there exist $t_0, t_1$ with $t_0 < t_1$ such that
        \begin{equation*}
            y(t_0) = (b/a)^{1/p}, \qquad y(t_1) > (b/a)^{1/p}, \qquad \text{and} \quad y(t) \ge (b/a)^{1/p} \quad\text{ for all } t \in [t_0, t_1]\,.
        \end{equation*}
        By the mean value theorem, there exists $t_m \in (t_0, t_1)$ such that
        \begin{equation*}
            y'(t_m) = \frac{y(t_1) - y(t_0)}{t_1 - t_0} > 0\,.
        \end{equation*}
        However, from the differential inequality and the fact that $y(t_m) \ge (b/a)^{1/p}$, we have
        \begin{equation*}
            y'(t_m) \le -a\, y(t_m)^p + b \le -a(b/a) + b = 0,
        \end{equation*}
        a contradiction. Therefore, $y(t) \le (b/a)^{1/p}$ for all $t \ge t_c$.

        For $0 \le t < t_c$, define $g(t) \coloneqq y(t) - (b/a)^{1/p} > 0$. Then $g'(t) = y'(t)$, and the differential inequality yields
        \begin{equation*}
            g'(t) \le -a\, g(t)^p.
        \end{equation*}
        Separating variables and integrating from $0$ to $t$ gives
        \begin{equation*}
            \frac{g(t)^{1 - p} - g(0)^{1 - p}}{1 - p} = \int_0^t \frac{g'(s)}{g(s)^p}\, ds \le \int_0^t -a\, ds = -a t.
        \end{equation*}
        Rearranging terms, we obtain
        \begin{equation*}
            g(t)^{1 - p} \ge g(0)^{1 - p} + (p - 1)a t.
        \end{equation*}
        Taking both sides to the power of $\tfrac{1}{1 - p}$ (which reverses the inequality since $\tfrac{1}{1 - p} < 0$), we find
        \begin{equation*}
            g(t) \le \frac{1}{\big(g(0)^{1 - p} + (p - 1)a t \big)^{\frac{1}{p - 1}}}.
        \end{equation*}
        Substituting back $g(t) = y(t) - (b/a)^{1/p}$ yields
        \begin{equation*}
            y(t) \le \frac{1}{\big((y(0) - (b/a)^{1/p})^{1 - p} + (p - 1)a t \big)^{\frac{1}{p - 1}}} + (b/a)^{1/p}.
        \end{equation*}
        Using the conventions $1/\infty = 0$ and $1/0 = \infty$ completes the proof.
    \end{proof}

\section{Multi-mode extension}\label{appx:multi-mode-extension}
    In this appendix, we restate the multi-mode extensions previously introduced in \Cref{subsec:multi-mode-regularizing} and provide detailed proofs. In a first step, we shortly discuss upper and lower bounds on the normalization constant $Z_v$, which show that the normalization is admits lower and upper bounds which are system size independent.
    \begin{lem}\label{appx-lem:normalization-bounds}
        Let $(V,E)$ define a $D$-dimensional lattice. Then, for $v\in V$, we have for $K=\max_{i\in V}\dist(i,v)$
        \begin{equation*}
            1\leq\frac{1-e^{-(\kappa-1)(K+1)}}{1-e^{-(\kappa-1)}}\leq Z_v\leq\frac{e^{2D-1}}{1-e^{-(\kappa-1)}}
        \end{equation*}
    \end{lem}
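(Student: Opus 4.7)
The plan is to decompose $Z_v$ by graph-distance shells. Writing $S_r(v)=\{j\in V:\mathrm{dist}(v,j)=r\}$ and using $K=\max_{i\in V}\mathrm{dist}(v,i)$,
\begin{equation*}
    Z_v=\sum_{r=0}^{K}|S_r(v)|\,e^{-\kappa r}.
\end{equation*}
Both extreme bounds in the lemma are (truncated) geometric sums with common ratio $e^{-(\kappa-1)}$, which suggests sandwiching each shell contribution $|S_r(v)|\,e^{-\kappa r}$ between $e^{-(\kappa-1)r}$ and $e^{2D-1}\,e^{-(\kappa-1)r}$; the statement then reduces to the closed-form identity for a geometric series (with $K\to\infty$ used only on the upper side).

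For the outermost lower bound $1\leq Z_v$ I would simply retain the $r=0$ term. The middle lower bound is equivalent to a termwise estimate $|S_r(v)|\geq e^{r}$ on the relevant range of $r$, which follows from the $D$-dimensional shell count (polynomial of degree $D-1$ in $r$) once one aggregates over several neighboring shells; in the borderline small-dimensional cases one redistributes the exponentially decaying weights across the polynomially growing shells by summation-by-parts and thereby recovers the clean ratio $e^{-(\kappa-1)}$ in the aggregate. For the upper bound the standard hypercube containment $S_r(v)\subset B_r(v)$ with $|B_r(v)|\leq(2r+1)^D$ reduces the claim to the pointwise inequality
\begin{equation*}
    (2r+1)^D\leq e^{2D-1}\,e^{r}\qquad (r\geq 0),
\end{equation*}
equivalently $D\ln(2r+1)\leq r+2D-1$. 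The function $f(r)=r+2D-1-D\ln(2r+1)$ has a unique critical point at $r=D-\tfrac{1}{2}$, and a direct check at $r=0$, $r=D-\tfrac{1}{2}$, and $r\to\infty$ gives $f\geq 0$; summing the resulting geometric tail then yields $Z_v\leq \frac{e^{2D-1}}{1-e^{-(\kappa-1)}}$ for $\kappa>1$.

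The main obstacle is the lower-bound side: the naive termwise estimate $|S_r|\geq e^r$ is false in low dimensions (e.g.\ a one-dimensional chain gives $|S_r|=2$), so producing the exact displayed form $(1-e^{-(\kappa-1)(K+1)})/(1-e^{-(\kappa-1)})$ likely requires the summation-by-parts identity mentioned above, or an implicit assumption on the parameters $D$ and $\kappa$. Once the constants are calibrated, the geometric-series computation delivers the precise stated bounds.
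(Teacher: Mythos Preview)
Your shell decomposition $Z_v=\sum_{r=0}^{K}|S_r(v)|\,e^{-\kappa r}$ matches the paper's first step, but both halves of your argument have genuine gaps. For the upper bound, the pointwise inequality $(2r+1)^D\le e^{2D-1}e^{r}$ is \emph{false} for large $D$: at your own critical point $r=D-\tfrac12$ it becomes $3D-\tfrac32\ge D\ln(2D)$, and already at $D=10$ one has $D\ln(2D)\approx 29.96>28.5$, so your ``direct check'' does not go through. The culprit is that the ball bound $|S_r|\le(2r+1)^D$ is a degree-$D$ polynomial in $r$, which is one degree too loose. The paper instead uses the sphere estimate $|S_r|\le 2^D\frac{(D-1+r)^{D-1}}{(D-1)!}$ (degree $D-1$) and the one-line observation $\frac{x^{D-1}}{(D-1)!}\le\sum_{j\ge0}\frac{x^j}{j!}=e^{x}$ with $x=D-1+r$; combined with $2^D\le e^D$ this gives $|S_r|\,e^{-\kappa r}\le e^{2D-1}e^{-(\kappa-1)r}$ for every $D$, and summing the geometric series yields the stated upper bound.

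For the lower bound you over-engineer: no estimate of the form $|S_r|\ge e^r$ or any summation-by-parts is needed. The paper simply uses that each shell at distance $0\le r\le K$ is nonempty, i.e.\ $|S_r|\ge 1$, which gives $Z_v\ge\sum_{r=0}^{K}e^{-\kappa r}$ and from there the displayed lower bound (the $r=0$ term alone already yields $Z_v\ge 1$). Your remarks about redistributing weights via summation-by-parts are neither necessary nor, as written, a proof.
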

    \begin{proof}
        In a first step, we reshape the sum with $K=\max_{i\in V}\dist(i,v)$ as follows
        \begin{equation*}
            Z_v= \sum_{i=1}^me^{-\kappa\mathrm{dist}(v,i)}=\sum_{\ell=0}^{K}|\{i\in V\,|\,\dist(i,v)=\ell\}| e^{-\kappa\ell}\,.
        \end{equation*}
        Then, we use a simple upper bound on the cardinality of a sphere in $D$-dimensional lattice (see \cite[Appx.~B.2]{Moebus.2025Stability}). Combined with the geometric series, we achieve
        \begin{equation*}
            Z_v\leq\sum_{\ell=0}^{K}2^D\frac{(D-1+\ell)^{D-1}}{(D-1)!} e^{-\kappa\ell}\leq e^D \sum_{\ell=0}^{K}\sum_{j=0}^\infty\frac{(D-1+\ell)^{j}}{j!} e^{-\kappa\ell}\leq e^{2D-1}\sum_{\ell=0}^{\infty}e^{-(\kappa-1)\ell}=\frac{e^{2D-1}}{1-e^{-(\kappa-1)}}
        \end{equation*}
        For the lower bound, we just use $1$ as lower bound for the cardinalities of the spheres and again the geometric sum so that
        \begin{equation*}
            Z_v\geq\sum_{\ell=0}^{K}e^{-\kappa\ell}\geq\frac{1-e^{-(\kappa-1)(K+1)}}{1-e^{-(\kappa-1)}}\geq 1
        \end{equation*}
        finishes the result.
    \end{proof}
    Next, we prove the following inequality, which motivates the definition of concentrated Sobolev norms --- local moments bound the global but concentrated Sobolev norm.
    \begin{lem}\label{appx-lem:bound-coherent-state}
        Let $\alpha\in\C^m$ define the coherent state $\ket{\alpha}$, then for $k>0$
        \begin{equation*}
            \|\ketbra{\alpha}{\alpha}\|_{W^{k,1}_v}\leq\max_{i\in\{1,...,m\}}\tr[\ketbra{\alpha_i}{\alpha_i}(N_i+\1)^k] \leq\left(\frac{2k}{\ln(k/\max_{i\in\{1,...,m\}}|\alpha_i|^2+1)}\right)^k\,.
        \end{equation*}
    \end{lem}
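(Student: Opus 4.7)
The claim splits into two inequalities, which I would treat separately.

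For the first inequality $\|\ket{\alpha}\bra{\alpha}\|_{W^{k,1}_v}\leq\max_i \tr[\ket{\alpha_i}\bra{\alpha_i}(N_i+\1)^k]$, I would unfold the definition of $\|\cdot\|_{W^{k,1}_v}$, apply the triangle inequality to the convex combination in $\cW_v^k$, and exploit the tensor-product factorization $\ket{\alpha}=\bigotimes_j \ket{\alpha_j}$. Since $(N_i+\1)^{k/2}$ acts only on mode $i$, each summand $(N_i+\1)^{k/2}\ket{\alpha}\!\bra{\alpha}(N_i+\1)^{k/2}$ is positive semidefinite (rank one), so its trace norm equals its trace; the tensor structure collapses this trace to $\tr[\ket{\alpha_i}\!\bra{\alpha_i}(N_i+\1)^k]$ with the other modes contributing $1$. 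Because the weights $e^{-\kappa\mathrm{dist}(v,i)}/Z_v$ form a probability distribution over $V$ by construction of $Z_v$, the convex combination is bounded by its maximum, yielding the first inequality.

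For the second inequality, I would reduce to a single-mode coherent state $\ket{\beta}$ with $\lambda=|\beta|^2$ and use the Fock expansion to identify
\begin{equation*}
    \tr[\ket{\beta}\!\bra{\beta}(N+\1)^k] = e^{-\lambda}\sum_{n=0}^\infty \frac{\lambda^n(n+1)^k}{n!} = \mathbb{E}[(N+1)^k], \qquad N\sim\mathrm{Poisson}(\lambda).
\end{equation*}
This Poisson moment should then be bounded by a Chernoff-type argument. I would start from the pointwise inequality $x^k\leq (k/s)^k e^{-k}e^{sx}$ for $x,s>0$, which follows from optimizing $x\mapsto x^k e^{-sx}$ at $x=k/s$. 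Taking expectations and using the Poisson exponential generating function $\mathbb{E}[e^{sN}]=e^{\lambda(e^s-1)}$ gives
\begin{equation*}
    \mathbb{E}[(N+1)^k]\leq \Bigl(\tfrac{k}{s}\Bigr)^k e^{-k+s+\lambda(e^s-1)}.
\end{equation*}
The natural choice $s=\ln(1+k/\lambda)$ makes $\lambda(e^s-1)=k$, so the exponent collapses to $e^s=1+k/\lambda$ and the bound becomes $(1+k/\lambda)\bigl(k/\ln(1+k/\lambda)\bigr)^k$.

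The main obstacle will be the final bookkeeping: the claimed bound carries a prefactor $(2k)^k$ rather than $k^k$, so the extra factor $(1+k/\lambda)$ produced by this choice of $s$ must be absorbed into $2^k$. This is clean in the regime $k/\lambda\leq 2^k-1$, but borderline when $\lambda$ is small compared to $k$; I would expect the detailed proof either to tune $s$ slightly away from $\ln(1+k/\lambda)$ so that the $2^k$ factor emerges directly from the Chernoff optimization, or to split into regimes and treat the small-$\lambda$ case by a separate elementary estimate on the Poisson tail. The other steps — the triangle/convex-combination bound, the identification with a Poisson moment, and the generating-function input — are essentially bookkeeping and should proceed without difficulty.
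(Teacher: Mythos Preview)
Your handling of the first inequality matches the paper's: unfold $\|\cdot\|_{W^{k,1}_v}$, use the tensor factorization of $\ket{\alpha}$ and positivity to reduce each summand to a single-mode expectation $\bra{\alpha_i}(N_i+\1)^k\ket{\alpha_i}$, then bound the convex combination by its maximum.

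For the second inequality the paper takes a slightly different route: rather than running the Chernoff argument directly on $(N+1)^k$, it expands $(n+1)^k=\sum_{l}\binom{k}{l}n^l$ and invokes Ahle's Poisson raw-moment bound $\mathbb{E}[N^l]\le\bigl(l/\ln(1+l/\lambda)\bigr)^l$ as a black box. Your Chernoff argument with $s=\ln(1+k/\lambda)$ is precisely the technique underlying Ahle's inequality, so the two approaches are close cousins; the paper trades self-containedness for a clean citation, while your version makes the mechanism explicit.

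The bookkeeping concern you flag is legitimate and is not actually resolved by the paper's argument either. After the binomial expansion and Ahle's bound one arrives at $\bigl(1+k/\ln(1+k/\lambda)\bigr)^k$, and absorbing the additive $1$ into the factor $2^k$ requires $\ln(1+k/\lambda)\le k$, i.e.\ $\lambda\ge k/(e^k-1)$; your direct route produces the equivalent constraint $1+k/\lambda\le 2^k$. For smaller $\lambda$ the stated inequality in fact fails (as $\lambda\to 0$ the right-hand side tends to $0$ while the left-hand side tends to $1$), so this is an edge case in the lemma's statement itself rather than a defect of your method. No tuning of $s$ or regime-splitting will rescue the inequality as written for all $\lambda>0$; in the paper's applications the coherent amplitudes are bounded away from zero, which is where the bound is intended to be used.
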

    \begin{proof}
        This proof follows the proof of Lemma 7 in \cite{Moebus.2024Learning}. By definition of the concentrated Sobolev norm, the norm reduces to 
        \begin{equation*}
            \|\ket{\alpha}\bra{\alpha}\|_{W^{k,1}_v}=\sum_{i=1}^m\frac{e^{-\kappa\dist(v,i)}}{Z_v}\bra{\alpha_i}(N_i+ \1)^k\ket{\alpha_i}\,,
        \end{equation*}
        and by that to a single mode. Using a simple bound on the $k^{\text{th}}$ moment of the Poisson distribution \cite{Ahle.2022}, one can show 
        \begin{align*}
            \bra{\alpha_i}(N_i+\1)^k\ket{\alpha_i} &=e^{-|\alpha_i|^2} \sum_{n_i=0}^\infty\, \frac{|\alpha_i|^{2n_i}}{n_i!}\, (n_i+1)^k\\
            &=\sum_{l=0}^k\binom{k}{l}\,\sum_{n_i=0}^\infty n_i^l\,\frac{|\alpha_i|^{2n_i}}{n_i!}e^{-|\alpha_i|^2}\\
            &\le \left(\frac{2k}{\ln(k/|\alpha_i|^2+1)}\right)^k\,.
        \end{align*}
        Therefore,
        \begin{equation*}
            \begin{aligned}
                \|\ketbra{\alpha}{\alpha}\|_{W^{k,1}_v}\leq\sum_{i=1}^m\frac{e^{-\kappa\dist(v,i)}}{Z_v} \left(\frac{2k}{\ln(k/|\alpha_i|^2+1)}\right)^k\leq\left(\frac{2k}{\ln(k/\max_{i\in\{1,...,m\}}|\alpha_i|^2+1)}\right)^k
            \end{aligned}
        \end{equation*}
        by Hölder's inequality and the normalization constant.
\end{proof}
    
    We begin with a Jensen-type inequality for multi-mode bosonic moments:
    \begin{lem}[Jensen’s inequality for multi-mode moments]\label{appx-lem:jensens-inequality-moments-number-operator-multi}
        Let $\rho \in \cT_f$ be a state. Then, for all $p \ge q > 0$,
        \begin{equation}\label{appx-eq:multi-jensen-moment}
            \tr[\cW_v^{p} (\rho)]\geq  \big( \tr[\cW_v^{q} (\rho)] \big)^{\frac{p}{q}}\,.
        \end{equation}
    \end{lem}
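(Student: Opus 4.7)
The plan is to reduce the multi-mode statement to a single application of Jensen's inequality with respect to a product probability measure on (Fock-index, site-index) pairs, in the same spirit as the single-mode proof of \Cref{lem:jensens-inequality-moments-number-operator}.

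First I would unpack the definition of $\cW_v^k$. Writing $w_i \coloneqq Z_v^{-1} e^{-\kappa\,\mathrm{dist}(v,i)}$, the coefficients $\{w_i\}_{i \in V}$ form a probability distribution on $V$ by construction of $Z_v$. Since all the number operators $N_i$ commute and the weight operators act only by multiplication in the joint Fock basis $\{\ket{\mathbf{n}}\}_{\mathbf{n} \in \mathbb{N}_0^m}$, I would use cyclicity of the trace to obtain
\begin{equation*}
    \tr[\cW_v^{k}(\rho)] \;=\; \sum_{i \in V} w_i \,\tr\!\bigl[\rho\,(N_i+\1)^{k}\bigr] \;=\; \sum_{i \in V}\sum_{\mathbf{n}} w_i\, p_{\mathbf{n}}\,(n_i+1)^{k},
\end{equation*}
where $p_{\mathbf{n}} \coloneqq \braket{\mathbf{n},\rho\,\mathbf{n}} \geq 0$ and $\sum_{\mathbf{n}} p_{\mathbf{n}} = \tr\rho = 1$. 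Because $\rho \in \cT_f$, only finitely many $p_{\mathbf{n}}$ are nonzero, so all sums are finite.

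The key observation is that $\{w_i\,p_{\mathbf{n}}\}_{(i,\mathbf{n})}$ is a probability distribution on the countable index set $V \times \mathbb{N}_0^m$. Since $p/q \geq 1$, the function $x \mapsto x^{p/q}$ is convex on $[0,\infty)$, so Jensen's inequality applied to the random variable $(i,\mathbf{n}) \mapsto (n_i+1)^{q}$ against this probability measure gives
\begin{equation*}
    \sum_{i,\mathbf{n}} w_i p_{\mathbf{n}}\bigl((n_i+1)^{q}\bigr)^{p/q} \;\geq\; \Bigl(\sum_{i,\mathbf{n}} w_i p_{\mathbf{n}}\,(n_i+1)^{q}\Bigr)^{p/q}.
\end{equation*}
Rewriting the left-hand side as $\tr[\cW_v^{p}(\rho)]$ and the right-hand side as $\bigl(\tr[\cW_v^{q}(\rho)]\bigr)^{p/q}$ completes the proof.

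I do not expect a real obstacle: the only subtlety over the single-mode case is bookkeeping, namely checking that combining the site weights $w_i$ with the Fock-diagonal probabilities $p_{\mathbf{n}}$ still yields a bona fide probability measure, which follows from $\sum_i w_i = 1$ and the joint-Fock diagonalization of the commuting number operators. The finite-rank assumption $\rho \in \cT_f$ ensures the sums converge absolutely, so no additional integrability considerations arise.
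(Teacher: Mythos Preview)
Your proposal is correct and follows essentially the same approach as the paper: both form a probability measure by combining the normalized site weights $w_i$ with the Fock-diagonal probabilities of $\rho$, and then apply Jensen's inequality to the convex function $x\mapsto x^{p/q}$. The only cosmetic difference is that you work with the joint distribution $\{w_i\,p_{\mathbf{n}}\}$ on $V\times\mathbb{N}_0^m$, whereas the paper collapses first to the marginals $p_{n_i,i}=w_i\braket{n_i,\rho\,n_i}$ on $V\times\mathbb{N}_0$; since the random variable depends only on the $i$-th Fock index, the two are equivalent.
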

    \begin{proof}
        The proof follows the lines of the proof of \Cref{lem:jensens-inequality-moments-number-operator}, with slightly more general coefficients $p_n$. Similar as before, the state $\rho$ has finite rank $K$ in the Fock basis, so that we can write
        \begin{equation*}
            \tr[\cW_v^{p} (\rho) )] = \sum_{i\in V}\sum_{\,n_i=0}^{K-1} \frac{e^{-\kappa \, \mathrm{dist}(v,i)}}{Z_v} \braket{n_i, \rho n_i} (n_i + 1)^{p}= \sum_{i\in V}\sum_{\,n_i=0}^{K-1} p_{n_i,i} (n_i + 1)^{q \frac{p}{q}}\,.
        \end{equation*}
        Due to the normalization $Z_v$, $p_{n_i,i} = \frac{e^{-\kappa \, \mathrm{dist}(v,i)}}{Z_v}\braket{n_i, \rho n_i}$ is a probability distribution, so that the convexity of the function $x \mapsto x^{\frac{p}{q}}$ for $p \ge q > 0$ shows by Jensen’s inequality that
        \begin{equation*}
            \tr[\cW_v^{p} (\rho) )]\geq \Bigl( \sum_{i\in V}\sum_{n_i=0}^{K-1} p_{n_1,i} (n_i + 1)^{q} \Bigr)^{\frac{p}{q}}= \big( \tr[\cW_v^{q} (\rho) )] \big)^{\frac{p}{q}}\,,
        \end{equation*}
        which completes the proof.
    \end{proof}
    \noindent With that result in hand, and in conjunction with \Cref{lem:gronwall-extension}, we now prove \Cref{thm:sobolev-regularization-multi-mode}.
    \begin{thm}\label{appx-thm:sobolev-regularization-multi-mode}
        Let $(\cL, \cT_f)$ generate a $(k,v)$-Sobolev-preserving quantum Markov semigroup satisfying
        \begin{equation*}
            \tr[ \cW_v^{k} ( \cL(\rho) ) ] \le - \mu_k \, \tr[ \cW_v^{k+\delta}(\rho) ] + c_k
        \end{equation*}
        for constants $k,\mu_k, \delta, c_k > 0$. Then, for all $t > 0$,
        \begin{equation*}
            \| e^{t \cL} \|_{1 \to W^{k,1}_v} \le \left( \frac{k}{\delta \mu_k t} \right)^{\frac{k}{\delta}} + \left( \frac{c_k}{\mu_k} \right)^{\frac{k}{k + \delta}} .
        \end{equation*}
    \end{thm}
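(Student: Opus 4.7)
The plan is to transport the single-mode proof of \Cref{thm:sobolev-regularization} to the multi-mode setting essentially verbatim, with the concentrated functional $\tr[\cW_v^k(\cdot)]$ playing the role that $\tr[\cdot\,(N+\1)^k]$ plays there. Only two places genuinely require care: the form of Jensen's inequality, now involving the normalized weights $e^{-\kappa\,\mathrm{dist}(v,i)}/Z_v$ alongside the Fock probabilities, and the differentiability of the concentrated moment along the flow; the first is already supplied by \Cref{lem:jensens-inequality-moments-number-operator-multi} and the second by the $(k,v)$-Sobolev preserving hypothesis.

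First I would reduce to states $\rho \in \cT_f$ using density of $\cT_f$ in $\cT_1$ and trace-contractivity of $e^{t\cL}$, so that the bound on the operator norm $1 \to W^{k,1}_v$ follows by closure. Setting $\rho(t) = e^{t\cL}(\rho)$, the $(k,v)$-Sobolev preserving property keeps $\rho(t) \in W^{k,1}_v$ with at most exponential growth, and the semigroup calculus on the invariant core $\cT_f$ ensures that
\[
y(t) \coloneqq \tr[\cW_v^k(\rho(t))] = \|\rho(t)\|_{W^{k,1}_v}
\]
is continuously differentiable on $[0,\infty)$ with $y'(t) = \tr[\cW_v^k(\cL(\rho(t)))]$.

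Substituting hypothesis~\eqref{eq:assum-improved-stability-multi-mode} and then applying the multi-mode Jensen inequality (\Cref{lem:jensens-inequality-moments-number-operator-multi}) with $p = k+\delta$, $q = k$ converts the differential inequality into the autonomous form
\[
y'(t) \le -\mu_k\, y(t)^{1+\delta/k} + c_k,
\]
which matches exactly the hypothesis of \Cref{lem:gronwall-extension} with $a = \mu_k$, $b = c_k$, and exponent $p = 1 + \delta/k > 1$. Invoking that lemma yields
\[
y(t) \le \Bigl(\max\{y(0) - (c_k/\mu_k)^{k/(k+\delta)},\,0\}^{-\delta/k} + (\delta/k)\mu_k t\Bigr)^{-k/\delta} + (c_k/\mu_k)^{k/(k+\delta)},
\]
and dropping the first nonnegative term inside the outer parenthesis bounds the leading summand by $(k/(\delta \mu_k t))^{k/\delta}$ independently of $y(0)$, producing the announced estimate.

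The main obstacle I anticipate is not analytic but structural: one must confirm that the summation structure of $\cW_v^k$ leaves Jensen's inequality intact by merging $\{e^{-\kappa\,\mathrm{dist}(v,i)}/Z_v\}_{i \in V}$ with $\{\langle n_i, \rho n_i\rangle\}_{n_i}$ into a single probability distribution (this is exactly what \Cref{lem:jensens-inequality-moments-number-operator-multi} does), and one must verify that the initial-data-free upper bound obtained in the single-mode argument of \Cref{subsec:single-mode-regularizing} survives the multi-mode weighting without system-size-dependent deterioration, which it does thanks to the uniform control on $Z_v$ independent of $m$. Once these observations are recorded, the remainder of the appendix is a verbatim transcription of the single-mode proof.
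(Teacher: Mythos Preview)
Your proposal is correct and follows essentially the same approach as the paper's own proof: reduce to states in $\cT_f$, set $y(t)=\tr[\cW_v^k(\rho(t))]$, differentiate, apply the multi-mode Jensen inequality (\Cref{lem:jensens-inequality-moments-number-operator-multi}) with $p=k+\delta$, $q=k$, invoke \Cref{lem:gronwall-extension} with $a=\mu_k$, $b=c_k$, $p=1+\delta/k$, and then drop the $y(0)$-dependent term to obtain the initial-data-free bound. Your closing remarks on $Z_v$ are accurate context but are not actually needed in the argument, since the normalization already guarantees that the combined weights form a probability distribution and the final estimate is independent of $Z_v$.
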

    \begin{proof}
        Since $\cT_f$ is dense in $\cT_1$ and $e^{t\cL}$ is completely positive and trace-preserving, it suffices to prove \eqref{eq:sobolev-regularization-bound} for $\rho \in \cT_f$ a state. Let $\rho(t) = e^{t\cL}(\rho)$. Differentiating with respect to $t$ and applying \eqref{eq:differential-moment-inequality} yields
        \begin{equation*}
            \frac{d}{dt} \tr[\cW_v^{k}(\rho(t))] = \tr[\cW_v^{k}(\cL(\rho(t))) ]\le - \mu_k \, \tr[\cW_v^{k+\delta}(\rho(t))] + c_k \, .
        \end{equation*}
        By \Cref{appx-lem:jensens-inequality-moments-number-operator-multi} with $p = k + \delta$ and $q = k$, we obtain
        \begin{equation*}
            \frac{d}{dt} \tr[\cW_v^{k}(\rho(t))] \le - \mu_k \, \big( \tr[\cW_v^{k}(\rho(t))] \big)^{\frac{k + \delta}{k}} + c_k \, .
        \end{equation*}
        Setting $y(t) \coloneqq \tr[\cW_v^{k}(\rho(t))]$, we have
        \begin{equation*}
            y'(t) \le - \mu_k \, y(t)^{1 + \frac{\delta}{k}} + c_k \, .
        \end{equation*}
        This is of the form required by \Cref{lem:gronwall-extension} with parameters $a = \mu_k$, $b = c_k$, and $p = 1 + \frac{\delta}{k} > 1$. Applying that lemma gives
        \begin{equation*}
            y(t) \le \big(\max\{ y(0) - (\tfrac{c_k}{\mu_k})^{\frac{k}{k + \delta}}, 0 \}^{-\frac{\delta}{k}} + \tfrac{\delta}{k} \mu_k t \big)^{-\frac{k}{\delta}} + \left( \frac{c_k}{\mu_k} \right)^{\frac{k}{k + \delta}}\,.
        \end{equation*}
        Using the estimate
        \begin{equation*}
            \big(\max\{ y(0) - (\tfrac{c_k}{\mu_k})^{\frac{k}{k + \delta}}, 0 \}^{-\frac{\delta}{k}} + \tfrac{\delta}{k} \mu_k t \big)^{-\frac{k}{\delta}}\le \left( \frac{k}{\delta \mu_k t} \right)^{\frac{k}{\delta}} ,
            \end{equation*}
        we obtain the claimed bound \eqref{eq:sobolev-regularization-bound}.
    \end{proof}

\section{Perturbation theory}\label{appx:perturbation theory}
    In this section, we outline the details of \Cref{subsec:perturbation-theory}, which present various results in the direction of perturbation theory exploiting our $(k,v)$-Sobolev-regularization property. We start with \Cref{prop:steady-state-perturbation-bound}:
    \begin{prop}[Perturbation bound for steady states]\label{appx-prop:steady-state-perturbation-bound}
        Let $(\cL,\cD(\cL))$ and $(\cL+\cE,\cD(\cL+\cE))$ be generators of a $(k,v)$–Sobolev-regularizing QMS, where $(\cE,\cD(\cE))$ is an unbounded operator on $\cT_1(\cH^{\otimes m})$ satisfying
        \begin{equation*}
            \|\cE(\rho)\|_1\leq c_1\|\cW^{k}_v(\rho)\|+c_2\|\rho\|_1 = c_1\|\rho\|_{W^{k,1}_v}+c_2\|\rho\|_1\,.
        \end{equation*}
        If $e^{t\cL}$ converges to a steady state $\overline{\rho}\in\cT_1$, i.e.,
        \begin{equation*}
            \|e^{t\cL}(\rho)-\overline{\rho}\|_{1\rightarrow1} \leq \widetilde{C}\, t^{1-\frac{\widetilde{k}}{\delta}} e^{-\gamma t}\|\rho\|_1
        \end{equation*}
        for any state $\rho\in\cT_1$, $\widetilde{k}\in\N$, and $\delta>k$, then for any states $\rho,\sigma\in\cT_1$,
        \begin{equation*}
            \begin{aligned}
                \|e^{t\cL}(\sigma)-e^{t(\cL+\varepsilon \cE)}(\rho)\|_1\leq
                \begin{cases}
                    \|\sigma-\rho\|_1 + \varepsilon t^{1-\frac{k}{\delta}} \hat{C}_1 \|\rho\|_1, & t\leq 1,\\[4pt]
                    \widetilde{C} e^{-\gamma t}\|\sigma-\rho\|_1 + \varepsilon\, \hat{C}_2 \hat{C}_1 \|\rho\|_1, & t\geq 1,
                \end{cases}
            \end{aligned}
        \end{equation*}
        where $\hat{C}_1 = c_1 \left(\frac{k}{\delta \mu_k} \right)^{\frac{k}{\delta}} + c_1\left( \frac{c_k}{\mu_k} \right)^{\frac{k}{k + \delta}} + c_2$ and $\hat{C}_2 = \widetilde{C}\frac{e^{-\gamma}-e^{-t\gamma}}{\gamma} + 1$.
    \end{prop}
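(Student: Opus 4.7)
The strategy is a single triangle inequality,
\begin{equation*}
    \|e^{t(\cL+\varepsilon\cE)}(\rho)-e^{t\cL}(\sigma)\|_1 \le \|e^{t\cL}(\rho-\sigma)\|_1+\|e^{t(\cL+\varepsilon\cE)}(\rho)-e^{t\cL}(\rho)\|_1,
\end{equation*}
treated regime by regime. For $t\le 1$ the first piece is simply $\|\rho-\sigma\|_1$ by contractivity of $e^{t\cL}$, while the second piece is exactly the intermediate-time perturbation bound of \Cref{prop:intermediate-time-perturbation-bound}; using $t\le t^{1-k/\delta}$ on $[0,1]$ (valid because $\delta>k$) collapses its three summands into $\varepsilon t^{1-k/\delta}\hat C_1\|\rho\|_1$.

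For $t\ge 1$, the transport term $\|e^{t\cL}(\rho-\sigma)\|_1$ follows from the hypothesis: since $\tr(\rho-\sigma)=0$, a Hahn--Jordan decomposition expresses $\rho-\sigma$ as a difference of two positive operators of equal trace, so applying the stated convergence estimate to each normalized part and recombining yields $\widetilde C\,t^{1-\widetilde k/\delta}e^{-\gamma t}\|\rho-\sigma\|_1$; since $t\ge 1$ the polynomial prefactor can be absorbed into $\widetilde C$.

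The perturbation piece at $t\ge 1$ is handled via the Duhamel identity
\begin{equation*}
    e^{t(\cL+\varepsilon\cE)}(\rho)-e^{t\cL}(\rho)=\varepsilon\int_0^t e^{(t-s)\cL}\,\cE\, e^{s(\cL+\varepsilon\cE)}(\rho)\,ds,
\end{equation*}
split at $s=t-1$. On $[t-1,t]$ the outer propagator runs for time $\le 1$ and is contractive, so combining relative boundedness of $\cE$ with the $(k,v)$-Sobolev regularization of $\cL+\varepsilon\cE$ bounds this contribution by $\varepsilon\hat C_1\|\rho\|_1$ and produces the additive $+1$ in $\hat C_2$. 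On $[0,t-1]$ the outer propagator runs for time $t-s\ge 1$ and acts on $\cE e^{s(\cL+\varepsilon\cE)}(\rho)$, which is traceless and Hermitian because both $\cL+\varepsilon\cE$ and $\cL$ preserve the trace; the extended exponential convergence then contributes $\widetilde C e^{-\gamma(t-s)}$, and integrating this against the same $\hat C_1$-type Sobolev bound on $\|\cE e^{s(\cL+\varepsilon\cE)}(\rho)\|_1$ yields $\widetilde C\hat C_1 (e^{-\gamma}-e^{-\gamma t})/\gamma$. Assembling both pieces gives $\varepsilon \hat C_2\hat C_1\|\rho\|_1$.

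I anticipate two main obstacles. First, extending the exponential convergence from states to traceless Hermitians (needed both for the transport piece and for the outer $e^{(t-s)\cL}$ inside the large-time Duhamel integral) must preserve the stated constants; the Hahn--Jordan step achieves this cleanly because $\tr(\eta_+)=\tr(\eta_-)$ for traceless $\eta$. Second, the bound on $\|e^{s(\cL+\varepsilon\cE)}(\rho)\|_{W^{k,1}_v}$ from \Cref{thm:sobolev-regularization-multi-mode} carries an $s^{-k/\delta}$ singularity at $s=0$, which must be shown integrable on $[0,t-1]$ against $e^{-\gamma(t-s)}$ and dominated by the $\hat C_1$ profile; here the assumption $\delta>k$ is exactly what rescues integrability. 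Bochner measurability and the extension from $\cT_f$ to $\cT_1$ are routine and follow verbatim from \Cref{prop:intermediate-time-perturbation-bound}.
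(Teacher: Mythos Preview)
Your proposal is correct and follows essentially the same route as the paper: the same triangle inequality, the same Hahn--Jordan reduction of the transport term to the state-level convergence hypothesis, and the same Duhamel identity split at outer-propagator time $1$, with the $(k,v)$-Sobolev regularization of the perturbed semigroup controlling $\|\cE\,e^{s(\cL+\varepsilon\cE)}(\rho)\|_1$ and the exponential decay of $e^{(t-s)\cL}$ handling the long-time tail. The only cosmetic difference is that the paper writes the Duhamel integrand as $e^{s\cL}\cE\,e^{(t-s)(\cL+\varepsilon\cE)}$ and splits at $s=1$, which is your formula after the substitution $s\mapsto t-s$; your two anticipated obstacles (extending the decay to traceless self-adjoint inputs, and integrability of the $s^{-k/\delta}$ singularity via $\delta>k$) are exactly the points the paper addresses.
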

    \begin{proof}
        Similar to the proof of Theorem 6 in \cite{Szehr.2013}, we first apply the fundamental theorem of calculus:
        \begin{equation*}
            \|e^{t\cL}(\sigma)-e^{t(\cL+\varepsilon \cE)}(\rho)\|_1 \leq \|e^{t\cL}(\sigma-\rho)\|_1 + \varepsilon \int_0^t \|e^{s\cL}\cE\, e^{(t-s)(\cL+\varepsilon \cE)}(\rho)\|_1 \, ds .
        \end{equation*}
        Note that the Bochner integral is well-defined by the same argument given in the proof of \Cref{prop:intermediate-time-perturbation-bound}. We then split the integral into the cases $0 \le t \le t^*$ and $1 \le t^* \le t$ for $t^*>0$. For the case $0 \le t \le t^*$, we first use the contractivity of QMS, then the relative boundedness assumption, and finally the $(k,v)$-Sobolev regularization:
        \begin{equation*}
            \begin{aligned}
                \|e^{t\cL}(\sigma)-e^{t(\cL+\varepsilon \cE)}(\rho)\|_1
                &\le \|\sigma-\rho\|_1 + \varepsilon \int_0^t \|\cE\, e^{(t-s)(\cL+\varepsilon \cE)}(\rho)\|_1 \, ds \\
                &\le \|\sigma-\rho\|_1 + \varepsilon \int_0^t \Bigl(c_1\|e^{(t-s)(\cL+\varepsilon \cE)}(\rho)\|_{W_v^{k,1}}
                + c_2\|\rho\|_1\Bigr) ds \\
                &\le \|\sigma-\rho\|_1 + \varepsilon \Bigl(\int_{0}^{t} c_1\left(\frac{k}{\delta \mu_k (t-s)}\right)^{\frac{k}{\delta}} ds + c_1 t \left(\frac{c_k}{\mu_k}\right)^{\frac{k}{k+\delta}}
                + c_2 t\Bigr) \|\rho\|_1 \\
                &\le \|\sigma-\rho\|_1 + \varepsilon \Bigl(c_1 t^{1-\frac{k}{\delta}}\left(\frac{k}{\delta \mu_k}\right)^{\frac{k}{\delta}}
                + c_1 t \left(\frac{c_k}{\mu_k}\right)^{\frac{k}{k+\delta}}
                + c_2 t \Bigr)\|\rho\|_1 .
            \end{aligned}
        \end{equation*}
        Next, we consider the case $1 \le t^* \le t$. Before analyzing this case, we recall the following bound also used in \cite{Szehr.2013}. For a traceless self-adjoint operator with trace norm $1$, the spectral decomposition yields
        \begin{equation*}
            \sigma = \frac{1}{2}(\sigma^+ - \sigma^-)
        \end{equation*}
        for two states $\sigma^+$ and $\sigma^-$. Then
        \begin{equation*}
            \|e^{t\cL}(\sigma)\|_1 = \frac{1}{2}\Bigl(
            \|e^{t\cL}(\sigma^+) - \cP(\sigma^+)\|_1
            + \|e^{t\cL}(\sigma^-) - \cP(\sigma^-)\|_1\Bigr)\le Ce^{-\gamma t}
        \end{equation*}
        implies that for $1 \le t^* \le t$,
        \begin{equation*}
            \begin{aligned}
                \|e^{t\cL}&(\sigma)-e^{t(\cL+\varepsilon \cE)}(\rho)\|_1\\
                &\le \|e^{t\cL}(\sigma-\rho)\|_1 + \varepsilon \Bigl(\int_0^{t^*} \|e^{s\cL}\cE\, e^{(t-s)(\cL+\varepsilon\cE)}(\rho)\|_1 ds + \int_{t^*}^t \|e^{s\cL}\cE\, e^{(t-s)(\cL+\varepsilon\cE)}(\rho)\|_1 ds\Bigr) \\
                &\le \widetilde{C}\, t^{1-\frac{\widetilde{k}}{\delta}} e^{-\gamma t}\|\sigma-\rho\|_1 + \varepsilon \Bigl(
                c_1 (t^*)^{1-\frac{k}{\delta}} \left(\frac{k}{\delta \mu_k}\right)^{\frac{k}{\delta}}
                + c_1 t^*\left(\frac{c_k}{\mu_k}\right)^{\frac{k}{k+\delta}}
                + c_2 t^*\Bigr)\|\rho\|_1 \\
                &\quad + \varepsilon \int_{t^*}^t
                \widetilde{C}\, s^{1-\frac{\widetilde{k}}{\delta}} e^{-s\gamma}\Bigl(c_1\left(\frac{k}{\delta \mu_k s}\right)^{\frac{k}{\delta}} + c_1\left(\frac{c_k}{\mu_k}\right)^{\frac{k}{k+\delta}} + c_2\Bigr)\|\rho\|_1 ds \\
                &\le \widetilde{C}\, t^{1-\frac{\widetilde{k}}{\delta}} e^{-\gamma t}\|\sigma-\rho\|_1 + \varepsilon \Bigl(
                c_1 (t^*)^{1-\frac{k}{\delta}} \left(\frac{k}{\delta \mu_k}\right)^{\frac{k}{\delta}}
                + c_1 t^*\left(\frac{c_k}{\mu_k}\right)^{\frac{k}{k+\delta}}
                + c_2 t^*\Bigr)\|\rho\|_1 \\
                &\quad + \varepsilon \int_{t^*}^t
                \widetilde{C}\, e^{-s\gamma}
                \Bigl(c_1\left(\frac{k}{\delta \mu_k}\right)^{\frac{k}{\delta}} + c_1\left(\frac{c_k}{\mu_k}\right)^{\frac{k}{k+\delta}} + c_2
                \Bigr)\|\rho\|_1 ds \\
                &\le \widetilde{C}\, t^{1-\frac{\widetilde{k}}{\delta}} e^{-\gamma t}\|\sigma-\rho\|_1 + \varepsilon \Bigl(
                c_1 (t^*)^{1-\frac{k}{\delta}} \left(\frac{k}{\delta \mu_k}\right)^{\frac{k}{\delta}} + c_1 t^*\left(\frac{c_k}{\mu_k}\right)^{\frac{k}{k+\delta}} + c_2 t^*\Bigr)\|\rho\|_1 \\
                &\quad + \varepsilon\, \widetilde{C}\,
                \frac{e^{-t^*\gamma} - e^{-t\gamma}}{\gamma}
                \Bigl(c_1\left(\frac{k}{\delta \mu_k}\right)^{\frac{k}{\delta}}+ c_1\left(\frac{c_k}{\mu_k}\right)^{\frac{k}{k+\delta}}+ c_2\Bigr)\|\rho\|_1 .
            \end{aligned}
        \end{equation*}
        For simplicity, we choose $t^* = 1$ and by that $t^{1-\frac{\widetilde{k}}{\delta}}\leq1$ for $t\leq t^*$, yielding
        \begin{equation*}
            \begin{aligned}
                \|e^{t\cL}(\sigma)-e^{t(\cL+\varepsilon \cE)}(\rho)\|_1\le \widetilde{C} e^{-\gamma t}\|\sigma-\rho\|_1 + \varepsilon \hat{C}\Bigl(c_1\left(\frac{k}{\delta \mu_k}\right)^{\frac{k}{\delta}} + c_1\left(\frac{c_k}{\mu_k}\right)^{\frac{k}{k+\delta}} + c_2 \Bigr)\|\rho\|_1 ,
            \end{aligned}
        \end{equation*}
        where $\hat{C} = \widetilde{C}\,\frac{e^{-\gamma} - e^{-t\gamma}}{\gamma} + 1$, which completes the proof.
    \end{proof}

    Next, we prove \Cref{prop:invariant-subset-perturbation-bound} in detail. 
    \begin{prop}[Perturbation bound for invariant subsets]\label{appx-prop:invariant-subset-perturbation-bound}
        Let $(\cL,\cD(\cL))$ and $(\cL+\cE,\cD(\cL+\cE))$ be generators of a $(k,v)$–Sobolev-regularizing QMS, where $(\cE,\cD(\cE))$ is an unbounded operator on $\cT_1(\cH^{\otimes m})$ satisfying
        \begin{equation*}
            \|\cE(\rho)\|_1\leq c_1\|\cW^{k}_v(\rho)\|+c_2\|\rho\|_1 = c_1\|\rho\|_{W^{k,1}_v}+c_2\|\rho\|_1\,.
        \end{equation*}
        If $e^{t\cL}$ converges to an invariant subset with contractive projection $\cP$ (with $\cP^\perp=\1-\cP$), i.e.,
        \begin{equation*}
            \|e^{t\cL}(\rho)-\cP(\rho)\|_{1}\leq \widetilde{C}\, t^{1-\frac{\widetilde{k}}{\delta}} e^{-\gamma t}\|\rho\|_1
        \end{equation*}
        for some $\widetilde{k}\in\N$ and $\delta>k$, then for any states $\rho,\sigma\in\cT_1$,
        \begin{equation*}
            \begin{aligned}
                \|\cP^\perp e^{t\cL}(\sigma)-\cP^\perp e^{t(\cL+\varepsilon \cE)}(\rho)\|_1\leq
                \begin{cases}
                    \|\cP^\perp(\sigma-\rho)\|_1 + \varepsilon t^{1-\frac{k}{\delta}} \hat{C}_1\|\rho\|_1, & t\leq 1,\\[4pt]
                    \widetilde{C} e^{-\gamma t}\|\cP^\perp(\sigma-\rho)\|_1+ \varepsilon\, \hat{C}_2 \hat{C}_1\|\rho\|_1, & t\geq 1\,,
                \end{cases}
            \end{aligned}
        \end{equation*}
        where $\hat{C}_1 = c_1 \left(\frac{k}{\delta \mu_k} \right)^{\frac{k}{\delta}} + c_1\left( \frac{c_k}{\mu_k} \right)^{\frac{k}{k + \delta}} + c_2$ and $\hat{C}_2 = 2\widetilde{C}\frac{e^{-\gamma}-e^{-t\gamma}}{\gamma} + 1$.
    \end{prop}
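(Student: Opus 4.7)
The plan is to adapt the proof of \Cref{appx-prop:steady-state-perturbation-bound} by inserting the projector $\cP^\perp$ throughout and exploiting the codespace invariance, which manifests as the commutation $\cP e^{t\cL} = e^{t\cL}\cP$ and the orthogonality $\cP\cP^\perp = 0$. I would start from the Duhamel identity
\begin{equation*}
e^{t\cL}(\sigma) - e^{t(\cL+\varepsilon\cE)}(\rho) = e^{t\cL}(\sigma-\rho) - \varepsilon\int_0^t e^{s\cL}\,\cE\,e^{(t-s)(\cL+\varepsilon\cE)}(\rho)\,ds,
\end{equation*}
apply $\cP^\perp$ from the left, take the trace norm, and use the triangle inequality to split into a homogeneous term and an integral term that are then estimated by direct analogues of the corresponding steps in the steady-state proof, with $\cP^\perp$-projected quantities replacing the unprojected ones.

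For the homogeneous piece, the rewriting $\cP^\perp e^{t\cL}(\sigma-\rho) = (e^{t\cL}-\cP)\cP^\perp(\sigma-\rho)$ allows contractivity of $e^{t\cL}$ (giving $\|\cP^\perp(\sigma-\rho)\|_1$ on $t\le 1$) and the exponential-convergence hypothesis (giving $\widetilde{C}\,t^{1-\widetilde{k}/\delta}e^{-\gamma t}\|\cP^\perp(\sigma-\rho)\|_1$ on $t\ge 1$) to be applied directly to $\cP^\perp(\sigma-\rho)$. For the integral term, I split at $t^\ast=1$: on $[0,t\wedge 1]$ the relative boundedness of $\cE$ together with \Cref{thm:sobolev-regularization-multi-mode} yields the $t^{1-k/\delta}\hat{C}_1$ contribution, exactly as in \Cref{prop:intermediate-time-perturbation-bound}; on $[1,t]$ the key new step is the rewriting $\cP^\perp e^{s\cL}\eta = (e^{s\cL}-\cP)\cP^\perp\eta$, from which the convergence hypothesis and $\|\cP^\perp\|_{1\to 1}\le 2$ give
\begin{equation*}
\|\cP^\perp e^{s\cL}\eta\|_1 \leq \widetilde{C}\,s^{1-\widetilde{k}/\delta}e^{-s\gamma}\|\cP^\perp\eta\|_1 \leq 2\widetilde{C}\,s^{1-\widetilde{k}/\delta}e^{-s\gamma}\|\eta\|_1.
\end{equation*}
Combined with the uniform Sobolev-regularization bound for $e^{(t-s)(\cL+\varepsilon\cE)}(\rho)$ on $s\in[1,t]$ and integration against the exponential, this contributes $\varepsilon\cdot 2\widetilde{C}\,\tfrac{e^{-\gamma}-e^{-t\gamma}}{\gamma}\hat{C}_1\|\rho\|_1$; adding the $[0,1]$ piece produces the advertised $\hat{C}_2 = 2\widetilde{C}\,\tfrac{e^{-\gamma}-e^{-t\gamma}}{\gamma}+1$.

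The hardest part is the careful bookkeeping of the projector: the factor $2$ in $\hat{C}_2$ (compared to the $1$ in $\hat{C}_2$ of \Cref{prop:steady-state-perturbation-bound}) is traceable directly to $\|\cP^\perp\|_{1\to 1}\le 2$, and it cannot be avoided on $[1,t]$ because $\cE\, e^{(t-s)(\cL+\varepsilon\cE)}(\rho)$ is a generic element of $\cT_1$ that need not be annihilated by $\cP$; in the steady-state case it was, because $\cE\eta$ is traceless and $\cP(\cdot)=\tr(\cdot)\,\overline{\rho}$. Two secondary verifications are needed: the commutation $\cP e^{t\cL} = e^{t\cL}\cP$ must be extended from a core to all of $\cT_1$ (which follows from the invariance hypothesis and closure of the generated semigroup), and $\cP^\perp$ must preserve $W^{k,1}_v$ so that the Sobolev regularization remains applicable after projection — a condition that holds in the bosonic-code applications since the codespace projector respects the photon-number grading.
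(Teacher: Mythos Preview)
Your approach is essentially the paper's: Duhamel, split at $t^\ast=1$, contractivity/convergence for the homogeneous piece, relative boundedness plus Sobolev regularization for the integral, with the factor $2$ in $\hat{C}_2$ coming from $\|\cP^\perp\|_{1\to1}\le 2$. Two small points of divergence are worth noting. First, the paper does not invoke the commutation $\cP e^{t\cL}=e^{t\cL}\cP$ you rely on; instead it uses the simpler identity $\cP^\perp e^{s\cL}\eta=\cP^\perp(e^{s\cL}-\cP)\eta$ (which needs only $\cP^\perp\cP=0$), decomposes the traceless self-adjoint $\eta$ as $\tfrac12(\eta^+-\eta^-)$ with $\eta^\pm$ states, applies the convergence hypothesis to each state, and then absorbs $\|\cP^\perp\|\le 2$ --- so the extra verification of commutation you flag is not actually required. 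Second, your closing concern that $\cP^\perp$ must preserve $W^{k,1}_v$ is misplaced: in the integral term the projector acts \emph{after} $e^{s\cL}\cE$, so it only ever sees trace-class elements, while the Sobolev-regularization bound is applied to $e^{(t-s)(\cL+\varepsilon\cE)}(\rho)$ before any projection enters.
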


    \begin{proof}
        For the case $t \leq 1$, we follow the proof of \Cref{prop:steady-state-perturbation-bound}, which shows
        \begin{equation*}
            \begin{aligned}
                \|\cP^{\perp} e^{t\cL}(\sigma)-\cP^{\perp} e^{t(\cL+\varepsilon \cE)}(\rho)\|_1
                &\leq \|\cP^{\perp}(\sigma-\rho)\|_1 + \varepsilon\, 2\Bigl(c_1 t^{1-\frac{k}{\delta}}\left(\frac{k}{\delta \mu_k}\right)^{\frac{k}{\delta}} + c_1 t \left(\frac{c_k}{\mu_k}\right)^{\frac{k}{k+\delta}} + c_2 t\Bigr)\|\rho\|_1 .
            \end{aligned}
        \end{equation*}
        For the case $1 \leq t^{*} \leq t$, we again follow the proof of \Cref{prop:steady-state-perturbation-bound}, but now directly apply
        \begin{equation*}
            \|\cP^{\perp} e^{t\cL}(\sigma)\|_{1\rightarrow 1} = \frac{1}{2}\|\cP^{\perp}(e^{t\cL}-\cP)(\sigma^{+}-\sigma^{-})\|_{1\rightarrow 1}\leq 2\widetilde{C}\, t^{1-\frac{\widetilde{k}}{\delta}} e^{-\gamma t}\, \|\sigma\|_1 ,
        \end{equation*}
        for $\sigma$ as described in \Cref{prop:steady-state-perturbation-bound}.

        Combining these bounds with the argument in the proof of \Cref{prop:steady-state-perturbation-bound}, we obtain
        \begin{equation*}
            \begin{aligned}
                \|e^{t\cL}(\sigma)-e^{t(\cL+\varepsilon \cE)}(\rho)\|_1\leq\;&\widetilde{C} e^{-\gamma t}\|\sigma-\rho\|_1 + \varepsilon\, \hat{C}\Bigl(c_1\left(\frac{k}{\delta \mu_k}\right)^{\frac{k}{\delta}} + c_1 \left(\frac{c_k}{\mu_k}\right)^{\frac{k}{k+\delta}} + c_2\Bigr)\|\rho\|_1\,,
            \end{aligned}
        \end{equation*}
        where $\hat{C} = 2\widetilde{C}\,\frac{e^{-\gamma}-e^{-t\gamma}}{\gamma} + 1$. This completes the proof.
    \end{proof}

\section{Technical results for the applications}\label{appx:technical-bounds-application}
    Before diving into the proofs of different applications, we first recall, for completeness, some standard tools used to establish the following moment stability bounds (see \cite{Gondolf.2024, Moebus.2024Learning, Moebus.2025thesis}).

    First, we begin with the commutation relations between the number operator $N$ and $a$ or $a^\dagger$, which follow directly from the CCR and the definitions.
    \begin{lem}[\texorpdfstring{\cite[Eq.~17/18]{Moebus.2024Learning}}{???}]\label{appx-lem:commutation-relation}
        Let $f:\mathbb{N}\to\mathbb{R}$ be a real-valued function. Then,
        \begin{equation*}
            \begin{aligned}
                af(N+j\1)&=f(N + (j+1)\1)a,\quad&\quad a^\dagger\,1_{>j}f(N-j\1)&=f(N - (j+1)\1)a^\dagger\,1_{>j}\,,\\
                f(N-j\1)a\,1_{>j}&=af(N - (j+1)\1)1_{>j},\quad&\quad f(N+j\1)a^\dagger&=a^\dagger f(N + (j+1)\1)
            \end{aligned}
        \end{equation*}
        and 
        \begin{align*}
            (a^\dagger)^\ell a^\ell &= (N-(\ell-1)\1)(N-(\ell-2)\1)\cdots(N-\1)N,\\
            a^\ell (a^\dagger)^\ell &= (N+\1)(N+2\1)\cdots(N+(\ell-1)\1)(N+\ell\1)\,.
        \end{align*}
    \end{lem}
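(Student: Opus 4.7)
The plan is to verify these identities directly on the Fock basis $\{\ket{n}\}_{n \in \mathbb{N}_0}$, which is an invariant core for all operators appearing in the statement. The crucial observation is that $N$ is diagonal in this basis with $N\ket{n} = n\ket{n}$, so the functional calculus is trivial: $f(N + j\1)\ket{n} = f(n+j)\ket{n}$ for any $f : \mathbb{N} \to \mathbb{R}$ (extended to $\mathbb{Z}$ in the natural way when restricted to states where the argument is nonnegative, which is precisely the role of the indicators $1_{>j}$).

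For the first block of four identities, I would simply apply both sides to an arbitrary basis vector $\ket{n}$ and compare. For instance, for $a f(N+j\1) = f(N+(j+1)\1)\,a$, the left-hand side yields $a f(n+j)\ket{n} = \sqrt{n}\,f(n+j)\ket{n-1}$, while the right-hand side gives $f(N+(j+1)\1)\sqrt{n}\ket{n-1} = \sqrt{n}\,f((n-1)+(j+1))\ket{n-1} = \sqrt{n}\,f(n+j)\ket{n-1}$. The other three identities are verified analogously; the indicator $1_{>j}$ appears in the two identities involving $N - j\1$ to restrict to basis vectors $\ket{n}$ with $n > j$, ensuring the argument of $f$ stays in $\mathbb{N}$ and that the action of $a^\dagger 1_{>j}$ (resp.\ $a\,1_{>j}$) is well defined.

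For the second block, the two product formulas for $(a^\dagger)^\ell a^\ell$ and $a^\ell (a^\dagger)^\ell$, I would again evaluate both sides on $\ket{n}$. Using the standard identities $a^\ell\ket{n} = \sqrt{n!/(n-\ell)!}\,\ket{n-\ell}$ (for $n \ge \ell$, otherwise zero) and $(a^\dagger)^\ell\ket{n} = \sqrt{(n+\ell)!/n!}\,\ket{n+\ell}$, composition yields the scalar $n(n-1)\cdots(n-\ell+1)$ in the first case and $(n+1)(n+2)\cdots(n+\ell)$ in the second, which are precisely the eigenvalues of $(N-(\ell-1)\1)\cdots(N-\1)N$ and $(N+\1)(N+2\1)\cdots(N+\ell\1)$ on $\ket{n}$, respectively. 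Alternatively, one may proceed by induction on $\ell$ using the first block of commutation identities, repeatedly pushing an $a$ past an $a^\dagger$ and shifting the argument of the resulting polynomial in $N$.

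There is no real obstacle here: both blocks reduce to direct computation on the Fock basis, where everything is diagonal and the definition of $f(N+j\1)$ via spectral/functional calculus is unambiguous. The only point requiring minimal care is the bookkeeping of the indicators $1_{>j}$, which make the identities valid as operator equalities on the dense core $\cT_f$ rather than only on a subspace; once an identity is established on every $\ket{n}$, it extends by linearity to $\cT_f$ and by the closedness statements recorded in \Cref{sec:prelim} to the appropriate closures used elsewhere in the paper.
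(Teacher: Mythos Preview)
Your proposal is correct and matches the paper's treatment: the paper simply states that these identities ``follow directly from the CCR and the definitions'' (citing \cite{Moebus.2024Learning}) without giving any further argument, and your Fock-basis verification is precisely the natural way to make that statement explicit. There is nothing to add.
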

    Next, we define the following difference, which is in many calculations the key step to achieve the result: For $\ell\in\N$, $k>0$ and $f(x)=(x+1)^{k} 1_{x\ge -1}$, we define
    \begin{equation}\label{appx-eq:difference}
        g_\ell(x) = \begin{cases}
            f(x) - f(x - \ell) & x \ge \ell-1;\\
            f(x) & \ell-1 > x \ge 0;\\
            0 & 0 > x\,.
    	\end{cases}
    \end{equation}
    \begin{lem}[\texorpdfstring{\cite[Lem.~C.1]{Gondolf.2024}}{???}]\label{appx-lem:monotonicity-g}
        Let $g_\ell$ be defined in \Cref{appx-eq:difference} for $\ell\in\N$. Then, for all $k\geq2$ and $x\in\R$
        \begin{align*}
            g_\ell(x)&\leq g_{\ell+1}(x)\,,\\
            g_\ell(x-\ell)&\leq g_\ell(x)\,.
        \end{align*}
    \end{lem}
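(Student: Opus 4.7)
The plan is to prove both inequalities of \Cref{appx-lem:monotonicity-g} by case analysis on the position of $x$ relative to the breakpoints of the piecewise definition in \eqref{appx-eq:difference}, exploiting monotonicity and convexity of $f(x) = (x+1)^k 1_{x \ge -1}$ as the only analytic inputs. For $k \ge 2$, $f$ is non-negative, strictly increasing, and convex on $[-1,\infty)$, since $f''(x) = k(k-1)(x+1)^{k-2} \ge 0$ there.

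For the first inequality $g_\ell(x) \le g_{\ell+1}(x)$, the regions $x < 0$ and $0 \le x < \ell - 1$ are immediate: both sides vanish in the former case and both equal $f(x)$ in the latter (because $\ell - 1 < \ell$, the intermediate regime of $g_{\ell+1}$ contains that of $g_\ell$). For $x \ge \ell$, both functions fall into the first case of the definition, so the claim reduces to $f(x - (\ell+1)) \le f(x - \ell)$, which is monotonicity of $f$ on $[-1,\infty)$. At the boundary $x = \ell - 1$, both sides equal $\ell^k$, using $f(-1) = 0$.

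For the second inequality $g_\ell(x - \ell) \le g_\ell(x)$, the case $x < \ell - 1$ reduces to $0 \le g_\ell(x)$, since $x - \ell < -1$ forces $g_\ell(x - \ell) = 0$. For $x \ge 2\ell - 1$, both arguments fall into the first case of the definition, and the inequality becomes $2 f(x - \ell) \le f(x) + f(x - 2\ell)$; since $x - 2\ell \ge -1$, this is exactly convexity of $f$ at the midpoint $x - \ell = \tfrac{1}{2}(x + (x - 2\ell))$. The intermediate range $\ell - 1 \le x < 2\ell - 1$ splits once more: if $x - \ell < 0$, then $g_\ell(x - \ell) = 0$ and the inequality is trivial, while if $0 \le x - \ell < \ell - 1$, the claim reduces to $2(x - \ell + 1)^k \le (x + 1)^k$. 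Setting $y = x - \ell + 1 \in [1, \ell)$ gives $x + 1 = y + \ell \ge 2y$, so $(x+1)^k \ge (2y)^k \ge 2 y^k$, closing the case.

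The main obstacle is mostly bookkeeping: one must handle the boundary values $x \in \{0, \ell - 1, 2\ell - 1\}$ consistently with the piecewise definition of $g_\ell$. All analytic content boils down to monotonicity and convexity of $f$ on $[-1,\infty)$, supplemented by the elementary estimate $y + \ell \ge 2y$ when $y \le \ell$.
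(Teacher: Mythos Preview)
The paper does not supply its own proof of this lemma; it simply cites \cite[Lem.~C.1]{Gondolf.2024}. Your argument is the natural one --- case analysis on the piecewise definition of $g_\ell$, with monotonicity and convexity of $f(x)=(x+1)^k\mathbf{1}_{x\ge -1}$ as the only analytic inputs --- and is essentially correct.

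There is, however, one small bookkeeping omission in the first inequality. Your case split for $g_\ell(x)\le g_{\ell+1}(x)$ covers $x<0$, $0\le x<\ell-1$, the single point $x=\ell-1$, and $x\ge\ell$, but leaves the open interval $\ell-1<x<\ell$ untreated (recall $x\in\mathbb{R}$, not just integers). On that interval $g_\ell$ is already in its first case, $g_\ell(x)=f(x)-f(x-\ell)$, while $g_{\ell+1}$ is still in its middle case, $g_{\ell+1}(x)=f(x)$; the inequality then reduces to $f(x-\ell)\ge 0$, which is immediate since $x-\ell\in(-1,0)$. With that one-line addition the proof is complete.
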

    Beyond monotonicity, the functions approaches the following lower and upper bounds.
    \begin{lem}[\texorpdfstring{\cite[Lem.~C.2]{Gondolf.2024}}{???}]\label{appx-lem:upper-lower-bound-g}
        Let $g_\ell:\R\rightarrow\R_{\geq0}$ be defined in \Cref{appx-eq:difference} for $\ell\in\N$. Then, for all $x\in\R$ and $k\geq1$,
        \begin{equation*}
		      \begin{rcases}
                x\geq \ell-1 & (x+1)^{k-1}\ell\\
                \ell-1>x\geq0&(x+1)^{k}\\
                0>x&0
            \end{rcases}
            \leq g_\ell(x)
        \end{equation*}
        and 
        \begin{equation*}
            g_\ell(x)\leq 
            \begin{cases}
                \frac{k\ell}{2}\left(1+1_{k=1}\right)(x+1)^{k-1} & x\geq 0\\
                (x+1)^{k}&x\geq0 \\
                0 & 0>x
            \end{cases}\,.
        \end{equation*}
    \end{lem}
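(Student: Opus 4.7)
The plan is to proceed by case analysis on the three pieces of the piecewise definition of $g_\ell$. The outer cases are immediate: for $x < 0$, $g_\ell(x) = 0$ matches every bound; for $0 \le x < \ell - 1$, $g_\ell(x) = (x+1)^k$ gives the lower bound with equality, while the upper bound follows since $x + 1 < \ell$ combined with $k \ge 1$ yields $(x+1)^k \le \ell(x+1)^{k-1}$, which is absorbed into the stated constant $\frac{k\ell}{2}(1+\mathbf{1}_{k=1})(x+1)^{k-1}$.

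The principal work concerns $x \ge \ell - 1$, where $g_\ell(x) = (x+1)^k - (x-\ell+1)^k$. For the lower bound $g_\ell(x) \ge \ell(x+1)^{k-1}$, I would introduce $u = x+1 \ge \ell$ and $t = (u-\ell)/u \in [0,1)$, which rewrites the claim as $u^k(1 - t^k) \ge u^k(1 - t)$, i.e., $1 - t^k \ge 1 - t$; this collapses to the elementary fact $t \ge t^k$ for $t \in [0,1]$ and $k \ge 1$. For the upper bound, the integral representation $g_\ell(x) = k \int_{x-\ell+1}^{x+1} y^{k-1}\, dy$ together with monotonicity of $y \mapsto y^{k-1}$ on $[0,\infty)$ for $k \ge 1$ yields the coarse estimate $k\ell (x+1)^{k-1}$. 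The sharper prefactor $\frac{k\ell}{2}(1+\mathbf{1}_{k=1})$ should then be recovered via a convexity-based midpoint refinement for $k \ge 2$, together with a direct check of the linear case $k = 1$ where the integrand is constant and the integral evaluates to exactly $\ell$, which explains the role of the indicator.

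The main obstacle is pinning down the precise prefactor $\frac{k\ell}{2}$ rather than the coarser $k\ell$: the naive endpoint bound on $y^{k-1}$ overshoots by roughly a factor of two for $k \ge 2$, and tightening it requires either evaluating the mean-value point at the midpoint $u - \ell/2$ and exploiting the convexity of $y \mapsto y^{k-1}$, or equivalently applying a trapezoidal-type estimate that averages the endpoint values. The indicator $\mathbf{1}_{k=1}$ is then precisely what patches the degenerate linear case where the refinement collapses, and the final bookkeeping consists of checking that the bound in the middle subcase $0 \le x < \ell - 1$ is consistent with the principal bound at the crossover point $x = \ell - 1$.
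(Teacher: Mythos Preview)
Your case analysis and the arguments for $x<0$, for $0\le x<\ell-1$, and for the lower bound when $x\ge\ell-1$ are all correct and are the natural way to do this. The paper itself does not prove the lemma here but merely cites \cite[Lem.~C.2]{Gondolf.2024}, so there is no in-paper argument to compare against beyond checking your reasoning.

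The genuine gap is your proposed ``convexity-based midpoint refinement'' for upgrading the constant $k\ell$ to $\tfrac{k\ell}{2}$ in the regime $x\ge\ell-1$, $k\ge2$. This cannot work: for $k\ge2$ the integrand $y\mapsto y^{k-1}$ is convex, so the midpoint rule produces a \emph{lower} bound on $\int_{x-\ell+1}^{x+1}y^{k-1}\,dy$, not an upper one, and the trapezoidal rule, after bounding both endpoints by $(x+1)^{k-1}$, returns only $k\ell(x+1)^{k-1}$. More to the point, the constant $\tfrac{k\ell}{2}$ is not achievable at all. Take $k=2$, $\ell=2$, $x=2$: then
\[
g_2(2)=(2+1)^2-(2-2+1)^2=9-1=8,\qquad \tfrac{k\ell}{2}(x+1)^{k-1}=2\cdot 3=6,
\]
so the claimed bound $g_\ell(x)\le\tfrac{k\ell}{2}(x+1)^{k-1}$ fails for $x>\ell-1$ already in the simplest case. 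Since $(x+1)^k-(x-\ell+1)^k\sim k\ell\,x^{k-1}$ as $x\to\infty$, your integral estimate with constant $k\ell$ is in fact asymptotically sharp. The first upper-bound line in the lemma as stated here therefore appears to carry a transcription error relative to the cited source; the bound you can actually prove is $g_\ell(x)\le k\ell(x+1)^{k-1}$ for $x\ge0$, and this (together with the trivial bound $g_\ell(x)\le (x+1)^k$) is what the downstream application in the proof of \Cref{appx-thm:multimode_photon_diss} really uses.
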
    
    Next, we recall the following upper and lower bounds on increasing product formulas.
    \begin{lem}[\texorpdfstring{\cite[Lem.~C.3]{Gondolf.2024}}{???}]\label{appx-lem:bounds-ccr-l-product}
        Let $\ell \in \mathbb{N}$ and $x \geq \ell$. Then,
        \begin{equation*}
            \begin{aligned}
                (x+1)^\ell-\frac{(\ell+1)\ell}{2}(x+1)^{\ell-1} &\leq ((x+1)-\ell)\cdots ((x+1)-1) \leq (x+1)^\ell,\\
                (x+1)^\ell &\leq (x+1)\cdots (x+1+\ell-1) \leq \ell!(x+1)^\ell.
            \end{aligned}
        \end{equation*}
    \end{lem}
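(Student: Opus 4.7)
The plan is to treat the two inequality chains separately, factoring out $(x+1)^\ell$ in each case and controlling the remaining product with an elementary one-term Bernoulli-type estimate. The hypothesis $x \geq \ell$ is used precisely to keep all factors non-negative and, in particular, to keep the relative deviations $j/(x+1)$ inside $[0,1]$ so that Bernoulli is applicable.

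For the first chain, I would rewrite
\begin{equation*}
    ((x+1)-\ell)\cdots((x+1)-1) = (x+1)^\ell \prod_{j=1}^{\ell}\Bigl(1 - \tfrac{j}{x+1}\Bigr).
\end{equation*}
The upper bound is then immediate because every factor lies in $[0,1]$. For the matching lower bound, I would invoke the weak Bernoulli inequality $\prod_{j=1}^\ell(1-a_j) \geq 1 - \sum_{j=1}^\ell a_j$ for $a_j \in [0,1]$, which is a one-line induction via $(1-S)(1-a) = 1 - S - a + Sa \geq 1 - S - a$. Plugging in $a_j = j/(x+1)$, which is admissible because $x+1 \geq \ell \geq j$, yields
\begin{equation*}
    \prod_{j=1}^\ell\Bigl(1 - \tfrac{j}{x+1}\Bigr) \geq 1 - \frac{1}{x+1}\sum_{j=1}^\ell j = 1 - \frac{\ell(\ell+1)}{2(x+1)},
\end{equation*}
and multiplying through by $(x+1)^\ell$ produces exactly the stated lower bound.

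For the second chain, the product equals $\prod_{j=1}^\ell (x+j)$. The lower bound $(x+1)^\ell$ is immediate from $x+j \geq x+1$ for $j \geq 1$. For the upper bound I would use the pointwise inequality $x+j \leq j(x+1)$, which is equivalent to $(j-1)x \geq 0$ and hence automatic for $j \geq 1$ and $x \geq \ell \geq 0$; multiplying over $j = 1,\ldots,\ell$ gives $\prod_{j=1}^\ell (x+j) \leq \ell!\,(x+1)^\ell$.

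I do not anticipate any substantive obstacle: the whole argument reduces to one short induction (weak Bernoulli) plus two elementary pointwise estimates. The only conceptual point worth flagging is that $x \geq \ell$ is used in exactly one essential place --- ensuring $j/(x+1) \in [0,1]$ in the Bernoulli step --- which is what fixes the constant $\ell(\ell+1)/2$ appearing in the first lower bound.
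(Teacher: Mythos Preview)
Your argument is correct. The Bernoulli-type induction $\prod(1-a_j)\ge 1-\sum a_j$ is valid as stated (the induction step $(1-S)(1-a_n)=1-S-a_n+Sa_n\ge 1-S-a_n$ only needs $S,a_n\ge 0$ and $1-a_n\ge 0$, all of which you have), and the pointwise bounds $x+j\ge x+1$ and $x+j\le j(x+1)\iff (j-1)x\ge 0$ handle the second chain cleanly.

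As for comparison: the paper does not actually supply a proof of this lemma --- it is quoted verbatim from \cite[Lem.~C.3]{Gondolf.2024} as a technical input, so there is no in-paper argument to compare against. Your proof is a perfectly standard and self-contained way to establish it.
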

    The final tool is a simple optimization over a polynomial with a negative leading term:
    \begin{lem}[\texorpdfstring{\cite[Lem.~11]{Moebus.2024Learning}}{???}]\label{lem:polymax}
        For $\alpha, \beta > 0$ and $a > b > 0$, the polynomial $p(X) = -\alpha X^a + \beta X^{b}$ satisfies
        \begin{align*}
            \max_{X \geq 0}\,p(X) \le \left(\frac{\beta b}{\alpha a}\right)^{\frac{b}{a-b}}\,\beta.
        \end{align*}
    \end{lem}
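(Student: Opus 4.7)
The plan is to exploit the fact that $p(X)=-\alpha X^{a}+\beta X^{b}$ is a smooth single-variable function with well-controlled boundary behaviour, so the maximum on $[0,\infty)$ can be located by a direct first-derivative computation and then bounded crudely from above. First I would observe that $p(0)=0$ (since $b>0$) and $p(X)\to-\infty$ as $X\to\infty$ (since $a>b>0$ makes the $-\alpha X^{a}$ term dominant). Hence the supremum is attained either at the origin or at an interior critical point of $p|_{(0,\infty)}$.

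Differentiating yields $p'(X)=-a\alpha X^{a-1}+b\beta X^{b-1}$, and dividing by the strictly positive factor $X^{b-1}$ gives the unique positive critical point
\begin{equation*}
X^{*}=\left(\frac{b\beta}{a\alpha}\right)^{\!\frac{1}{a-b}}.
\end{equation*}
Since $p(X^{*})>0=p(0)$ and $p\to-\infty$ at infinity, $X^{*}$ is the global maximizer on $[0,\infty)$. Substituting back and factoring $(X^{*})^{b}$ out of the bracket, and using $\alpha (X^{*})^{a-b}=\tfrac{b\beta}{a}$,
\begin{equation*}
p(X^{*})=(X^{*})^{b}\bigl[\beta-\alpha(X^{*})^{a-b}\bigr]=\left(\frac{b\beta}{a\alpha}\right)^{\!\frac{b}{a-b}}\beta\,\frac{a-b}{a}.
\end{equation*}

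Finally I would invoke the trivial estimate $\tfrac{a-b}{a}\in(0,1]$, valid because $a>b>0$, to drop this factor and obtain exactly the claimed upper bound. The argument is entirely elementary, so there is no real obstacle; the only mildly subtle point is ruling out that the supremum occurs at $X=0$, but this follows immediately from $p(X^{*})>0=p(0)$ together with the boundary behaviour noted above. The bound is slightly loose by the factor $\tfrac{a-b}{a}$, but this looseness keeps the final expression clean and is irrelevant for the applications to \Cref{thm:multimode_photon_diss}.
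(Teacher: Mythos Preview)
Your proof is correct. The paper does not actually prove this lemma --- it is quoted verbatim from \cite[Lem.~11]{Moebus.2024Learning} as a standard tool --- so there is no in-paper argument to compare against; your elementary calculus derivation (locate the unique interior critical point, evaluate, and drop the harmless factor $(a-b)/a\le 1$) is exactly the expected proof.
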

    With these three tools at hand, we now proceed to prove the moment stability bound of \Cref{thm:multimode_photon_diss}. For presentation, we repeat the result: 
    
    \begin{thm}\label{appx-thm:multimode_photon_diss}
        Let $\rho \in \mathcal{T}_f$, $\mathcal{L}_l$ be a Limbladian of the form \eqref{eq:multi-mode-ell-dissipation} associated to $l$-photon-dissipation and $H$ a 2-local Hamiltonian of the form (\ref{eq:2-body-Hamiltonian}). Assume that there exists a $k \geq 2$ such that $t\mapsto\tr[\cW_v^{k,1}(\rho(t))]$ is differentiable for every $t\geq 0$.
        \begin{enumerate}
            \item If we consider a coupling of the form $-i[H,\cdot]+\eta \cL_\ell$ with $d_H\leq 2(\ell-1)$, then there exist  constants $\eta_k, C_k,\mu_k >0$ such that for every $\eta\geq \eta_k$, 
            \begin{equation*}
                \tr[(-i[H,\cdot]+\eta \cL_\ell)\mathcal{W}_v^k(\rho(t))]\leq -\frac{C_k}{2}\tr[\rho(t)\mathcal{W}_v^{\ell+k-1}]+\mu_k\, .
            \end{equation*}
            \item If $d_H \leq 2(\ell-2)$, then there exists a constant $\mu_k$ such that 
            \begin{equation*}
                \tr[(-i[H,\cdot]+ \cL_\ell)\mathcal{W}_v^k(\rho(t))]\leq -\frac{\ell}{4}\tr[\rho(t)\mathcal{W}_v^{\ell+k-1}]+\mu_k\, .
            \end{equation*}
        \end{enumerate}
        Note that the constants $C_k$, $\mu_k$, and $\eta_k$ depend on the defining parameters of the generator and can be improved by model specific calculations.
    \end{thm}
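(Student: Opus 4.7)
The strategy is to follow the authors' sketch: decompose $H = H^{(1)}+H^{(2)}$ where $H^{(1)}=\sum_{j\in V}H_{jj}$ collects the on-site pieces and $H^{(2)}=\sum_{i\sim j,\,i\ne j}H_{ij}$ the genuine two-body couplings. For $H^{(1)}$, \Cref{coro:diag_hamiltonian} already handles the on-site Hamiltonian together with $\cL_\ell$; since the corollary is proved modewise and its estimate is additive in the dissipation strength, rescaling yields directly
\begin{equation*}
    \tr[\cW_v^k(-i[H^{(1)},\rho]+\eta\cL_\ell(\rho))]\le-\frac{\eta\ell}{2}\tr[\rho\cW_v^{k+\ell-1}]+\frac{\eta\ell}{2}\mu_k^{(\ell)}.
\end{equation*}
It therefore remains to control the off-diagonal contribution $\tr[\cW_v^k(-i[H^{(2)},\rho])]$ and absorb it into the negative dissipative term.

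For the off-diagonal piece I would proceed in three sub-steps. First, by cyclicity and the fact that $(N_n+\1)^{k/2}$ commutes with $H_{ij}$ whenever $n\notin\{i,j\}$, only the two weights $e^{-\kappa\dist(v,i)}/Z_v$ and $e^{-\kappa\dist(v,j)}/Z_v$ contribute to each $\tr[\cW_v^k(-i[H_{ij},\rho])]$, reducing it to two single-mode-weighted commutator expectations. Second, for each monomial $a_i^{u_1}a_j^{u_3}(a_i^\dagger)^{u_2}(a_j^\dagger)^{u_4}$ in $H_{ij}$, the CCR identities of \Cref{appx-lem:commutation-relation} together with the bounds of \Cref{appx-lem:upper-lower-bound-g,appx-lem:bounds-ccr-l-product} reduce $\tr[-i[H_{ij},\rho](N_n+\1)^k]$ to an expectation of a polynomial $P_{ij}(N_i,N_j)$ of total degree at most $k+d_H/2-1$. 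Third, Young's inequality $x^a y^b \le \tfrac{a}{a+b}x^{a+b}+\tfrac{b}{a+b}y^{a+b}$ turns each mixed monomial into a sum of single-mode ones, and the nearest-neighbour estimate $e^{-\kappa\dist(v,j)}\le e^{\kappa}e^{-\kappa\dist(v,i)}$ (valid because $i\sim j$) lets the result be re-packaged as
\begin{equation*}
    \bigl|\tr[\cW_v^k(-i[H^{(2)},\rho])]\bigr|\le \Lambda\,\tr[\rho\cW_v^{k+r}],
\end{equation*}
with $r=\ell-1$ in case~(i) and $r\le\ell-2$ in case~(ii), where $\Lambda$ depends only on the coupling constants, on $\kappa$, and on the lattice dimension $D$. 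System-size independence follows from the normalisation bounds of \Cref{appx-lem:normalization-bounds}.

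It then remains to combine the two bounds. In case~(i) both competing moments have the same order $k+\ell-1$, giving $-(\eta\ell/2-\Lambda)\tr[\rho\cW_v^{k+\ell-1}]+\tfrac{\eta\ell}{2}\mu_k^{(\ell)}$; any $\eta_k > 2\Lambda/\ell$ makes the coefficient positive and yields the claim with $C_k=\eta\ell-2\Lambda$ and $\mu_k=\tfrac{\eta\ell}{2}\mu_k^{(\ell)}$. In case~(ii) one has $\beta \coloneqq (k+r)/(k+\ell-1)<1$, so \Cref{appx-lem:jensens-inequality-moments-number-operator-multi} gives $\tr[\rho\cW_v^{k+r}]\le \tr[\rho\cW_v^{k+\ell-1}]^\beta$; splitting $-\tfrac{\ell}{2}=-\tfrac{\ell}{4}-\tfrac{\ell}{4}$ and applying \Cref{lem:polymax} to $Y\mapsto -\tfrac{\ell}{4}Y + \Lambda Y^\beta$ absorbs the interaction into an additive constant $\mu_k$, leaving the desired $-\tfrac{\ell}{4}\tr[\rho\cW_v^{k+\ell-1}]$. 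The main obstacle is the second sub-step above: turning the genuine two-mode commutators $[H_{ij},(N_n+\1)^k]$ into a bound by the local weights of $\cW_v^k$ without destroying the exponential-decay structure and without introducing $m$-dependence. Once the Young step is aligned with the nearest-neighbour weight estimate, the remaining combinatorics is routine bookkeeping built on the single-mode calculations established in \cite{Gondolf.2024, Moebus.2024Learning}.
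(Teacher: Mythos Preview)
Your plan is correct and essentially mirrors the paper's proof: decompose $H=H^{(1)}+H^{(2)}$, invoke \Cref{coro:diag_hamiltonian} for $H^{(1)}+\eta\cL_\ell$, bound the off-diagonal commutators $i[(N_n+\1)^k,H_{ij}]$ via the CCR identities together with \Cref{appx-lem:upper-lower-bound-g} and \Cref{appx-lem:bounds-ccr-l-product}, separate the resulting mixed moments by Young, and finally use the nearest-neighbour estimate $|\dist(v,i)-\dist(v,j)|\le1$ to fold everything back into the $\cW_v$-weights. The one substantive variation is your absorption step in case~(ii): you pass to the aggregate quantity $\tr[\rho\,\cW_v^{k+r}]$, bound it by $(\tr[\rho\,\cW_v^{k+\ell-1}])^\beta$ via the multi-mode Jensen lemma, and optimise once with \Cref{lem:polymax}; the paper instead applies \Cref{lem:polymax} modewise to $-\tfrac{\ell}{4}(N_i+\1)^{\ell+k-1}+C(N_i+\1)^{\ell+k-2}$ before summing against the weights. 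Both routes are valid; yours is slightly cleaner but leans on the extra Jensen lemma.

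One small inconsistency worth fixing: your own degree count (total degree $\le k+d_H/2-1$) combined with the Young form you quote, $x^ay^b\le\tfrac{a}{a+b}x^{a+b}+\tfrac{b}{a+b}y^{a+b}$, yields single-mode exponent $\le k+\ell-2$ in case~(i), not $r=\ell-1$. The paper obtains $r=\ell-1$ only because it deliberately chooses the looser Young exponents $p=(\ell+k-1)/(x+k-1)$ and $q=(\ell+k-1)/(\ell-x)$ to hit a uniform target degree across all monomials. This does not harm your argument---if anything the sharper bound would make case~(i) go through even without large $\eta$---but the stated $r$ values should be aligned with whichever Young exponents you actually use when you write out the details.
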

    \begin{proof}
        1. We split the Hamiltonian in two parts $H=H^{(1)}+H^{(2)}$, where  $H^{(1)}$ is the diagonal part in terms of the representation matrices $(\lambda^{(u)}_{ij})_{ij\in V}$ and  $H^{(2)}$ is the off-diagonal part. On the one hand,  \Cref{coro:diag_hamiltonian} shows the existence of a  constant $c$ such that 
        \begin{equation*}
            \tr[(\mathcal{H}(H^{(1)})+\eta \mathcal{L})\mathcal{W}_v^k(\rho(t))]\leq -\eta \frac{\ell}{2}\tr[\rho(t)(N_i+\1)^{\ell+k-1}]+c\frac{\ell}{2}\tr[\rho(t)(N_i+\1)^{\ell+k-2}]
        \end{equation*}

On the other hand, we need to consider the off-diagonal  Hamiltonian $H^{(2)}$, i.e. $H_{js}$ for $j\sim s$, $j\neq s$, and upper bound the quantity $\tr[\mathcal{H}(H^{(2)})(\rho(t))f((N_i+\1))]$ for $i \in V$. Since the commutator is zero for $i \neq j,s$, choose w.l.o.g.~$i=j$, define  $g_t(x)=f(x)-f(x-t)$ for $x \geq t-1$, and compute for $u_2\geq u_1$
\begin{equation*}
    \begin{split}
i[f(N_j),H_{js}]&= \sum_{\substack{u \in (\N_0 )^4\\ \Vert u \Vert_1 \leq 2(\ell-1)\\u_1\leq u_2}} \lambda_{js}^{(u)}[f(N_j),a_j^{u_1}a_s^{u_3}(a_{j}^{*})^{u_2}(a_s^{*})^{u_4}] +\overline{\lambda}_{js}^{(u)}[f(N_j),a_s^{u_4}a_j^{u_2}(a_s^{u_3})^*(a_j^{u_1})^*] \\ 
        &= \sum_{\substack{u \in (\N_0 )^4\\ \Vert u \Vert_1 \leq 2(\ell-1)\\u_1 \leq u_2}} \lambda_{js}^{(u)}a_s^{u_3}(a_s^{*})^{u_4}[f(N_j),a_j^{u_1}(a_{j}^{*})^{u_2}] +\overline{\lambda}_{js}^{(u)}a_s^{u_4}(a_s^{u_3})^*[f(N_j),a_j^{u_2}(a_j^{u_1})^*] \\ 
        &= \sum_{\substack{u \in (\N_0 )^4\\ \Vert u \Vert_1 \leq 2(\ell-1)\\ u_1\leq u_2}} \lambda_{js}^{(u)}a_s^{u_3}(a_s^{*})^{u_4} g_{u_2-u_1}(N_j)a_j^{u_1}(a_j^{u_2})^*-\overline{\lambda}_{js}^{(u)}a_s^{u_4}(a_s^{u_3})^*a_j^{u_2}(a_j^{u_1})^*g_{u_2-u_1}(N_j) 
    \end{split}
\end{equation*}
where we have used the relations provided in \Cref{appx-lem:commutation-relation}. An analogous computation can be done for $u_1>u_2$.

Let $
r_1=\min\{u_1,u_2\} \, ,    r_3=\min\{u_3,u_4\} \, ,  m_1=\vert u_1-u_2 \vert$ and $  m_3=\vert u_3-u_4\vert \, . $ If we denote $N_j[r:k]=(N_j+r\1)\hdots(N_j+k\1)$,  then
\begin{equation*}
    i[f(N_j),H_{js}]\leq  2 \sum_{\substack{u \in (\N_0 )^4\\ \Vert u \Vert_1 \leq 2(\ell-1)}} \vert \lambda_{js}^{(u)}\vert N_s[1:r_3]\sqrt{N_s[1:m_3]} g_{m_1}(N_j)\sqrt{N_j[1:m_1]}N_j[1:r_1]\, .
\end{equation*}
 Using the inequality $(x+1)\hdots(x+u)\leq u!(x+1)^u$ provided by \Cref{appx-lem:bounds-ccr-l-product}, \begin{equation*}
    \sqrt{N_j[1:m_1]}N_j[1:r_1]\leq \sqrt{m_1!}(r_1!)(N_j+\1)^{\frac{m_1}{2}} (N_j+\1)^{r_1}
    \leq ((2\ell+1)!)^{3/2}(N_j+\1)^{\frac{u_1+u_2}{2}} \, .
\end{equation*}
Similarly, we can bound the norm supported on the mode $s$. In addition,   using \Cref{appx-lem:upper-lower-bound-g} we can upper and lower bound $g_u$ and obtain
\begin{equation*}
    \begin{split}
i[f(N_i),H_{js}]&\leq  2k\sum_{\substack{u \in (\N_0 )^4\\ \Vert u \Vert_1 \leq 2(\ell-1)}} \vert \lambda_{js}^{(u)}\vert ((2\ell+1)!)^{3}\vert u_2-u_1\vert(N_j+\1)^{k-1}(N_j+ \1)^{\frac{u_1+u_2}{2}}(N_s+ \1)^{\frac{u_3+u_4}{2}}\\
& \leq 4k(\ell-1)((2\ell+1)!)^{3}\Vert \lambda\Vert_{\infty} \sum_{\substack{u \in (\N_0 )^4\\ \Vert u \Vert_1 \leq 2(\ell-1)}}  (N_j+ \1)^{\frac{u_1+u_2}{2}+k-1}(N_s+ \1)^{\frac{u_3+u_4}{2}}\, .  \\
    \end{split}
\end{equation*}
Let $x=\frac{u_1+u_2}{2}$, $y=\frac{u_3+u_4}{2}$, which satisfy $0\leq x+y \leq \ell -1$. For  $k\geq 2$,  let us consider $p,q>1$
\begin{equation}\label{eq:Young}
    p=\frac{\ell+k-1}{x+k-1}, \quad q=\frac{\ell+k-1}{\ell-x}, \quad \frac{1}{p}+\frac{1}{q}=1 \, .
\end{equation}
Applying Young's inequality and using that $\vert \{u\in (\mathbb{N}_0)^4: \Vert u \Vert_1 \leq 2(\ell-1)\}\vert =\binom{2\ell+2}{4}$,
\begin{equation}
   i[f(N_i),H_{js}]\leq  4k(\ell-1)((2\ell+1)!)^{3}\binom{2\ell+2}{4}\Vert \lambda\Vert_{\infty}\left[(N_j+\1)^{\ell+k-1}+(N_s+\1)^{\ell+k-1}\right] \, .
\end{equation}
Notice that the case $x=0$ is also contained in the latter bound. We can now  write for $i \in V$,
\begin{equation*}
\begin{split}
    \tr[\mathcal{H}(H^{(2)})(\rho)(N_i+\1)^{k/2}]&=i\sum_{\substack{j,s \in V\\j < s}}\tr[\rho(t)[f(N_i),H_{js}]]\\
    &\leq \Gamma_{\ell,k} \sum_{\substack{s \in V\\ i \neq s}}\tr[\rho(t)(N_i+\1)^{\ell+k-1}]+\tr[\rho(t)(N_s+\1)^{\ell+k-1}]\\
    &\leq 2D\Gamma_{\ell,k}\tr[\rho(t)(N_i+\1)^{\ell+k-1}]+ \Gamma_{\ell,k}\sum_{\substack{s \in V\\ i \sim s}} \tr[\rho(s)(N_s+\1)^{\ell+k-1}]\,
\end{split}
\end{equation*}
since there are only $2D$ elements interacting with  $i$.  Putting all together, we can upper bound
\begin{equation*}
    \begin{split}
    \tr[(\mathcal{H}(H)+\eta \mathcal{L}_\ell)\cW_v^{k}(\rho(t))]&=
         \sum_{i,j \in V}\frac{e^{-\kappa \dist(v,i)}}{Z_v}\tr[\rho(t)(\mathcal{H}(H+\eta \cL(L_j))^* (N_i+1)^{k}]  \\
         &\hspace{-4cm}\leq \sum_{i \in V}\frac{e^{-\kappa \dist(v,i)}}{Z_v}\Bigg(\left(2D\Gamma_{\ell,k}-\eta \frac{\ell}{2}\right)\tr[\rho(t)(N_i+\1)^{\ell+k-1}]+c\frac{\ell}{2} \tr[\rho(t)(N_i+\1)^{\ell+k-2}] \\
         &\hspace{-2cm}+ \Gamma_{\ell,k}\sum_{\substack{s \in V\\ i \sim s}} \tr[\rho(s)(N_s+\1)^{\ell+k-1}]\Bigg) \, ,
    \end{split}
\end{equation*}
and regrouping now the terms,         
\begin{equation*}
\begin{split}
         \tr[(\mathcal{H}(H)+\eta \mathcal{L}_\ell)\cW_v^{k}(\rho(t))]& \leq \\
         &\hspace{-3cm}\leq \sum_{i \in V}\Bigg(\frac{e^{-\kappa \dist(v,i)}}{Z_v}\left(2D\Gamma_{\ell,k}-\eta \frac{\ell}{2}\right)+\Gamma_{\ell,k}\sum_{\substack{s \in V\\ s \sim i}}\frac{e^{-\kappa \dist(v,s)}}{Z_v}\Bigg)\tr[\rho(t)(N_i+\1)^{\ell+k-1}]\\
         &\hspace{-3cm}+\sum_{i \in V}\frac{e^{-\kappa \dist(v,i)}}{Z_v}\frac{\ell}{2}c \tr[\rho(t)(N_i+\1)^{\ell+k-2}] \\
&\hspace{-3cm}= \sum_{i \in V}\frac{e^{-\kappa \dist(v,i)}}{Z_v}\Bigg(\left(2D\Gamma_{\ell,k}-\eta \frac{\ell}{2}\right)+\Gamma_{\ell,k}
\sum_{\substack{s \in V\\ s \sim i}} e^{-\kappa \dist(v,s)+\kappa\dist(v,i)}\Bigg)\tr[\rho(t)(N_i+\1)^{\ell+k-1}]\\
         &\hspace{-3cm}+ \sum_{i \in V}\frac{e^{-\kappa \dist(v,i)}}{Z_v}c\frac{\ell}{2}\tr[\rho(t)(N_i+\1)^{\ell+k-2}] 
    \end{split}
\end{equation*}
Finally, using that  $-\dist(v,s)+\dist(v,i)\leq 1$ for $i \sim s$, if we choose $\eta_k>0$ such that 
\begin{equation}
      -C_{\ell,k,\kappa,\eta}\coloneqq 2D\Gamma_{\ell,k}\left(1+2De^{\kappa}\right)-\eta_k \frac{\ell}{2}<0 \, ,
\end{equation}
making use of  \Cref{lem:polymax} with $X=(N_i+\1)$, $\alpha=-\frac{1}{2}C_{\ell,k,\kappa,\eta}$, $\beta=\frac{c\ell}{2}$, $a=\ell+k-1$ and $b=\ell+k-2$ we obtain that for every $\eta\geq \eta_k$
\begin{equation*} 
\begin{split}
     \tr[(\mathcal{H}(H)+\eta \mathcal{L}_\ell)\cW_v^{k}(\rho(t))]&\leq \\
     &\hspace{-2cm}\leq \sum_{i \in V}\frac{e^{-\kappa \dist(v,i)}}{Z_v}\left(-\frac{C_{\ell,k,\kappa,\eta}}{2}\tr[\rho(t)(N_i+\1)^{\ell+k-1}] +\frac{c\ell}{2}\left(\frac{c\ell}{C_{\ell,k,\kappa,\eta}} \right)^{\ell+k-2}\right)\\
     &\hspace{-2cm}\leq -\frac{C_{\ell,k,\kappa,\eta}}{2}\tr[\rho(t)\cW_{v}^{k+\ell-1}]+\frac{c\ell}{2}\left(\frac{c\ell}{C_{\ell,k,\kappa,\eta}} \right)^{\ell+k-2}\, ,
\end{split}
\end{equation*}
and the result follows.

2. For the second part, notice that in the previous expression the leading term is not negative anymore, but we have to force $\eta$ to make this term negative. We show that if in \eqref{eq:Young} we take $k \geq 2$ and  the polynomial in the creation and annihilation operators of the Hamiltonian has degree at most $2(\ell-2)$, we can obtain again the negativity of the leading term. Start with
\begin{equation}
    i[f(N_i),H_{js}]
\leq 4k(\ell-1)((2\ell+1)!)^{3}\Vert \lambda\Vert_{\infty} \sum_{\substack{u \in (\N_0 )^4\\ \Vert u \Vert_1 \leq 2(\ell-2)}}  (N_j+ \1)^{\frac{u_1+u_2}{2}+k-1}(N_s+ \1)^{\frac{u_3+u_4}{2}}\, .  
    \end{equation}   
Let $x=\frac{u_1+u_2}{2}$, $y=\frac{u_3+u_4}{2}$ satisfying $0\leq x+y \leq \ell-2$. For  $k\geq 2$, 
we can set
\begin{equation}
    p=\frac{\ell+k-2}{x+k-1}, \quad q=\frac{\ell+k-2}{\ell-x-1}, \quad \frac{1}{p}+\frac{1}{q}=1 \, .
\end{equation}
and we could then bound
\begin{equation}
     \tr[\mathcal{H}(H^{(2)})(\rho)(N_i+\1)^{k/2}]\leq 2D\Gamma_{\ell,k}\tr[\rho(t)(N_i+\1)^{\ell+k-2}]+ \Gamma_{\ell,k}\sum_{\substack{s \sim V\\ i \neq s}} \tr[\rho(t)(N_s+\1)^{\ell+k-2}]
\end{equation}
and therefore  we can upper bound
\begin{equation*}
    \begin{split}
         \tr[(\mathcal{H}(H)+\mathcal{L}_\ell)\cW_v^{k}(\rho(t))]&=\sum_{i,j \in V}\frac{e^{-\kappa \dist(v,i)}}{Z_v}\tr[\rho(t)(\mathcal{H}(H+\cL(L_j))^* (f(N_i))] \\
         &\hspace{-5cm}\leq \sum_{i \in V}\frac{e^{-\kappa \dist(v,i)}}{Z_v}\Bigg(-\frac{\ell}{2}\tr[\rho(t)(N_i+\1)^{\ell+k-1}]+\left(2D\Gamma_{\ell,k}+\frac{\ell}{2}c\right) \tr[\rho(t)(N_i+\1)^{\ell+k-2}] \\
         &\hspace{-5cm}+ \Gamma_{\ell,k}\sum_{\substack{s \in V\\ s \sim i}} \tr[(N_s+\1)^{\ell+k-2}]\Bigg)\, .\\
\end{split}
\end{equation*}
Regrouping terms,
\begin{multline*}
         \tr[(\mathcal{H}(H)+\mathcal{L}_\ell)\cW_v^{k}(\rho(t))] \leq \sum_{i \in V}-\frac{e^{-\kappa \dist(v,i)}}{Z_v}\frac{\ell}{2}\tr[\rho(t)(N_i+\1)^{\ell+k-1}]\\ +\sum_{i \in V}\frac{e^{-\kappa \dist(v,i)}}{Z_v} \left(  2D\Gamma_{\ell,k}+\frac{\ell}{2}c+\Gamma
_{\ell,k}\sum_{\substack{s \in V\\ s \sim i}}e^{-\kappa \dist(v,s)+\kappa \dist(v,i)}\right)\tr[\rho(t)(N_i+\1)^{\ell+k-2}] \, .
\end{multline*}
Finally, using again that $-\dist(v,s)+\dist(v,i)\leq 1$ and letting
\begin{equation}
    C_{\ell,k,\kappa,\alpha}=\left(2D\Gamma_{\ell,k}(1+2De^{\kappa})+\frac{\ell}{2}c\right)\, ,
\end{equation}
if we make use of  \Cref{lem:polymax}  for $X=(N_i+\1)$, $\alpha=\frac{\ell}{4}$, $\beta=C_{\ell,k,\kappa,\alpha}$, $a=\ell+k-1$ and $b=\ell+k-2$,
\begin{equation}
     \tr[(\mathcal{H}(H)+\mathcal{L}_\ell)\cW_v^{k}(\rho(t))] \leq-\frac{\ell}{4}\tr[\rho(t)\cW_v^{\ell+k-1}]+C_{\ell,k,\kappa,\alpha}\left(  \frac{4C_{\ell,k,\kappa,\alpha}}{\ell}\right)^{\ell+k-2} \, .
\end{equation}
and the result follows.
\end{proof}

\end{document}